\documentclass[12pt,a4paper,twoside]{article}
 
\pdfoutput=1
\usepackage[utf8]{inputenc}
\usepackage[T1]{fontenc} % ou \usepackage[OT1]{fontenc}
\usepackage{RR}

\usepackage{graphicx} 
\usepackage{mdwlist}
\usepackage{mathtools}
\usepackage{amsmath}
\usepackage{amssymb,amsmath,stmaryrd,ifsym,amsthm}
\usepackage{verbatim}
\usepackage[defblank]{paralist}
\usepackage{xspace} 
 
\usepackage[numbers,square,sort&compress]{natbib}
\usepackage{algorithmicx}
\usepackage[noend]{algpseudocode}
\usepackage{algorithm}
\usepackage[small,it]{caption}
 
\usepackage{lscape}
\usepackage{color} 
 
\usepackage{tikz}
\usepgflibrary{shapes.symbols} % LATEX and plain TEX and pure pgf 
\usetikzlibrary{shapes.symbols} % LATEX and plain TEX when using Tik Z 
\usepackage{pstricks}
\usepackage{pst-node}   % for nodes
\usepackage{pst-grad}   % for fillstyle=gradient
\usepackage{pst-plot}   % for fileplot

\usepackage{macros}
\usepackage{macros_database}

% \DeclareGraphicsExtensions{.pdf,.png,.jpg}

\newcommand{\NMSILongName}{Non-Monotonic Snapshot Isolation}

% Jessy 

% correct bad hyphenation here
\hyphenation{op-tical net-works semi-conduc-tor}

\newenvironment{IEEEproof}{\begin{proof}}{\end{proof}}

\begin{document}
% 
% paper title
% can use linebreaks \\ within to get better formatting as desired
\RRtitle{Non-Monotonic \\ Snapshot Isolation}
\RRetitle{Non-Monotonic \\ Snapshot Isolation%
  \thanks{%
    The work presented in this paper has been partially funded
    by ANR projects Prose (ANR-09-VERS-007-02) and Concordant (ANR-10-BLAN 0208).}}
\titlehead{\NMSILongName} 

% author names and affiliations
% use a multiple column layout for up to two different
% affiliations

% \author{, {\footnotesize UPMC-LIP6}
%   \\ 
%   , {\footnotesize INRIA \& UPMC-LIP6}
%   \\
%   Marc Shapiro, {\footnotesize INRIA \& UPMC-LIP6}
%   \\
%   Nuno Pregui{\c{c}}a, {\footnotesize CITI, Universidade Nova de Lisboa}}
  
\RRauthor{    
      Masoud {Saeida Ardekani} {\footnotesize{UPMC-LIP6}}\\
      Pierre Sutra  {\footnotesize{University of Neuch\^atel}} \\
      Nuno Pregui\c{c}a {\footnotesize{Universidade Nova de Lisboa}} \\ 
      Marc Shapiro {\footnotesize{INRIA \& UPMC-LIP6}} \\}

\authorhead{Saeida Ardekani, Sutra, Pregui\c{c}a, Shapiro}

  \RRdate{November 2011}
%  \RRdate{November 2011} 

\RRversion{5}
%  \RRdater{October 2012}

\RRNo{7805}
\RRabstract{  
\footnotesize     
  Many distributed applications require transactions.
  However, transactional protocols that require strong synchronization
  are costly in large scale environments.
  Two properties help with scalability of a transactional system:
  genuine partial replication (\GPR), which leverages
  the intrinsic parallelism of a workload, 
  and snapshot isolation (\SI), which decreases the need for synchronization.
  We show that, under standard assumptions (data store accesses are not
  known in advance, and transactions may access arbitrary objects in the
  data store), it is impossible to have both SI and GPR. 
  To circumvent this impossibility, we propose a weaker consistency
  criterion, called \NMSILongName~(\NMSI).
  \NMSI retains the most important properties of \SI, i.e., read-only
  transactions always commit, and two write-conflicting updates do not both
  commit.
  We present a \GPR protocol that ensures \NMSI, and has lower message cost
  (i.e., it contacts fewer replicas and/or commits faster) 
  than previous approaches.
}

\RRkeyword{distributed systems; transcational systems; replication; concurrency control; transactions; database}

\RRresume{ Cet article étudie deux propriétés favorisant le passage
 à l'échelle des systèmes répartis transactionnels: 
 la réplication partielle authentique (GPR), et le critère de cohérence Snapshot Isolation (SI).
 GPR spécifie que pour valider une transaction T, seules
 les répliques des données accédées par T effectuent des pas de calcul.
 SI définit que toute transaction doit lire une vue cohérente du système,
 et que deux transactions concurrentes ne peuvent écrire la même donnée.
 Nous montrons que SI et GPR sont deux propriétés incompatibles.
 Afin de contourner cette limitation, 
 nous proposons un nouveau critère de cohérence: Non-Monotonic Snapshot Isolation (NMSI).
 NMSI est proche de SI et néanmoins compatible avec GPR.
 Afin de justifier ce dernier point, nous présentons un protocole authentique implémentant de manière efficace NMSI.
 Au regard des travaux précédents sur le contrôle de concurrence dans les systèmes répartis transactionnels, 
 notre protocole est le plus performant en latence et/ou en nombre de messages échangés.
}

\RRmotcle{systèmes répartis, systèmes transactionnels, contrôle de concurrence, transaction, base de données} 

\RRprojets{Regal}
\RCParis

\makeRR
\makeRT
 
\section{Introduction}
\labsection{introduction}

Large scale transactional systems have conflicting requirements.
On the one hand, strong transactional guarantees are fundamental to
many applications.
On the other, remote communication and synchronization is costly and
should be avoided.%
\footnote{
  We address general-purpose transactions, i.e., we assume that a
  transaction may access any object in the system, and that its read- and
  write-sets are not known in advance.
}

To maintain strong consistency guarantees while alleviating the high cost of synchronization,
 Snapshot Isolation (\SI)
 is a popular approach in both distributed database replications 
\cite{Serrano2007,Armendariz-Inigo2008,Daudjee2006},
 and software transactional memories \cite{Bieniusa2010,Riegel2006}.
Under \SI, a transaction accesses its own \emph{consistent snapshot} of the
data, which is unaffected by concurrent updates.%
% \footnote{
%   Originally, \SI \cite{Berenson1995} required a transaction's snapshot
%   to be the most recent possible.
%   This was later generalised to accept consistent but
%   less-recent snapshots \cite{Elnikety2005}.
%   We use the more general definition.
% }
~A read-only transaction always commits unilaterally and without
synchronization.
An update transaction synchronizes on commit to ensure that no
concurrent conflicting transaction has committed before it.

Our first contribution is to prove that \SI is equivalent to the
conjunction of the following properties:
\begin{inparaenum}[\em (i)]
\item
  no cascading aborts, 
\item
  strictly consistent snapshots, i.e., a transaction observes a
  snapshot that coincides with some point in (linear) time,
\item
  two concurrent write-conflicting update transactions never both commit, and
\item
  snapshots observed by transactions are monotonically ordered.
\end{inparaenum}
Previous definitions \cite{Adya99,Elnikety2005} of \SI
extend histories with abstract snapshot points.
Our decomposition shows that \SI can be expressed on plain histories 
like serializability \cite{Berenson1995}.

Modern data stores replicate data for both performance and availability.
Full replication does not scale, as every process must perform all updates.
\emph{Partial replication} (PR) aims to address this problem, by replicating
only a subset of the data at each process.
Thus, if transactions would communicate only over the minimal number of
replicas, synchronisation and computation overhead would be reduced.
However, in the general case, the overlap of transactions cannot be
predicted; therefore, many PR protocols perform system-wide 
global consensus \cite{Serrano2007,Armendariz-Inigo2008} or
communication \cite{Sovran2011}.
This negates the potential advantages of PR; hence, we require
\emph{genuine} partial replication \cite{Schiper2010} (\GPR), in which a
transaction communicates only with those processes that replicate some
object accessed in the transaction.
With \GPR, independent transactions do not interfere with each other,
and the intrinsic parallelism of a workload can be exploited.
Our second contribution is to show that \SI and \GPR are 
incompatible. More precisely, we prove that an asynchronous message-passing system supporting \GPR
cannot compute monotonically ordered snapshots, nor strictly consistent ones, even if it is failure-free.

The good news is our third contribution: a consistency criterion,
called \emph{\NMSILongName} (\NMSI) that overcomes this impossibility.
\NMSI is very similar to \SI, as
\begin{inparablank}
\item
  every transaction observes a consistent snapshot, 
  and
\item
  two concurrent
  write-conflicting updates never both commit.
\end{inparablank}
However, under \NMSI, snapshots are neither strictly consistent nor monotonically ordered.

Our final contribution is a \GPR protocol ensuring \NMSI, called \jessy.
\jessy uses a novel variant of version vectors, called \emph{dependence
  vectors}, to compute consistent partial snapshots asynchronously.
% , that ensure both causal and transactional consistency, to
% \jessy is designed to minimise both
% \begin{inparablank}
% \item
%   the number of successive \emph{message delays}, and
% \item
%   the number of concurrent messages (\emph{message count}).
% \end{inparablank}
\noindent
To commit an update transaction, \jessy uses a single atomic multicast.
Compared to previous protocols, \jessy commits transactions faster
and/or contacts fewer replicas.

This paper proceeds as follows.
We introduce our system model in \refsection{model}. 
\refsection{reconstruct} presents our decomposition of \SI.
\refsection{imp} shows that \GPR and \SI are mutually incompatible. 
We introduce \NMSI in \refsection{wsi}. 
\refsection{protocol} describes \jessy, our \NMSI protocol.
We compare with related work in \refsection{relatedwork}, and conclude
in \refsection{conclusion}.

\newpage
\section{Model}
\labsection{model}

This section defines the elements in our model and formalizes \SI and \GPR\@.

\subsection{Objects \& transactions}
\labsection{model:base}

Let \objectSet be a set of objects, and \transSet be a set of transaction identifiers.
Given an object $x$ and an identifier $i$, $x_i$ denotes \emph{version} $i$ of $x$. 
A \emph{transaction} $T_{i \in \transSet}$ is a finite permutation of
read and write operations followed by a \emph{terminating} operation,
commit ($c_i$) or abort ($a_i$).
We use $w_i(x_i)$ to denote transaction $T_i$ writing version $i$ of object $x$, 
and $r_i(x_j)$ to mean that $T_i$ reads version $j$ of object $x$.
In a transaction,
every write is preceded by a read on the same object,
and every object is read or written at most once.% 
\footnote{
  These restrictions ease the exposition of our results but do not change their validity.
}
We note \writeSetOf{T_i} the write set of
$T_i$, i.e., the set of objects written by transaction $T_i$.
Similarly, \readSetOf{T_i} denotes the read set of transaction $T_i$. 
The \emph{snapshot} of $T_i$ is the set of versions read by $T_i$.
Two transactions \emph{conflict} when they access the same object and one of them modifies it;
they \emph{write-conflict} when they both write to the same object.

\subsection{Histories}
\labsection{model:hist}

A \emph{complete history} $h$ is a partially ordered set of operations such that
(1) for every operation $o_i$ appearing in $h$, transaction $T_i$ terminates in $h$,
(2) for every two operations $o_i$ and $o_i'$ appearing in $h$,
    if $o_i$ precedes $o_i'$ in $T_i$, then $o_i <_h o_i'$,
(3) for every read $r_i(x_j)$ in $h$, 
    there exists a write operation $w_j(x_j)$ such that $w_j(x_j) <_h r_i(x_j)$,
and 
(4) any two write operations over the same objects are ordered by $<_h$.
A \emph{history} is a prefix of a complete history.
For some history $h$, 
order $<_{h}$ is the \emph{real-time order} induced by $h$.
Transaction $T_i$ is \emph{pending} in history $h$ if $T_i$ does not commit, nor abort in $h$.
We note $\versionOrder_h$ the version order induced by $h$ between different versions
of an object, i.e., for every object $x$, and every pair of transactions $(T_i,T_j)$,
$x_i \versionOrder_h x_j \iff w_i(x_i) <_h w_j(x_j)$.
Following Bernstein et al. \cite{Bernstein1987}, we depict a history as a graph.
We illustrate this with history $h_1$ below
in which transaction $T_a$ reads the initial versions of objects $x$ and $y$,
while transaction $T_1$ (respectively $T_2$) updates $x$ (resp. $y$).%
\footnote{
  Throughout the paper, read-only transactions are specified with
  an alphabet subscript, and update transactions are shown with numeric
  subscript.
}
\begin{figure}[!ht]
  \centering
  \begin{tikzpicture}
    \draw (0,2) node[left]{\small $h_1=$};
    %x
    \draw (1,2) node[left]{\small $r_a(x_0)$};
    \draw [->] (1,2) -- (1.5,2);
    \draw (1.5,2) node[right]{\small $r_1(x_0).w_1(x_1).c_1$};
    %y
    \draw (3.5,1.3) node[left]{\small $r_a(y_0).c_a$};
    \draw [->] (3.5,1.3) -- (4,1.3);
    \draw (4,1.3) node[right]{\small $r_2(y_0).w_2(y_2).c_2$};

    \draw [->] (0.6,1.8) -- (2,1.3);

%     \draw (3,1) node[right]{\small $r_a(x_1)$};
%     \draw [->] (4.5,1) -- (5,1);
%     \draw (5,1) node[right]{\small $r_w_a(x_1).r_a(y_0).c_a$};
  \end{tikzpicture}
  % $h=r_a(x_0).r_1(x_0).w_1(x_1).c_1.r_b(x_1).r_b(y_0).r_2(y_0).w_2(y_2).c_2.r_a(y_0).c_a.c_b$
\end{figure}

\noindent%
When order $<_h$ is total, 
we shall write a history as a permutation of operations, e.g., $h_2=r_1(x_0).r_2(y_0).w_2(y_2).c_1.c_2$.

\subsection{Snapshot Isolation} 
\labsection{model:si}

Snapshot isolation (\SI) was introduced by Berenson et al. \cite{Berenson1995}, then
later generalized under the name GSI by Elnikety et al. \cite{Elnikety2005}.
In this paper, we make no distinction between \SI and \GSI.

Let us consider a function $\mathcal{S}$ which takes as
input a history $h$, and returns an extended history $h_s$ by adding a
\emph{snapshot point} to $h$ for each transaction in $h$.
Given a transaction $T_i$, the snapshot point of $T_i$ in $h_s$, denoted
$s_i$, precedes every operation of transaction $T_i$ in $h_{s}$.
A history $h$ is in \SI if, and only if, there exists a function
$\mathcal{S}$ such that $h_s=\mathcal{S}(h)$ and $h_s$ satisfies the following rules:

\vspace{1em}
\begin{tabular}{@{}l@{~~~~~~~~~~~~}r@{}}
  \begin{minipage}{6cm}
    \textbf{D1 (Read Rule)}\\
    $
    \forall r_{i}(x_{j \neq i}),  w_{k \neq j}(x_{k}), c_{k}  \in h_{s}: \\
    ~~~~~~   c_{j} \in h_{s} \hspace*{\fill} (D1.1) \\
    ~~\land~ c_{j} <_{h_s} s_{i} \hspace*{\fill} (D1.2) \\
    ~~\land~ (c_{k} <_{h_s} c_{j} \lor  s_{i} <_{h_s} c_{k}) \hspace*{\fill} (D1.3)
    $
  \end{minipage}%%
  &
  \begin{minipage}{5cm}
  \vspace{-1.25em}%
   \textbf{D2 (Write Rule)}\\
   $
   \forall c_{i}, c_{j} \in h_{s}: \\
   ~~~ \writeSetOf{T_i} \inter \writeSetOf{T_j} \neq \emptySet \\
   ~~~ \implies \left(  c_i <_{h_s} s_j \lor  c_j <_{h_s} s_i \right)
   $
  \end{minipage}
\end{tabular}

\subsection{System}
\labsection{model:system}

We consider a message-passing system of $n$ processes $\procSet = \{p_1, \ldots , p_n\}$.
Links are quasi-reliable.
We shall define our synchrony assumptions later.
Following Fischer et al. \cite{Fischer1985}, 
an execution is a sequence of steps made by one or more processes.
During an execution, processes may fail by crashing.
A process that does not crash is said \emph{correct}; otherwise it is \emph{faulty}.
We note $\mathfrak{F}$ the refinement mapping
\cite{AbadiLamport-Theexistenceofrefin} from executions to histories, i.e.,
if \run is an execution of the system, then $\refMapOf{\run}$ is the history produced by $\run$.
A history $h$ is \emph{acceptable} if there exists an execution $\run$
such that $h=\mathfrak{F}(\run)$.
We consider that given two sequences of steps $U$ and $V$,
if $U$ precedes $V$ in some execution $\run$,
then the operations implemented by $U$ precedes (in the sense of $<_h$) 
the operations implemented by $V$ in the history $\mathfrak{F}(\run)$.%
\footnote{
  Notice that since steps to implement operations may interleave,
  $<_h$ is not necessarily a total order.
}

\subsection{Partial Replication}
\labsection{model:par}

A data store $\mathcal{D}$ is a finite set of tuples $(x, v, i)$ where
$x$ is an object (data item), $v$ a value, and $i \in \transSet$ a version.
Each process in $\procSet$ holds a data store such that initially every object $x$ has version $x_0$.
For an object $x$, $\replicaSetOf{x}$ denotes the set of processes, or \emph{replicas}, that hold a copy of $x$.
By extension for some set of objects $X$, $\replicaSetOf{X}$ denotes the replicas of $X$;
given a transaction $T_i$, $\replicaSetOf{T_i}$ equals $\replicaSetOf{\readSetOf{T_i} \union \writeSetOf{T_i}}$.

We make no assumption about how objects are replicated.
The coordinator of $T_i$, denoted $\dbCoordinatorOf{T_i}$, 
is in charge of executing $T_i$ on behalf of some client (not modeled). 
The coordinator does not know in advance the read set or the write set of $T_i$.
To model this, we consider that every prefix of a transaction (followed by a terminating operation)
is a transaction with the same id.
 
Genuine Partial Replication (\GPR) aims to ensure that, when the workload is parallel, 
throughput scales linearly with the number of nodes \cite{Schiper2010}:
\begin{itemize}
\item \textbf{\GPR.}
  For any transaction $T_i$, only processes that replicate objects accessed
  by $T_i$ make steps to execute $T_i$.
\end{itemize}
%% Notice that according to this definition, in a \GPR system,
%% the coordinator belongs to the replica group of the first object accessed by the transaction.
%% In fact, we should require that
%% the coordinator does not replicate any object.
%% In practice, this is not a restriction as it requires to isolate 
%% the logic of the coordinator from the logic of a replica.

\subsection{Progress}
\labsection{model:progress}

The read rule of \SI does not define what is the snapshot to be read.
According to Adya \cite{Adya99}, 
``transaction $T_i$'s snapshot point needs not be chosen after the most recent commit when $T_i$ started, 
but can be selected to be some (convenient) earlier point.''
As a consequence, \SI does not preclude a transaction to always observe outdated data.
This implies that an update transaction may always abort even if it runs alone.
To ensure that a transactional system remains practical, 
\citet{Herlihy:2003}, as well as \citet{Guerraoui:2009}, consider that an update transaction should abort only if a conflict occurs.
In the case of \SI, we require that this property holds for write-conflict, i.e., 
a query never forces an update to abort.

\begin{itemize}
\item \textbf{Obstruction-free Updates (\OFU).} 
  % Pierre: this is not exactly what is called obstruction-free, this is a stronger property.
  For every update transaction $T_i$,  
  if \coordOf{T_i} is correct then $T_i$ eventually terminates.
  Moreover, if $T_i$ does not \emph{write-conflict} with some concurrent transaction then $T_i$ eventually commits.
\end{itemize}

Most workloads exhibit a high proportion of read-only transactions, or \emph{queries}.
The wait-free queries property (see below) ensures that such accesses are fast.
\SI was designed at core to offer this property.

\begin{itemize}
\item \textbf{Wait-free Queries (\WFQ).}
  A read-only transaction $T_i$ never waits for another transaction and eventually commits.
\end{itemize}

\newpage
\section{Decomposing \SI}
\labsection{reconstruct}
 
This section defines four properties, whose conjunction is
necessary and sufficient to attain \SI.
We later use these properties in \refsection{imp} to derive our impossibility result.

\subsection{Cascading Aborts}
\labsection{reconstruct:aca}

Intuitively, a read-only transaction must abort 
if it observes the effects of an uncommitted transaction that later aborts.
By guaranteeing that every version read by a transaction is committed,
rules D1.1 and D1.2 of \SI prevent such a situation to occur.
In other words, these rules \emph{avoid cascading aborts}.
We formalize this property below:
\begin{definition}[Avoiding Cascading aborts]
  History $h$ avoids cascading aborts, if for every read $r_i(x_j)$ in $h$,
  $c_j$ precedes $r_i(x_j)$ in $h$.
  $\ACA$ denotes the set of histories that avoid cascading aborts.
\end{definition}

\subsection{Consistent and Strictly Consistent Snapshots}
\labsection{reconstruct:scons}

Consistent and strictly consistent snapshots are defined by refining causality into a dependency relation
as follows:

\begin{definition}[Dependency]
  Consider a history $h$ and two transactions $T_i$ and $T_j$.
  We note $T_i \MustHave T_j$ when $r_i(x_j)$ is in $h$.
  Transaction $T_i$ depends on transaction $T_j$ when $T_i \depend T_j$ holds.%
  \footnote{
    We note $\mathcal{R}^{*}$ the transitive closure of some binary relation $\mathcal{R}$.
  }
  Transaction $T_i$ and $T_j$ are independent if neither $T_i \depend T_j$, nor $T_j \depend T_i$ hold.
\end{definition}

This means that a transaction $T_i$ depends on a transaction $T_j$ 
if $T_i$ reads an object modified by $T_j$, or such a relation holds by transitive closure.
To illustrate this definition, consider history $h_3=r_1(x_0).w_1(x_1).c_1.r_a(x_1).c_a.r_b(y_0).c_b$.
In $h_3$, transaction $T_a$ depends on $T_1$.Ho
Notice that, even if $T_1$ causally precedes $T_b$, $T_b$ does not depend on $T_1$ in $h_3$.

We now define consistent snapshots with the above dependency relation.
A transaction sees a consistent snapshot iff it observes the effects of all transactions
it depends on \cite{Chan1985}.
For example, consider the history $h_4=r_1(x_0).w_1(x_1).c_1.r_2(x_1).r_2(y_0).w_2(y_2).c_2.r_a(y_2).r_a(x_0).c_a$
In this history, transaction $T_a$ does not see a consistent snapshot: 
$T_a$  depends on $T_2$, and $T_2$ also depends on $T_1$, 
but $T_a$ does not observe the effect of $T_1$ (i.e., $x_1$). 
Formally, consistent snapshots are defined as follows:
\begin{definition}[Consistent snapshot]
  A transaction $T_i$ in a history $h$ observes a consistent snapshot
  iff, for every object $x$, 
  if 
  (i) $T_i$ reads version $x_j$,
  (ii) $T_k$ writes version $x_k$,
  and (iii) $T_i$ depends on $T_{k}$, 
  then version  $x_k$ is followed by version $x_j$ in the version order induced by $h$ ($x_k \ll_h x_j$).
  We write $h \in \CONS$ when all transactions in $h$ observe a
  consistent snapshot.
\end{definition}

\SI requires that a transaction observes the
committed state of the data at some \emph{point} in the past.
This requirement is stronger than consistent snapshot.
For some transaction $T_i$, it implies that 
\emph{(i)} there exists a snapshot point for $T_i$ (\SCONSa), 
and \emph{(ii)} if transaction $T_i$ observes the effects of transaction $T_j$, 
it must also observe the effects of all transactions that precede $T_j$ in time (\SCONSb).
A history is called strictly consistent if both \SCONSa and \SCONSb hold.
For instance, consider the following history: 
$h_5=r_1(x_0).w_1(x_1).c_1.r_a(x_1).r_2(y_0).w_2(y_2).c_2.r_a(y_2).c_a$.
Because $r_a(x_1)$ precedes $c_2$ in $h_5$, $y_2$ cannot be observed when $T_a$ takes its snapshot.
As a consequence, the snapshot of transaction $T_a$ is not strictly consistent.
This issue is disallowed by \SCONSa. 
Now, consider history $h_6=r_1(x_0).w_1(x_1).c_1.r_2(y_0).w_2(y_2).c_2.r_a(x_0).r_a(y_2).c_a$.
Since $c_1$ precedes $c_2$ in $h_6$ and transaction $T_a$ observes the effect of $T_2$ (i.e., $y_2$),
it should also observe the effect of $T_1$ (i.e., $x_1$).
\SCONSb prevents history $h_6$ to occur.

\begin{definition}[Strictly consistent snapshot]
  Snapshots in history $h$ are strictly consistent, when 
  for any committed transactions $T_i$, $T_j$, $T_{k \neq j}$ and $T_{l}$,
  the following two properties hold:
  \begin{itemize}
    \item[-] $\forall r_i(x_j), r_i(y_l) \in h : r_i(x_j) \not <_h  c_l$ \hspace*{\fill} $(\SCONSa)$
    \item[-] $\forall r_i(x_j), r_i(y_l), w_k(x_k) \in h : $ \\
             $ ~~~~~~~~~~~~~~~ c_k <_{h} c_l \implies c_k <_h c_j$ \hspace*{\fill} $(\SCONSb)$
  \end{itemize}
  We note \SCONS the set of strictly consistent histories.
\end{definition}

%% Note that, as expected, strictly consistent snapshots are stronger than consistent snapshots ($\SCONS \subseteq \CONS$).

\subsection{Snapshot Monotonicity}
\labsection{reconstruct:mon}

In addition, \SI requires what we call monotonic snapshots.
For instance, although history $h_7$ below satisfies \SCONS, this history does not belong to \SI:
since $T_a$ reads $\{x_0, y_2\}$, and $T_b$ reads $\{x_1, y_0\}$,
there is no extended history that would guarantee the read rule of \SI.

% \vspace{-1em} 
\begin{figure}[!h]
  \vspace{-1em}
  \centering
  \begin{tikzpicture}
    \draw (0,2) node[left]{\small $h_7=$}; 
    %x
    \draw (1,2) node[left]{\small $r_a(x_0)$};
    \draw [->] (1,2) -- (1.5,2);
    \draw (1.5,2) node[right]{\small $r_1(x_0).w_1(x_1).c_1$};
    \draw [->] (4.4,2) -- (5,2);
    \draw (5,2) node[right]{\small $r_b(x_1).c_b$};
    %y
    \draw (1,1.3) node[left]{\small $r_b(y_0)$};
    \draw [->] (1,1.3) -- (1.5,1.3);
    \draw (1.5,1.3) node[right]{\small $r_2(y_0).w_2(y_2).c_2$};
    \draw [->] (4.4,1.3) -- (5,1.3);
    \draw (5,1.3) node[right]{\small $r_a(y_2).c_a$};

    \draw [->] (1,1.9) -- (5,1.4);
    \draw [->] (1,1.4) -- (5,1.9);

  \end{tikzpicture}
  % $h_7=r_a(x_0).r_1(x_0).w_1(x_1).c_1.r_b(x_1).r_b(y_0).r_2(y_0).w_2(y_2).c_2.r_a(y_0).c_a.c_b$
  \vspace{-1em}
\end{figure}

% \vspace{-1em}

\SI requires monotonic snapshots.
However, the underlying reason is intricate enough that some previous
works \cite[for instance]{Bieniusa2010} do not ensure this property, while claiming to be \SI.
Below, we introduce an ordering relation between snapshots to formalize snapshot monotonicity.

\begin{definition}[Snapshot precedence]
  Consider a history $h$ and two distinct transactions $T_i$ and $T_j$.
  The snapshot read by $T_i$ precedes the snapshot read by $T_j$ in history $h$, written $T_i \rightarrow T_j$,
  when
  $r_i(x_k)$ and $r_j(y_l)$ belong to $h$
  and
  either (i) $r_i(x_k) <_h c_l$ holds,
  or (ii) transaction $T_l$ writes $x$ and $c_k <_h c_l$ holds.
\end{definition}

\noindent
For more illustration, consider
$h_8=r_1(x_0).w_1(x_1).c_1.r_2(y_0).w_2(y_2).r_a(x_1).c_2\\.r_b(y_2).c_a.c_b$
and $h_9=r_1(x_0).w_1(x_1).c_1.r_a(x_1).c_a.r_2(x_1).r_2(y_0).w_2(x_2).w_2(y_2).c_2\\.r_b(y_2).c_b$.
In history $h_8$, $T_a \rightarrow T_b$ holds because $r_a(x_1)$ precedes $c_2$ and $T_b$ reads $y_2$.
In $h_9$, $c_1$ precedes $c_2$ and both $T_1$ and $T_2$ modify object $x$.
Thus, $T_a \rightarrow T_b$ also holds.
We define snapshot monotonicity using snapshot precedence as follows:

\begin{definition}[Snapshot monotonicity]
  Given some history $h$, 
  if the relation \precedes induced by $h$ is a partial order,
  the snapshots in $h$ are \emph{monotonic}.
  We note \MON the set of histories that satisfy this property.
\end{definition}

According to this definition, since both $T_a \rightarrow T_b$ and $T_b \rightarrow T_a$ hold in history $h_7$,
this history does not belong to \MON.

Non-monotonic snapshots are observed under update serializability \cite{GM80},
that is when queries observe consistent state, but only updates are serializable.

\subsection{Write-Conflict Freedom}
\labsection{reconstruct:wcf}
% in final version changes the name WCF.
% it is not appropriate.

Rule D2 of \SI forbids two concurrent write-conflicting transactions from both committing.
Since in our model we assume that every write is preceeded by a corresponding read on the same object, 
every update transaction depends on a previous update transaction (or on the initial transaction $T_0$).
Therefore, under \SI, concurrent conflicting transactions must be independent:
 
\begin{definition}[Write-Conflict Freedom]
  A history $h$ is write-conflict free if two independent transactions never write to the same object.
  We denote by $\WCF$ the histories that satisfy this property.
\end{definition}

\subsection{The decomposition}
\labsection{reconstruct:si}

\reftheo{reconstruct:main} below establishes that a history $h$ is in \SI iff
(1) every transaction in $h$ sees a committed state,
(2) every transaction in $h$ observes a strictly consistent snapshot,
(3) snapshots are monotonic,
and (4) $h$ is write-conflict free.
% A detailed proof appears in our companion technical report \cite{psutra-tr11b}.

\begin{lemma}
  \lablem{reconstruct:1}
  Consider a history $h \in \SI$ and two versions $x_i$ and $x_j$ of some object $x$.
  If $x_i \versionOrder_{h} x_j$ holds then $T_j \depend T_i$ is true.
\end{lemma}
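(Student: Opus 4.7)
The plan is to construct an explicit dependency chain of writers of $x$, starting at $T_j$ and forced to pass through $T_i$. First I would invoke the model assumption that every write is preceded by a read on the same object: the write $w_j(x_j)$ must be preceded by a unique read $r_j(x_{k_1})$ with $k_1 \neq j$, giving $T_j \MustHave T_{k_1}$. Iterating the same observation on $T_{k_1}, T_{k_2}, \ldots$ produces a chain $T_j \MustHave T_{k_1} \MustHave T_{k_2} \MustHave \cdots$ of writers of $x$. The chain is uniquely determined (each transaction reads $x$ at most once), strictly descending in $<_{h_s}$-commit order by rule D1.2, hence finite, and terminates only at the unique writer of $x$ with no prior read of $x$, namely the initial transaction $T_0$.

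The heart of the argument is an invariant I would establish by induction on the position $m$ in the chain: each $T_{k_m}$ either equals $T_i$, or satisfies $c_i <_{h_s} c_{k_m}$, where $h_s$ is an \SI-witness extension of $h$. For the base case, rule D2 applied to the write-conflicting pair $(T_i, T_j)$ gives $c_i <_{h_s} s_j \lor c_j <_{h_s} s_i$; the second disjunct, combined with $s_i <_{h_s} w_i$ and $w_i <_h w_j <_{h_s} c_j$ (the first from the hypothesis $x_i \versionOrder_h x_j$, the second from the intra-transaction order of $T_j$), collapses to $c_j <_{h_s} c_j$, so $c_i <_{h_s} s_j$ must hold. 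Then rule D1.3 applied to the read $r_j(x_{k_1})$ against the other writer $T_i$, in the case $k_1 \neq i$, excludes the disjunct $s_j <_{h_s} c_i$ and yields $c_i <_{h_s} c_{k_1}$. The inductive step reruns the same two-step argument on the pair $(T_i, T_{k_m})$ and the read $r_{k_m}(x_{k_{m+1}})$.

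To conclude, if $T_i$ never appeared on the chain, the invariant together with termination at $T_0$ would force $c_i <_{h_s} c_0$, contradicting the fact that $c_0$ precedes every other commit in $h_s$. Hence $T_i$ lies on the chain, and transitivity of $\MustHave$ delivers $T_j \depend T_i$.

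The hard part will be bookkeeping the side conditions of D1.3 and D2 --- in particular the $k \neq j$ hypothesis of D1.3, which forces the case $k_{m+1} = i$ to be absorbed as the successful termination of the induction rather than treated as a separate case --- and resisting the temptation to slide back and forth between $<_h$ and $<_{h_s}$. Carrying the entire chain-descent argument uniformly inside the extension $<_{h_s}$ avoids the latter pitfall and keeps the case analysis compact.
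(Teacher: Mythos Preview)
Your proposal is correct and follows essentially the same approach as the paper: both arguments use rule D2 to force $c_i <_{h_s} s_j$, then rule D1.3 to constrain which version $T_j$ reads, and iterate down the chain of $x$-writers. The paper phrases the induction as reducing to the adjacent-version case and then ``apply[ing] inductively the previous reasoning,'' whereas you make the descent explicit via the invariant $c_i <_{h_s} c_{k_m}$; these are two packagings of the same D2/D1.3 step.
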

 
 \begin{proof}
  Assume some history $h \in \SI$ such that $x_i \versionOrder_h x_j$ holds.
  Let $h_s$ be an extended history for $h$ that satisfies rules D1 and D2.
  According to the model, transaction $T_j$ first reads some version $x_k$, then writes version $x_j$.

  First, assume that there is no write to $x$ between $w_i(x_i)$ and $w_j(x_j)$.
  Since $x$ belongs to $\writeSetOf{T_i} \inter \writeSetOf{T_j}$,
  rule D2 tells us that either $c_i <_{h_s} s_j$, or  $c_j <_{h_s} s_i$ holds.
  We observe that because $x_i \versionOrder_h x_j$ holds, it must be true that $c_i <_{h_s} s_j$.
  Since there is no write to $x$ between $w_i(x_i)$ and $w_j(x_j)$, $x_k \versionOrder x_i$ holds, or $k=i$.
  Observe that in the former case rule D1.3 is violated.
  Thus, transaction $T_j$ reads version $x_i$.
  To obtain the general case, we apply inductively the previous reasoning.
\end{proof}

\begin{lemma}
  \lablem{reconstruct:2}
  Let $h \in \SI$ be a history, 
  and $\mathcal{S}$ be a function such that $h_s=\mathcal{S}(h)$ satisfies D1 and D2.
  Consider $T_i, T_j \in h$.
  If $T_i \rightarrow T_j$ holds then $s_i <_{h_s} s_j$.
\end{lemma}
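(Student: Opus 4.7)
The plan is to do a case analysis on the two clauses in the definition of $T_i \rightarrow T_j$. In both cases the snapshot point $s_j$ gets a lower bound through rule D1.2 applied to $T_j$'s read of $y_l$, namely $c_l <_{h_s} s_j$. The work, then, is to show that $s_i$ sits below this $c_l$ (either directly or via $r_i(x_k)$), after which transitivity of $<_{h_s}$ finishes the argument.

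For case (i), the assumption gives $r_i(x_k) <_h c_l$, which transfers to $<_{h_s}$ since the extension only adds snapshot points. Combined with the fact that $s_i$ precedes every operation of $T_i$ in $h_s$ by construction, I get $s_i <_{h_s} r_i(x_k) <_{h_s} c_l <_{h_s} s_j$. This case is essentially free once D1.2 is invoked.

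For case (ii), the hypothesis is that $T_l$ also writes $x$ and $c_k <_h c_l$, while $T_i$ reads $x_k$. The key move is to apply D1.3 to the triple $r_i(x_k), w_l(x_l), c_l$: this yields the disjunction $c_l <_{h_s} c_k \lor s_i <_{h_s} c_l$. The left disjunct contradicts the hypothesis $c_k <_h c_l$ (again using that $<_h$ is preserved in $h_s$ and is irreflexive), so we are left with $s_i <_{h_s} c_l$. Chaining with D1.2 gives $s_i <_{h_s} c_l <_{h_s} s_j$.

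The only bookkeeping to worry about is checking the side conditions of D1.3, in particular $l \neq k$, which follows immediately from $c_k <_h c_l$, and the fact that $c_l$ indeed belongs to $h_s$, which is guaranteed by the existence of $r_j(y_l)$ together with D1.1. I do not expect a serious obstacle: the lemma is essentially a diagrammatic chase through D1.2 and D1.3, with snapshot precedence having been designed to mirror exactly the constraints those rules impose between $s_i$ and $c_l$.
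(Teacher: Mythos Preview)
Your proposal is correct and matches the paper's proof essentially line for line: the paper performs the same two-case split, uses D1.2 on $r_j(y_l)$ to get $c_l <_{h_s} s_j$ in both cases, obtains $s_i <_{h_s} r_i(x_k) <_{h_s} c_l$ directly in case~(i), and in case~(ii) applies D1.3 to $r_i(x_k), w_l(x_l), c_l$ and rules out the disjunct $c_l <_{h_s} c_k$ via $c_k <_h c_l$. Your added bookkeeping (verifying $l\neq k$ and $c_l\in h_s$ via D1.1) only makes the argument more explicit than the paper's version.
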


\begin{proof}
  Consider two transactions $T_i$ and $T_j$ such that the snapshot of $T_i$ precedes the snapshot of  $T_j$.
  By definition of the snapshot precedence relation, 
  there exist $T_k, T_l \in h$ such that $r_i(x_k), r_j(y_l) \in h$ and
  either \emph{(i)} $r_i(x_k) <_h c_l$\ , 
  or \emph{(ii)} $w_l(x_l) \in h$ and $c_k <_h c_l$.
  Let us distinguish each case:
  \begin{compactitem}
  \item[(Case $r_i(x_k) <_h c_l$)]
    By definition of function $\mathcal{S}$, $s_i$ precedes $r_i(x_k)$ in $h_s$.
    From $r_j(y_l) \in h$ and rule D1.2, $c_l <_{h_s} s_j$ holds.
    Hence, $s_i <_{h_s} s_j$ holds.
  \item[(Case $c_k <_h c_l$)]
    From \emph{(i)} $r_i(x_k), w_l(x_l) \in h$, \emph{(ii)} $c_k <_h c_l$ and \emph{(iii)} rule D1.3, 
    we obtain $s_i <_{h_s} c_l$.
    From $r_j(y_l) \in h$ and rule D1.2, $c_l <_{h_s} s_j$ holds.
    It follows that $s_i <_{h_s} s_j$ holds.
  \end{compactitem}
\end{proof}

\begin{lemma}
  \lablem{reconstruct:3}
  Consider a history $h \in \ACA \inter \CONS \inter \WCF$, and two versions $x_i$ and $x_j$ of some object $x$. 
  If $x_i \versionOrder_h x_j$ holds then $c_i <_h c_j$.
\end{lemma}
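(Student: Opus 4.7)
The plan is to fix a pair with $x_i \versionOrder_h x_j$ and let $x_k$ denote the version of $x$ that $T_j$ reads before writing $x_j$, which exists because in the model every write is preceded by a read on the same object. Since $h \in \ACA$, one has $c_k <_h r_j(x_k) <_h c_j$, so it will suffice to show either $k = i$ or $c_i <_h c_k$. I would proceed by strong induction on the number of versions of $x$ lying strictly between $x_i$ and $x_j$ in the (total) order $\versionOrder_h$.

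Before invoking $\CONS$ I would first reduce the possible dependency situations for the pair $(T_i, T_j)$. Since both write $x$, $\WCF$ eliminates the case where they are independent. The case $T_i \depend T_j$ is also impossible: letting $x_l$ be the version of $x$ read by $T_i$, applying $\CONS$ to $r_i(x_l)$, the write $w_j(x_j)$, and the dependency $T_i \depend T_j$ forces either $l = j$ --- in which case $\ACA$ together with the intra-transaction order $r_i(x_j) <_h w_i(x_i)$ yields $c_j <_h w_i(x_i)$, while $x_i \versionOrder_h x_j$ combined with $w_j(x_j) <_h c_j$ yields $w_i(x_i) <_h c_j$, a contradiction --- or $x_j \versionOrder_h x_l$, which combined with $x_l \versionOrder_h x_i$ (itself a consequence of $\ACA$ applied to the internal order $r_i(x_l) <_h w_i(x_i)$) and the hypothesis $x_i \versionOrder_h x_j$ produces a cycle in the strict order $\versionOrder_h$.

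The only remaining case is $T_j \depend T_i$. Applying $\CONS$ to $r_j(x_k)$, the write $w_i(x_i)$, and this dependency gives either $k = i$, whence $\ACA$ directly delivers $c_i <_h r_j(x_i) <_h c_j$, or $x_i \versionOrder_h x_k$. In the latter situation $x_k \versionOrder_h x_j$ follows from $\ACA$ and the internal order of $T_j$, so the interval between $x_i$ and $x_k$ is strictly shorter than that between $x_i$ and $x_j$; the inductive hypothesis yields $c_i <_h c_k$, and concatenation with $c_k <_h r_j(x_k) <_h c_j$ closes the argument. The main obstacle I anticipate is the elimination of $T_i \depend T_j$: it is not blocked by $\WCF$ alone, and it relies on combining $\CONS$ with the fact that $T_i$'s own read of $x$ must precede its own write of $x$ within the transaction.
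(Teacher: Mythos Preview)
Your proposal is correct. The case split via $\WCF$ and the elimination of $T_i \depend T_j$ using $\CONS$ on $T_i$'s own read of $x$ match the paper's argument exactly; you are simply more explicit about the sub-case $l=j$, which the paper leaves implicit.

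The genuine difference is in the case $T_j \depend T_i$. You invoke $\CONS$ on $T_j$'s read $r_j(x_k)$ against $w_i(x_i)$ and induct on the number of versions of $x$ strictly between $x_i$ and $x_j$. The paper takes a shorter route that does not use $\CONS$ at all in this case: once $T_j \depend T_i$ is known, it unfolds the dependency chain $T_j \MustHave T_{l_1} \MustHave \cdots \MustHave T_i$, and at each link $T_{l_s} \MustHave T_{l_{s+1}}$ it applies $\ACA$ to the witnessing read to get $c_{l_{s+1}} <_h c_{l_s}$, concatenating to $c_i <_h c_j$. This avoids both the appeal to $\CONS$ and the well-ordering on version intervals. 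Your approach, on the other hand, is more uniform (both non-independent cases are handled through $\CONS$ applied to the relevant transaction's read of $x$) and makes the role of the version order more visible; the cost is a slightly heavier induction setup.
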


\begin{proof}
  Since both $T_i$ and $T_j$ write to $x$ and $h$ belongs to $\WCF$
  either $T_j \depend T_i$ or $T_i \depend T_j$ holds.
  We distinguish the two cases below:
  \begin{compactitem}
  \item[(Case $T_j \depend T_i$)]
    First, assume that $T_j \MustHave T_i$ holds.
    Note $y$ an object such that $r_j(y_i)$ is in $h$.
    Since $h$ belongs to $\ACA$, $c_i <_h r_j(y_i)$ holds.
    Because $h$ is an history, $r_j(y_i) <_h c_j$ must hold.
    Hence we obtain $c_i <_h c_j$.
    By a short induction, we obtain the general case. 
  \item[(Case $T_i \depend T_j$)]
    Let us note $x_k$ the version of $x$ read by transaction $T_i$.
    From the definition of an history and since $h$ belongs to to \ACA, we know that
    $w_k(x_k) <_h c_k <_h r_i(x_k) <_h w_i(x_i)$ holds.
    As a consequence, $x_k \versionOrder_h x_i$ is true.
    Since
    \emph{(i)} $h$ belongs to \CONS, 
    \emph{(ii)} $T_i \depend T_j$,
    and \emph{(iii)} $T_j$ writes to $x$,
    it must be the case that $x_j \versionOrder_h x_k$.
    We deduce that $x_j \versionOrder_h x_i$ holds; a contradiction.
  \end{compactitem}
\end{proof}

\noindent
Using these lemmata, we successively prove each inclusion.

\begin{proposition}
  \labprop{reconstruct:1}
  $\SI \subseteq \ACA \inter \SCONS \inter \WCF \inter \MON$
\end{proposition}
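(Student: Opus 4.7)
The plan is to fix a history $h \in \SI$, invoke the definition to obtain an extension $h_s = \mathcal{S}(h)$ satisfying rules D1 and D2, and then establish each of the four membership claims separately, leveraging the fact that $h_s$ extends $h$ only by inserting snapshot points, so $<_h$ is contained in $<_{h_s}$ restricted to the original operations. Throughout, the key technical handle will be that, for every transaction $T_i$, the snapshot point $s_i$ precedes every operation of $T_i$ in $h_s$, so bounds expressed in terms of $s_i$ translate into bounds on reads and writes of $T_i$.

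For \ACA, given any $r_i(x_j)$ in $h$, rules D1.1 and D1.2 yield $c_j \in h_s$ with $c_j <_{h_s} s_i <_{h_s} r_i(x_j)$, and since $c_j$ and $r_i(x_j)$ are both original operations this orderings survives in $h$. For \WCF, I would argue contrapositively: if $T_i$ and $T_j$ both commit and both write $x$, rule D2 gives, without loss of generality, $c_i <_{h_s} s_j$, which combined with $s_j <_{h_s} w_j(x_j)$ forces $x_i \versionOrder_h x_j$; Lemma~\reflem{reconstruct:1} then yields $T_j \depend T_i$, so $T_i$ and $T_j$ are not independent.

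The \SCONS part splits into the two clauses. For \SCONSa, suppose $r_i(x_j), r_i(y_l) \in h$; D1.2 applied to the second read gives $c_l <_{h_s} s_i$, and since $s_i <_{h_s} r_i(x_j)$ by definition of the snapshot point, we obtain $c_l <_{h_s} r_i(x_j)$, which rules out $r_i(x_j) <_h c_l$ by acyclicity of $<_{h_s}$. For \SCONSb, given $r_i(x_j), r_i(y_l), w_k(x_k) \in h$ with $k \ne j$ and $c_k <_h c_l$, rule D1.3 leaves us with the dichotomy $c_k <_{h_s} c_j$ or $s_i <_{h_s} c_k$; the second alternative, together with D1.2 giving $c_l <_{h_s} s_i$, produces $c_l <_{h_s} c_k$ and contradicts $c_k <_h c_l$, so $c_k <_{h_s} c_j$ holds, whence $c_k <_h c_j$.

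Finally, for \MON, I would simply invoke Lemma~\reflem{reconstruct:2}, which embeds the snapshot precedence relation $\rightarrow$ into the strict order $<_{h_s}$ restricted to snapshot points; since $<_{h_s}$ is a strict partial order, any cycle $T_{i_1} \rightarrow \cdots \rightarrow T_{i_n} \rightarrow T_{i_1}$ would yield $s_{i_1} <_{h_s} s_{i_1}$, so $\rightarrow$ is acyclic and hence a partial order. None of the individual steps seems to pose a real obstacle; the main point of care is keeping the distinction between $<_h$ and $<_{h_s}$ straight and remembering to feed the right witnesses into D1.3 and into Lemma~\reflem{reconstruct:1}, both of which do the essential lifting.
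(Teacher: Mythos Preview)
Your proposal is correct and follows essentially the same approach as the paper. The four cases are handled with the same key ingredients: D1.1/D1.2 for \ACA; the version order together with \reflem{reconstruct:1} for \WCF; D1.2 and the position of $s_i$ for \SCONSa; D1.2/D1.3 for \SCONSb; and \reflem{reconstruct:2} for \MON. The only notable variations are cosmetic: for \WCF you invoke D2 to orient the version order whereas the paper appeals directly to clause~(4) of the definition of a history; for \SCONSb you apply D1.3 first and eliminate the disjunct $s_i <_{h_s} c_k$, whereas the paper first uses D2 to obtain $c_j <_{h_s} c_k$ and then contradicts D1.3. Both routes are equally short.
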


\begin{proof}
  Choose $h$ in \SI.
  Note $\mathcal{S}$ a function such that history $h_s=\mathcal{S}(h)$ satisfies rules D1 and D2.

  \begin{compactitem}

  \item[($h \in \ACA$)]
    It is immediate from rules D1.1 and D1.2. 

  \item[($h \in \WCF$)]
    Consider two independent transactions $T_i$ and $T_j$ modifying the same object $x$.
    By the definition of a history, $x_i \versionOrder_h x_j$ \ , or $x_j \versionOrder_h x_i$ holds.
    Applying \reflem{reconstruct:1}, we conclude that in the former case $T_j$ depends on $T_i$,
    and that the converse holds in the later.

  \item[($h \in \SCONSa$)]
    By contradiction.
    Assume three transactions $T_i$, $T_j$ and $T_l$ such that
    $r_i(x_j), r_i(y_l) \in h$ and $r_i(x_j) <_h c_l$ are true.
    In $h_s$, the snapshot point $s_i$ of transaction $T_i$ is placed prior to every operation of $T_i$ in $h_s$.
    Hence, $s_i$ precedes $r_i(x_j)$ in $h_s$.
    This implies that  $s_i <_{h_s} c_l \land r_i(y_l) \in h_s$ holds.
    A contradiction to rule D1.2.
    
  \item[($h \in \SCONSb$)]
    Assume for the sake of contradiction four transactions $T_i$, $T_j$, $T_{k \neq j}$ and $T_l$
    such that: $r_i(x_j), r_i(y_l)$,  $w_k(x_k) \in h$, $c_k <_h c_l$ and $c_k \not<_{h} c_j$ are all true.
    Since transaction $T_j$ and $T_k$ both write $x$, by rule D2, we know that $c_j <_{h_s} c_k$ holds.
    Thus, $c_j <_{h_s} c_k <_{h_s} c_l$ holds.
    According to rule D1.2, since $r_i(y_l)$ is in $h$, $c_l <_{h_s} s_i$ is true.
    We consequently obtain that $c_j <_{h_s} c_k < s_i$ holds.
    A contradiction to rule D1.3.

  \item[($h \in MON$)]
    If \precedes is not a  partial order, there exist transactions $T_1, \ldots, T_{n \geq 1}$ 
    such that: $T_1 \rightarrow \ldots \rightarrow T_n \rightarrow T_1$.
    Applying \reflem{reconstruct:2}, we obtain that the relation $s_1 <_{h_s} s_1$ is true.
    A contradiction.

  \end{compactitem}
\end{proof}

\begin{proposition}
  \labprop{reconstruct:2}
  $\ACA \inter \SCONS \inter \WCF \inter \MON \subseteq \SI$
\end{proposition}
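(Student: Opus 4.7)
My plan is to exhibit an extended history $h_s = \mathcal{S}(h)$ satisfying rules D1 and D2. For each transaction $T_i$ in $h$ I introduce a fresh snapshot symbol $s_i$ and define a relation $\prec$ on the enlarged ground set as the union of (a) $<_h$; (b) $s_i \prec o$ for every operation $o$ of $T_i$; (c) $c_l \prec s_i$ whenever $T_i \MustHave T_l$; and (d) $s_i \prec c_k$ whenever there exist $x,j$ with $r_i(x_j), w_k(x_k) \in h$, $k \neq j$, and $c_k \not<_h c_j$. Once $\prec$ is shown to be acyclic, any of its linear extensions $<_{h_s}$ defines $h_s$, and the inserted points $s_i$ precede every operation of $T_i$ by clause~(b).

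Granting acyclicity, the four rules follow easily. D1.1 is immediate from \ACA{} and D1.2 from clause~(c). For D1.3, given $r_i(x_j), w_k(x_k), c_k$ with $k \neq j$: if $c_k <_h c_j$ we are done; otherwise \reflem{reconstruct:3} applied to the two writers of $x$ forces $c_j <_h c_k$, so clause~(d) triggers and $s_i <_{h_s} c_k$. For D2, if $T_i, T_j$ write a common object then \WCF{} supplies (WLOG) a direct-read chain $T_j \MustHave T_{m_1} \MustHave \cdots \MustHave T_i$; iterated \ACA{} combined with clause~(c) then yields $c_i \prec c_{m_n} \prec \cdots \prec c_{m_1} \prec s_j$.

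The core of the proof is acyclicity of $\prec$, which I expect to be the main obstacle. Arguing by contradiction, I take a minimal cycle. Since $<_h$ alone is acyclic, the cycle must visit at least one snapshot point; list them cyclically as $s_{i_1}, \ldots, s_{i_n}$. By minimality, the path from $s_{i_k}$ to $s_{i_{k+1}}$ consists of a single (b)- or (d)-edge out of $s_{i_k}$, then a pure $<_h$ segment, then a single (c)-edge into $s_{i_{k+1}}$. In each case I will show that this pattern forces $T_{i_k} \rightarrow T_{i_{k+1}}$ in the snapshot-precedence relation: for a (b)-edge I push the operation back to the first read of $T_{i_k}$ (which precedes it in the program order) and invoke clause~(i) of $\rightarrow$; for a (d)-edge terminating at a late-writer commit $c_K$ of some object $x$, I combine \reflem{reconstruct:3} (which yields $c_J <_h c_K$) with the fact that $T_K$ writes $x$ to invoke clause~(ii) of $\rightarrow$. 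Chaining these around the cycle produces a cyclic chain $T_{i_1} \rightarrow T_{i_2} \rightarrow \cdots \rightarrow T_{i_n} \rightarrow T_{i_1}$, contradicting \MON.

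The subtlest subcase is the (d)+(c) segment when its middle $<_h$ portion is strictly nontrivial, since the transaction $T_{L_{k+1}}$ reached at its far end need not itself write $x$. There I plan to apply \SCONSb{} contrapositively, using the fact that $c_K \not<_h c_J$ to block $c_K$ from being $<_h$-below any commit of a transaction $T_{i_k}$ directly reads from, and to reroute (if needed) through the writer of $x$ nearest to $T_{L_{k+1}}$ in the $<_h$-chain, so that clause~(ii) of $\rightarrow$ still applies with the appropriate substitute.
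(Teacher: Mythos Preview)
Your construction and acyclicity strategy coincide with the paper's: clauses (a)--(d) are exactly its $<_h$, S1, S2a, S2b, and both proofs attempt to turn a cycle in $\prec$ into a cycle in the snapshot-precedence relation $\rightarrow$, contradicting \MON. The verifications of D1 and D2 also match.

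The gap is exactly where you locate it, but your repair does not close it. In a segment $s_i \prec c_K <_h c_L \prec s_{i+1}$ with $K\neq L$, deriving $T_i\rightarrow T_{i+1}$ via clause~(ii) of $\rightarrow$ would need $T_L$ to write the object $x$ that $T_i$ read in the (d)-edge; there is no reason it should. Your ``reroute through the writer of $x$ nearest to $T_L$'' has nowhere to go, since by minimality the middle is a single $<_h$ edge between two commits neither of which need write $x$, and the contrapositive of \SCONSb{} yields only the non-relation $c_K\not<_h c_M$ for each $M$ that $T_i$ reads from, not the positive ordering clause~(ii) demands. The paper's three-case analysis shares this omission: it handles a direct S2b edge and an S1-edge-then-$<_h$ path, but never an S2b edge followed by a strict $<_h$ step.

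The following history shows the obstruction is real. Take objects $x,y,z,w$; two $<_h$-incomparable chains of updates $c_1<_h c_2<_h c_3$ (writing $x,x,y$ respectively, with $T_2$ reading $x_1$) and $c_5<_h c_6<_h c_4$ (writing $w,w,z$, with $T_6$ reading $w_5$); and two queries $T_a$ reading $\{x_1,z_4\}$ and $T_b$ reading $\{y_3,w_5\}$, both placed $<_h$-after all six commits. This $h$ is in $\ACA\cap\WCF$; it is in \SCONS{} because $c_2,c_4$ (resp.\ $c_6,c_3$) are incomparable, making \SCONSb{} vacuous for $T_a$ (resp.\ $T_b$); and it is in \MON{} because $T_a\not\rightarrow T_b$ and $T_b\not\rightarrow T_a$ (the writers $T_3,T_5$ of $T_b$'s versions touch neither $x$ nor $z$, and symmetrically, while every read of $T_a,T_b$ lies $<_h$-after every commit). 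Yet $\prec$ contains the cycle $s_a\to c_2\to c_3\to s_b\to c_6\to c_4\to s_a$, with both halves of precisely your hard type, and indeed no extended history $h_s$ can satisfy D1 for $T_a$ and $T_b$ simultaneously (one is forced into $c_4<_{h_s}s_a<_{h_s}c_2<_h c_3<_{h_s}s_b<_{h_s}c_6<_h c_4$). So for genuinely partially ordered $<_h$ the decomposition as stated appears to require a sharper form of \MON{} or \SCONS{}; neither your argument nor the paper's can succeed on this example as written.
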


\begin{proof}
  Consider some history $h$ in $\ACA \inter \SCONS \inter \WCF \inter \MON$.
  If history $h$ belongs to $\SI$ 
  then there must exist a function $\mathcal{S}$ such that  $h'=\mathcal{S}(h)$ satisfies rules D1 and D2. 
  In what follows, we build such an extended history $h'$, then we prove its correctness.

  \vspace{0.5em}
  \noindent
  [Construction]
  Initially $h'$ equals $h$.
  For every transaction $T_i$ in $h'$ we add a snapshot point $s_i$ in $h'$,
  and for every operation $o_i$ in $h'$, we execute the following steps:
  \begin{compactitem}
  \item[\textbf{S1.}] We add the order $(s_i,o_i)$ to $h'$.
  \item[\textbf{S2.}] If $o_i$ equals $r_i(x_j)$ for some object $x$ then
    \begin{compactitem}
    \item[\textbf{S2a.}] we add the order $(c_j,s_i)$ to $h'$,
    \item[\textbf{S2b.}] and, for every committed transaction $T_k$ such that $w_k(x_k)$ is in $h$,
          if $c_k <_{h} c_j$ does not hold then we add the order $(s_i,c_k)$ to $h'$.
    \end{compactitem}
  \end{compactitem}
  
  \vspace{0.5em}
  \noindent
  [Correctness]
  We now prove that $h'$ is an extended history that satisfies rules D1 and D2.
  \begin{itemize}

  \item $h'$ is an extended history.

    Observe that for every transaction $T_i$ in $h'$, there exists a snapshot point $s_i$,
    and that according to step S1, $s_i$ is before all operations of transaction $T_i$.
    It remains to show that order $<_{h'}$ is acyclic.
    We proceed by contradiction.

    Since $h$ is a history, it follows that any cycle formed by relation $<_{h'}$  contains a snapshot point $s_i$.
    Furthermore, according to steps S1 and S2 above, we know that for some operation $c_{j \neq i}$,
    relation $c_j <_{h'} s_i <_{h'}^* c_j$ holds.

    By developing relation $s_i <_{h'}^* c_j$, we obtain the following three relations.
    The first two relations are terminal, while the last is recursive.
    \begin{compactitem}
    \item Relation $s_i <_{h'} c_j$ holds.
      This relation has to be produced by step S2b.
      Hence, there exist operations $r_i(x_k), w_j(x_j)$ in $h'$ such that $c_j <_h c_k$ does not hold.
      Observe that since $h$ belongs to $\ACA \inter \CONS \inter \WCF$, 
      by \reflem{reconstruct:3}, it must be the case that  $c_k <_h c_j$ holds.
    \item Relation $s_i <_{h'} o_i <_h^* c_j$ holds for some read operation $o_i$ in $T_i$.
      (If $o_i <_h^* c_j$ with $o_i$ a write or a terminating operation, 
      we may consider a preceding read that satisfies the same relation.)
    \item Relation $s_i <_{h'} o_i <_{h'}^* c_j$ holds for some read operation $o_i$ in $T_i$,
      and $o_i <_{h'}^* c_j$ does not imply  $o_i <_h^* c_j$.
      (Again if $o_i$ is a write or a terminating operation, we may consider a preceding read that satisfies this relation.)
      Relation $o_i <_{h'}^* c_j$ cannot be produced by steps S1 and S2.
      Hence, there must exist a commit operation $c_k$ and a snapshot point $s_l$ such that
      $s_i <_{h'} o_i <_h c_k <_{h'} s_l <_{h'}^* c_j$ holds.
    \end{compactitem}
    From the result above, we deduce that there exist
    snapshot points $s_1, \ldots, s_{n \geq 1}$ and commit points $c_{k_1} \ldots c_{k_n}$ such that:
    \begin{equation}
      \labequation{reconstruct:1}
      s_1 \hb c_{k_1} <_{h'} s_2 \hb c_{k_2} \ldots s_n \hb c_{k_n} <_{h'} s_1
    \end{equation}
    where $s_i \hb c_{k_i}$ is a shorthand for 
    either \emph{(i)} $s_i <_{h'} c_{k_i}$ with $r_i(x_j), w_{k_i}(x_{k_i}) \in h$ and $c_j <_h c_{k_i}$,
    or \emph{(ii)} $s_i <_{h'} o_i <_h c_{k_i}$ with $o_i$ is some read operation.

    We now prove that for every $i$,  $T_i \rightarrow T_{i+1}$ holds.
    Consider some $i$.
    First of all, observe that a relation $c_{k_{i-1}} < s_i$ is always produced by step S2a.
    Then, since relation $s_i \hb c_{k_i} <_{h'} s_{i+1}$ holds we may consider the two following cases:
    \begin{compactitem}
    \item Relation $s_i <_{h'} c_{k_i} <_{h'} s_{i+1}$ holds  with $r_i(x_j), w_{k_i}(x_{k_i}) \in h$ and $c_j <_h c_{k_i}$.
      From $c_{k_i} <_{h'} s_{i+1}$ and step S2a, there exists an object $y$ such that $r_{i+1}(y_{k_i})$.
      Thus, by definition of the snapshot precedence relation, $T_i \rightarrow T_{i+1}$ holds.
    \item Relation $s_i \hb c_{k_i}$ equals $s_i <_{h'} o_i <_h c_{k_i}$ where $o_i$ is some read operation of $T_i$,
      Since $c_{k_i} <_{h'} s_{i+1}$ is produced by step S2a, we know that for some object $y$, 
      $r_{i+1}(y_{k_i})$ belongs to $h$.
      According to the definition of the snapshot precedence, $T_i \rightarrow T_{i+1}$ holds.
    \end{compactitem}

    Applying the result above to \refequation{reconstruct:1}, we obtain:
    $T_1 \rightarrow T_2 \ldots \rightarrow T_n \rightarrow T_1$.
    History $h$ violates \MON, a contradiction.

  \item  $h'$ satisfies rules D1 and D2.
    
    \begin{compactitem}[$-$]
      
    \item[($h'$ satisfies D1.1)]
      Follows from $h \in \ACA$,

    \item[($h'$ satisfies D1.2)]
      Immediate from step S1.

    \item[($h'$ satisfies D1.3)]
      Consider three transactions $T_i$, $T_j$ and $T_k$ such that 
      operations $r_i(x_j)$, $w_j(x_j)$ and $w_k(x_k)$ are in $h$.
      The definition of a history tells us that either $x_k \versionOrder_h x_j$
      or the converse holds.
      We consider the following two cases:
      \begin{compactitem}
      \item[(Case $x_k \versionOrder_h x_j$)]
        Since $h$ belongs to $\ACA \inter \CONS \inter \WCF$, 
        \reflem{reconstruct:3} tells us that $c_k <_h c_j$ holds.
        Hence, $c_k <_{h'} c_j$ holds.
      \item[(Case  $x_j \versionOrder_h x_k$)]
        Applying again \reflem{reconstruct:3}, we obtain that  $c_j <_h c_k$ holds.
        Since $<_h$ is a partial order, then $c_j <_{h} c_k$ does not hold.
        By step S2b, the order $(s_i,c_k)$ is in $h'$.
      \end{compactitem}
      
    \item[($h'$ satisfies D2)]
      Consider two conflicting transaction $(T_i,T_j)$ in $h'$.
      Since $h$ belongs to  $\WCF$, 
      one of the following two cases occurs:
      \begin{compactitem}
      \item[(Case $T_i \depend T_j$)]
        At first glance, assume that $T_i \depend T_j$ holds.
        By step S2a, $s_i$ is in $h'$ after every operation $c_j$ such that $r_i(x_j)$ is in $h'$,
        and by step S1, $s_i$ precedes the first operation of $T_i$.
        Thus $c_j <_{h'} s_i$ holds, and $h'$ satisfies D2 in this case. 
        To obtain the general case, we applying inductively the previous reasoning.
      \item[(Case $T_j \depend T_i$)]
        The proof is symmetrical to the case above, and thus omitted.
      \end{compactitem}

    \end{compactitem}

  \end{itemize}

\end{proof}

\vspace{0.5em}
\noindent
From the conjunction of \refprop{reconstruct:1} and \refprop{reconstruct:2}, we deduce 
our decomposition theorem.

\begin{theorem}
  \labtheo{reconstruct:main}
  $\SI = \ACA \inter \SCONS \inter \MON \inter \WCF$
\end{theorem}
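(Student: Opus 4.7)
The theorem asserts a set equality, so my plan is the standard one: prove both inclusions separately. Conveniently, \refprop{reconstruct:1} and \refprop{reconstruct:2}, established immediately above, give exactly
$\SI \subseteq \ACA \inter \SCONS \inter \WCF \inter \MON$
and its converse. So the proof of \reftheo{reconstruct:main} reduces to invoking these two propositions and concluding by mutual inclusion. The real content of the decomposition lives in those two arguments, and I would outline them as follows if rebuilding the theorem from scratch.

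For the forward direction, fix $h \in \SI$ together with a witnessing extended history $h_s$. Each of the four target properties falls out by unpacking D1 and D2 on $h_s$: \ACA follows directly from D1.1 and D1.2 (reads must see committed versions, and commits precede snapshot points); \WCF follows from \reflem{reconstruct:1}, since two conflicting writers with $x_i \versionOrder_h x_j$ force $T_j \depend T_i$, so they cannot be independent; $\SCONSa$ and $\SCONSb$ are short contradictions obtained by inserting a read's snapshot point and applying D1.2 and D1.3; and \MON uses \reflem{reconstruct:2} to translate a hypothetical snapshot-precedence cycle $T_1 \rightarrow \ldots \rightarrow T_n \rightarrow T_1$ into an impossible chain $s_1 <_{h_s} \ldots <_{h_s} s_1$ among snapshot points.

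The substantive direction is the reverse, and this is where I would spend the bulk of my effort. The plan is to construct the snapshot points explicitly: place each $s_i$ just before every operation of $T_i$ (S1), add $c_j <_{h'} s_i$ for every read $r_i(x_j)$ (S2a), and add $s_i <_{h'} c_k$ for every competing writer $T_k$ whose commit does not precede $c_j$ in $h$ (S2b). The main obstacle is proving that the resulting $<_{h'}$ remains acyclic; any cycle must pass through some $s_i$, and by carefully tracking which S2a/S2b edges a cycle can use, one must show that such a cycle induces a snapshot-precedence cycle $T_1 \rightarrow T_2 \rightarrow \ldots \rightarrow T_n \rightarrow T_1$ in $h$, contradicting \MON. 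I would lean on \reflem{reconstruct:3} (in $\ACA \inter \CONS \inter \WCF$, $x_i \versionOrder_h x_j$ implies $c_i <_h c_j$) to handle the cases where S2b introduces a backward edge. Once acyclicity is secured, D1.1 and D1.2 are immediate from \ACA and S1/S2a, D1.3 is a case analysis on the two possible version orders of conflicting writes (again via \reflem{reconstruct:3}), and D2 follows because \WCF makes any pair of conflicting commits dependency-ordered, and S2a then places the earlier commit before the later snapshot.
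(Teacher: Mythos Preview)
Your proposal is correct and follows essentially the same route as the paper: the theorem is obtained immediately from \refprop{reconstruct:1} and \refprop{reconstruct:2}, and your outlines of those two propositions (including the use of Lemmata~\ref{lem:reconstruct:1}, \ref{lem:reconstruct:2}, \ref{lem:reconstruct:3}, the explicit S1/S2a/S2b construction of snapshot points, and the acyclicity argument via a snapshot-precedence cycle contradicting \MON) match the paper's arguments step for step.
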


\noindent
Notice that this decomposition is well-formed in the sense that 
the four properties \SCONS, \MON, \WCF and \ACA are all distinct
and that no strict subset of $\{\SCONS, \MON, \WCF, \ACA \}$ attains \SI.

\begin{proposition}
  For every $S \subsetneq \{\SCONS, \MON, \WCF, \ACA \}$, 
  it is true that $\inter_{X \in S} X \neq \SI$.
\end{proposition}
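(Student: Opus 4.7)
The plan is to prove the proposition by exhibiting, for each of the four properties $P \in \{\SCONS, \MON, \WCF, \ACA\}$, a witness history $h_P$ that belongs to the intersection of the other three properties but is not in \SI. This suffices because any proper subset $S \subsetneq \{\SCONS,\MON,\WCF,\ACA\}$ omits at least one property $P$, and since dropping properties only enlarges the set of admissible histories, $\inter_{X \in \{\SCONS,\MON,\WCF,\ACA\}\setminus\{P\}} X \;\subseteq\; \inter_{X \in S} X$. Combined with \reftheo{reconstruct:main}, this yields $h_P \in \inter_{X\in S} X \setminus \SI$ for any such $S$, establishing the claim.

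For two of the four witnesses I can recycle histories from \refsection{reconstruct}. For the case $P = \MON$, I will take $h_7$: the discussion following its definition shows $h_7 \notin \MON$; it remains to check that every read in $h_7$ is of a committed version (so $h_7 \in \ACA$), that $T_1$ writes only $x$ while $T_2$ writes only $y$ (so $h_7 \in \WCF$), and that strict consistency is satisfied because $c_1$ and $c_2$ are incomparable in $<_h$ and none of the reads of $T_a$ or $T_b$ precedes a commit required by \SCONSa or \SCONSb. For the case $P = \SCONS$, I will reuse $h_6$: the text already argues that \SCONSb fails; checking \ACA and \WCF is immediate, and the snapshot precedence relation on $h_6$ reduces to $T_1 \rightarrow T_a$ and $T_2 \rightarrow T_a$, which is trivially a partial order, yielding $h_6 \in \MON$.

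For the remaining two cases I will construct fresh witnesses. For $P = \WCF$, consider $h = r_1(x_0).w_1(x_1).r_2(x_0).w_2(x_2).c_1.c_2$: transactions $T_1$ and $T_2$ are mutually independent yet both write to $x$, breaking \WCF; but \ACA, \SCONS, and \MON all hold vacuously (each updater performs a single read of the initial version, so \SCONSa and \SCONSb are empty, and no pair is related by $\rightarrow$). Rule D2 forbids both commits in any extended history, so $h \notin \SI$. For $P = \ACA$, consider $h = r_1(x_0).w_1(x_1).r_a(x_1).c_a.c_1$: since $r_a(x_1) <_h c_1$, \ACA fails; only one object is ever updated and by a single writer, so \SCONS, \MON, and \WCF hold trivially; yet rule D1.2 requires $c_1 <_{h_s} s_a$, which contradicts $s_a <_{h_s} r_a(x_1) <_{h_s} c_1$.

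The main obstacle is mechanical verification rather than ideas: I must track the partial order $<_h$ and enumerate the clauses of the definitions of \SCONS and of snapshot precedence carefully, particularly for $h_7$ (where the two cross arrows give rise to several near-cases for \SCONSa and \SCONSb) and for $h_6$ (where I have to show that $T_1$ and $T_2$ are incomparable under $\rightarrow$). Once these verifications are in hand, the four witnesses cover the four maximal proper subsets, completing the proof.
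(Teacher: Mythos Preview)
Your strategy matches the paper's exactly: exhibit one witness history per three-element subset. Your choices for the subsets omitting \MON, \SCONS, and \WCF are the same as (or minor reorderings of) the paper's $h_7$, $h_6$, and $r_1(x_0).r_2(x_0).w_1(x_1).w_2(x_2).c_1.c_2$.

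There is a gap in your \ACA case. You claim that for $h = r_1(x_0).w_1(x_1).r_a(x_1).c_a.c_1$ the property \SCONS holds trivially, but \SCONSa is stated for \emph{any} pair of reads $r_i(x_j), r_i(y_l)$ by the same transaction, with no requirement that the two reads be distinct (the preamble to the definition only constrains $k \neq j$). Taking both quantified reads to be $r_a(x_1)$ yields the requirement $r_a(x_1) \not<_h c_1$, and this fails in your total order since $r_a(x_1) <_h c_a <_h c_1$. Hence your witness is not in \SCONS under the definition as written. The paper's own witness for this case is $r_1(x_0).w_1(x_1).r_a(x_0).c_1.c_a$, where every read is of the initial version and the diagonal instance of \SCONSa only asks for $r_a(x_0) \not<_h c_0$. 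You could either adopt that witness, or repair yours by passing to a partially ordered history in which $r_a(x_1)$ and $c_1$ are incomparable: then \ACA still fails (since $c_1 \not<_h r_a(x_1)$) while \SCONSa is satisfied.
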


\begin{proof}  
  For every set $S \subsetneq \{\SCONS, \MON, \WCF, \ACA \}$ containing three of the four properties, 
  we exhibit below a history in $\inter_{X \in S} X \setminus \SI$.
  Trivially, the result then holds for every $S$.
  \begin{itemize}
  \item[-] $\SCONS \inter \ACA \inter \WCF$: History $h_7$ in \refsection{reconstruct:scons}.
  \item[-] $\MON \inter \ACA \inter \WCF$: History $h_6$ in \refsection{reconstruct:scons}.
  \item[-] $\SCONS \inter \MON \inter \WCF$: History $r_1(x_0).w_1(x_1).r_a(x_0).c_1.c_a$.
  \item[-] $\SCONS \inter \MON \inter \ACA$: History $r_1(x_0).r_2(x_0).w_1(x_1).w_2(x_2).c_1.c_2$.
  \end{itemize}
\end{proof}

To the best of our knowledge, this result is the first to prove that \SI can be split into simpler properties.
\reftheo{reconstruct:main} also establishes that \SI is definable on plain histories.
This has two interesting consequences:
(i) a transactional system does not have to explicitly implement snapshots to support \SI,
and (ii) one can compare \SI to other consistency criterion without relying on a phenomena based characterization
(contrary to, e.g., the work of Adya \cite{Adya99}).

\newpage
\section{The impossibility of \SI with \GPR}
\labsection{imp}

This section leverages our previous decomposition result to show that \SI is inherently non-scalable.
In more details, we show that none of \MON, \SCONSa or \SCONSb
is attainable in some asynchronous failure-free \GPR system \procSet
when updates are obstruction-free and queries are wait-free.
To prove these results, we first characterize in Lemmata \ref{lem:imp:0} and \ref{lem:imp:1}
histories acceptable by \procSet.

\begin{lemma}[Positive-freshness Acceptance]
  \lablem{imp:0}
  Consider an acceptable history $h$ and a transaction $T_i$ pending in $h$
  such that the next operation invoked by $T_i$ is a read on some object $x$.
  Note $x_j$  the latest committed version of $x$ prior to the first operation of $T_i$ in $h$.
  Let $\run$ be an execution satisfying $\mathfrak{F}(\run)=h$.
  If $h.r_i(x_j)$ belongs to $\SI$ and there is no concurrent write-conflicting transaction with $T_i$,
  then 
  there exists an execution $\run'$ extending $\run$ such that in history $\refMapOf{\run'}$,
  transaction $T_i$ reads at least  (in the sense of $\versionOrder_h$) version $x_j$ of $x$.
\end{lemma}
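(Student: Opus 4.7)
The plan is to construct $\run'$ by scheduling message deliveries and process steps that combine asynchrony, the progress guarantees, and the hypothesis $h.r_i(x_j) \in \SI$. First, I would observe that because $T_j$ commits in $h$ and $x \in \writeSetOf{T_j}$, \GPR forces every replica $p \in \replicaSetOf{x}$ to participate in $T_j$'s commit. Hence, for each such $p$, either $p$ has already installed $x_j$ in $\run$, or there is a message whose delivery installs $x_j$ at $p$. I would extend $\run$ by delivering messages until some replica $p \in \replicaSetOf{x}$ holds $x$ at a version $x_k$ with $k = j$ or $x_j \versionOrder_h x_k$; by judicious asynchronous scheduling we may deliver exactly what is needed to install $x_j$ and nothing beyond, so without loss of generality $p$'s copy is $x_j$.

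Next, I would invoke the pending read step of $T_i$ at its coordinator. Since by hypothesis no concurrent transaction write-conflicts with $T_i$, \OFU guarantees termination if $T_i$ is an update and \WFQ guarantees it if $T_i$ is a query; in either case the coordinator eventually returns a value. By \GPR the coordinator communicates only with processes in $\replicaSetOf{x}$, so I would schedule its request to be answered by the replica $p$ prepared above, which returns $x_j$. Denote by $\run'$ the resulting extended execution. Then $\refMapOf{\run'}$ equals $h.r_i(x_j)$ (up to unrelated concurrent events that we simply do not schedule), which is in \SI by hypothesis; hence $\run'$ is acceptable and $T_i$ reads at least $x_j$ in $\refMapOf{\run'}$.

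The hard part will be ruling out that the protocol, through internal state accumulated during $\run$, forces the coordinator to return some version strictly older than $x_j$. This is precisely where the hypothesis $h.r_i(x_j) \in \SI$ is indispensable: it certifies that answering $x_j$ cannot be refused on consistency grounds. Together with the asynchronous latitude in message scheduling and the termination guarantees of \OFU/\WFQ, this forces the existence of the desired execution $\run'$.
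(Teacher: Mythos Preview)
Your constructive approach has a genuine gap at exactly the point you flag as ``the hard part.'' Knowing that $h.r_i(x_j)\in\SI$ only tells you that returning $x_j$ is \emph{permitted}; it does not force the protocol to return a version at least $x_j$. A protocol satisfying \SI, \OFU, \WFQ and \GPR is a black box: you cannot assume you can steer its replica selection or that delivering messages in a chosen order makes a replica answer with $x_j$. Your use of \OFU/\WFQ only establishes that the read terminates, not that it returns something fresh enough. Invoking \GPR here is also extraneous; the lemma does not rely on genuineness.

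The paper's proof avoids this entirely by arguing by contradiction through \OFU. Suppose every extension of $\run$ has $T_i$ read some $x_k$ with $x_k\versionOrder_h x_j$. Pick an extension in which only $T_i$ takes steps, and extend $T_i$ after its read on $x$ with a write on $x$, then a commit attempt. By hypothesis no concurrent transaction write-conflicts with $T_i$, so \OFU forces $T_i$ to commit. But the resulting history is not in \SI: $T_j$ committed before $T_i$ started and both write $x$, so rule~D2 requires $c_j<_{h_s} s_i$; yet $r_i(x_k)$ with $x_k\versionOrder_h x_j$ forces, via rule~D1.3, $s_i<_{h_s} c_j$. This contradiction shows the system cannot commit $T_i$, contradicting \OFU. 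The trick you are missing is turning $T_i$ into an update on $x$ so that reading a stale version becomes an \SI violation upon commit, which is what lets \OFU bite.
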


\begin{proof}
  By contradiction.
  Assume that in every execution extending $\run$, transaction $T_i$ reads a version $x_k \versionOrder_{h} x_j$.
  Let $\run'$ be such an extension in which 
  \begin{inparaenum}[(i)]
  \item no other transaction than $T_i$ makes steps,
  \item we extend $T_i$ after its read upon $x$ by a write on $x$, then
  \item $\coordOf{T_i}$ tries committing $T_i$.
  \end{inparaenum}
  Since $T_i$ reads version $x_k$ in $\refMapOf{\run'}$, transaction $T_i$ should abort.
  However in history $\refMapOf{\run'}$ there is no concurrent write-conflicting transaction with $T_i$.
  Hence, this execution contradicts that updates are obstruction-free.
\end{proof}

\begin{lemma}[Genuine Acceptance]
  \lablem{imp:1}
  Let $h=\refMapOf{\run}$ be an acceptable history by \procSet such that a transaction $T_i$ is pending in $h$.
  Note $X$ the set of objects accessed by $T_i$ in $h$.
  Only processes in \replicaSetOf{X} make steps to execute $T_i$ in $\run$.
\end{lemma}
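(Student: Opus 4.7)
The plan is to argue by contradiction, using the model's stipulation that every prefix of a transaction followed by a terminating operation is itself a legal transaction with the same identifier. This device lets us reduce the statement for pending transactions to the definition of \GPR (which on its face quantifies over transactions).

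First, I would assume toward a contradiction that some process $q \notin \replicaSetOf{X}$ makes a step in \run that contributes to executing $T_i$. Pick the earliest such step $s$, and let $\sigma$ denote the prefix of \run ending at $s$. In $\mathfrak{F}(\sigma)$, transaction $T_i$ has invoked only a prefix of the operations it performs in $h$, so it has accessed some subset $X' \subseteq X$. Since \replicaSetOf{\cdot} is monotone in its argument and $q \notin \replicaSetOf{X}$, we have $q \notin \replicaSetOf{X'}$ as well.

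Second, I would turn $\sigma$ into an execution whose history contains $T_i$ as a completed transaction accessing exactly $X'$. To do this, extend $\sigma$ by letting \coordOf{T_i} invoke a terminating operation ($a_i$ suffices) immediately after $s$, with no further steps from any other process in between; call the resulting execution $\run'$ and let $h' = \mathfrak{F}(\run')$. By the model's prefix-plus-termination convention, $T_i$ in $h'$ is a transaction with identifier $i$ and read/write set drawn from $X'$, so $\replicaSetOf{T_i} = \replicaSetOf{X'}$ in $h'$. But $q$ took step $s$ to execute $T_i$ inside $\run'$, contradicting the \GPR requirement that only processes in $\replicaSetOf{T_i}$ make steps to execute $T_i$.

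The only real subtlety — and the one I would expect to be the main obstacle — is justifying that appending the terminating operation yields a legitimate execution. It must be ensured that no step belonging to $T_i$ is "in flight" in a way that forces further remote participation before termination; however, this is handled by the model's convention together with the fact that \coordOf{T_i} is the agent that initiates commit/abort, and that links are quasi-reliable (so a unilateral abort by the coordinator is always a legal continuation). Once this is observed, the contradiction with \GPR is immediate and the lemma follows.
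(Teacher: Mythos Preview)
Your argument is correct and follows the same contradiction-via-prefix-convention idea as the paper. The one notable difference is how termination is obtained: you truncate $\run$ to a prefix $\sigma$ and then have the coordinator unilaterally issue $a_i$, whereas the paper simply extends the \emph{whole} execution $\run$ to some $\run'$ in which $T_i$ executes no further operation and $\coordOf{T_i}$ is correct, and then invokes the system's progress guarantees (\OFU/\WFQ) to conclude that $T_i$ eventually terminates in $\run'$. This sidesteps exactly the subtlety you flag in your last paragraph---no argument is needed that an abort can be spliced in mid-execution---and it also makes the detour through $X'\subseteq X$ unnecessary, since in the extension the accessed set is still $X$ and $p\notin\replicaSetOf{X}$ yields the contradiction with \GPR directly.
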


\begin{proof}
  (By contradiction.)
  Consider that a process $p \notin \replicaSetOf{X}$ makes steps to execute $T_i$ in $\run$.
  Since the prefix of a transaction is a transaction with the same id, 
  we can consider an extension $\run'$ of $\run$ such that
  $T_i$ does not execute any additional operation in $\run'$ and $\dbCoordinatorOf{T_i}$ is correct in $\run'$.
  The progress requirements satisfied by \procSet imply that $T_i$ terminates in $\run'$.
  However, process $p \notin \replicaSetOf{X}$  makes steps to execute $T_i$ in $\run'$.
  A contradiction to the fact that \procSet is GPR.
\end{proof}

We now state that monotonic snapshots are not constructable by \procSet.
Our proof holds because objects accessed by a transaction are not known in advance.
\begin{theorem}
  \labtheo{imp:1}
  No asynchronous failure-free \GPR system implements \MON
\end{theorem}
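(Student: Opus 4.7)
The approach is by contradiction. Suppose some asynchronous, failure-free \GPR system $\procSet$ satisfying \OFU and \WFQ produces only histories in \MON. The plan is to exhibit an execution of $\procSet$ whose history reproduces the cyclic snapshot-precedence pattern of $h_7$ in \refsection{reconstruct:mon}, contradicting \MON.

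Consider two objects $x$ and $y$ with disjoint replica sets, and schedule four transactions: updates $T_1$ (rewriting $x$) and $T_2$ (rewriting $y$), together with queries $T_a$ and $T_b$, each accessing $\{x, y\}$. The construction interleaves events as follows. First, $T_a$ issues its read of $x$ while $\replicaSetOf{x}$ still holds only $x_0$, so it returns $x_0$. Then $T_1$ runs to completion on $\replicaSetOf{x}$: by \OFU this is possible since no write-conflicting transaction is concurrent, and by \GPR no process in $\replicaSetOf{y}$ participates. Next, $T_b$ reads $x$ and, by \reflem{imp:0}, can be driven to observe $x_1$. Now $T_b$ reads $y$ while $\replicaSetOf{y}$ still holds only $y_0$, returning $y_0$. $T_2$ then commits $y_2$ on $\replicaSetOf{y}$ by \OFU. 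Finally $T_a$ reads $y$ and observes $y_2$ via a further application of \reflem{imp:0}; both queries then commit by \WFQ. The schedule is admissible because \GPR forbids coordination between the disjoint replica sets and asynchrony permits any interleaving of these locally independent events.

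The resulting history contains $r_a(x_0)$, $r_b(x_1)$, $r_b(y_0)$, $r_a(y_2)$. Applying case (ii) of the snapshot-precedence definition with $(T_i, T_j) = (T_a, T_b)$ and $T_l = T_1$ (which writes $x$, with $c_0 <_h c_1$) yields $T_a \rightarrow T_b$; symmetric reasoning with $T_l = T_2$ yields $T_b \rightarrow T_a$. Hence $\rightarrow$ contains a two-cycle and is not a partial order, contradicting $h \in \MON$.

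The main obstacle is justifying each invocation of \reflem{imp:0}, which requires the hypothetical extended history to be in \SI. This amounts to exhibiting snapshot points $s_a, s_b$ consistent with rules D1 and D2: for instance, $s_a$ placed before $c_1$ and $s_b$ placed between $c_1$ and $c_2$ works throughout the construction. A secondary subtlety --- whether some cross-replica mechanism could preempt the intended reads --- is ruled out by \GPR, which prevents any process outside $\replicaSetOf{x}$ (respectively $\replicaSetOf{y}$) from making steps on behalf of $T_1$ (respectively $T_2$), so the replicas servicing $T_a$ and $T_b$ cannot detect the globally incompatible configuration they are jointly producing.
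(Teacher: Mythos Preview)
Your argument has a genuine gap at the step where you claim that $T_a$ can be driven to read $y_2$ via \reflem{imp:0}. That lemma guarantees only that $T_i$ reads \emph{at least} the latest version committed prior to $T_i$'s \emph{first} operation. In your schedule, $T_a$'s first operation is $r_a(x_0)$, which you place at the very beginning, long before $T_2$ commits. Hence the latest committed version of $y$ prior to $T_a$'s first operation is $y_0$, not $y_2$, and the lemma yields nothing stronger than ``$T_a$ reads at least $y_0$''. The system is therefore free to return $y_0$ to $T_a$, producing the snapshot $\{x_0,y_0\}$ and avoiding the cycle altogether. (Concretely, $\coordOf{T_a}$, which must lie in $\replicaSetOf{x}$, can ship the information ``I have read $x_0$'' along with the read request for $y$; the replicas of $y$ may then safely return $y_0$.) Your discussion of ``the main obstacle'' focuses on the \SI side-condition of \reflem{imp:0}, but misses this freshness side-condition, which is the one that actually fails.

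This is not a local oversight that can be patched within the two-object setup: any attempt to place $c_2$ before $T_a$'s first operation (so that the lemma forces $y_2$) simultaneously forces $T_b$'s first read of $y$ to see $y_2$ as well, or else creates a cyclic ordering constraint among $\run_a,\run_1,\run_b,\run_2$ that cannot be realised by indistinguishability (since $\run_a$ and $\run_1$ both touch $\replicaSetOf{x}$ and cannot be swapped). The paper's proof sidesteps this by using \emph{four} objects $x,y,z,u$ with pairwise disjoint replica sets and four queries $T_a,T_b,T_c,T_d$ chained around a cycle. The two ``halves'' $h$ (on $\{x,y\}$) and $h'$ (on $\{z,u\}$) are built independently; indistinguishability then lets one slide the commit of $T_3$ (resp.\ $T_1$) before the first operation of $T_b$ (resp.\ $T_d$), so that \reflem{imp:0} legitimately forces $r_b(z_3)$ and $r_d(x_1)$, closing a four-cycle $T_a\rightarrow T_b\rightarrow T_c\rightarrow T_d\rightarrow T_a$. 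The extra two objects are precisely what break the circular constraint that blocks your two-object construction.
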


\begin{proof}
  (By contradiction.)
  Let us consider
  (i) four objects $x$, $y$, $z$ and $u$ such that for any two objects in $\{x,y,z,u\}$,
  their replica sets do not intersect;
  (ii) four queries $T_a$, $T_b$, $T_c$ and $T_d$
  accessing respectively $\{x,y\}$, $\{y,z \}$, $\{z,u\}$ and $\{u,x\}$;
  and (iii) four updates $T_1$, $T_2$, $T_3$ and $T_4$ modifying respectively $x$, $y$, $z$ and $u$.

  Obviously, history $r_b(y_0)$ is acceptable,
  and since updates are obstruction-free, $r_b(y_0).r_2(y_0).w_2(y_2).c_2$ is also acceptable.
  Applying that \reflem{imp:0},
  we obtain that history $r_b(y_0).r_2(y_0).w_2(y_2).c_2.r_a(x_0).r_a(y_2)$ is acceptable.
  Since $T_a$ is wait-free, 
  $h=r_b(y_0).r_2(y_0).w_2(y_2).c_2.r_a(x_0).r_a(y_2).c_a$ is acceptable as well.
  Using a similar reasoning, 
  $h'=r_d(u_0).r_4(u_0).w_4(u_4).c_4.r_c(z_0).r_c(u_4).c_c$ is also acceptable.
  We note $\run$ and $\run'$ respectively two sequences of steps 
  such that $\refMapOf{\run}=h$ and $\refMapOf{\run'}=h'$.

  The system \procSet is \GPR. 
  As a consequence, \reflem{imp:1} tells us that only processes in
  $\replicaSetOf{x,y}$ make steps in $\run$.
  Similarly, only processes in $\replicaSetOf{u,z}$ make steps in $\run'$.
  By hypothesis, $\replicaSetOf{x,y}$ and $\replicaSetOf{u,z}$ are disjoint.
  Applying a classical indistinguishably argument \cite[Lemma~1]{Fischer1985},
  both $\run'.\run$ and $\run.\run'$ are admissible by \procSet.
  Thus, histories $h'.h=\refMapOf{\run'.\run}$ and $h.h'=\refMapOf{\run.\run'}$ are acceptable.

  Since updates are obstruction-free, history $h'.h.r_3(z_0).w_3(z_3).c_3$ is acceptable.
  Note $U$ the sequence of steps following $\run'.\run$ with $\refMapOf{U}=r_3(z_0).w_3(z_3).c_3$.
  Observe that by \reflem{imp:1} $\run'.\run.U$ is indistinguishable from $\run'.U.\run$.
  Then consider history $\refMapOf{\run'.U.\run}$.
  In this history, $T_b$ is pending and the latest version of object $z$ is $z_3$, 
  As a consequence, by applying \reflem{imp:0},
  there exists an extension of $\run'.U.\run$ in which transaction $T_b$ reads $z_3$.
  From the fact that queries are  wait-free and since $\run'.\run.U$ is indistinguishable from $\run'.U.\run$,
  we obtain that history $h_1=h'.h.r_3(z_0).w_3(z_3).c_3.r_b(z_3).c_b$ is acceptable.
  We note $U_1$ the sequence of steps following $\run'.\run$ such that 
  $\refMapOf{U_1}$ equals $r_3(z_0).w_3(z_3).c_3.r_b(z_3).c_b$.
 
  With a similar reasoning, history $h_2=h'.h.r_1(x_0).w_1(x_1).c_1.r_d(x_1).c_d$ is acceptable.
  Note $U_2$ the sequence satisfying $\refMapOf{U_2}=r_1(x_0).w_1(x_1).c_1.r_d(x_1).c_d$.

  Executions $\run'.\run.U_1$ and $\run'.\run.U_2$ are both admissible.
  Because \procSet is \GPR,
  only processes in $\replicaSetOf{y,z}$ (resp. $\replicaSetOf{x,u}$) make steps in $U_1$ (resp. $U_2$).
  By hypothesis, these two replica sets are disjoint.
  Applying again an indistinguishably argument, $\run'.\run.U_1.U_2$ is an execution of \procSet.
  Therefore, 
  the history $\hat{h}=\refMapOf{\run'.\run.U_1.U_2}$ is acceptable.
  In this history, relation $T_a \rightarrow T_b \rightarrow T_c \rightarrow T_d \rightarrow T_a$ holds.
  Thus, $\hat{h}$ does not belong to \MON. Contradiction. 
\end{proof}

Our next theorem states that \SCONSb is not attainable.
Similarly to Attiya et al. \cite{Attiya:SPAA2009}, our proof builds an infinite
execution in which a query $T_a$ on two objects never terminates.
We first define a finite execution during which we interleave between any two consecutive steps to execute $T_a$, 
a transaction updating one of the objects read by $T_a$.
We show that during such an execution, transaction $T_a$ does not terminate successfully.
Then, we prove that asynchrony allows us to continuously extend such an execution,
contradicting the fact that queries are wait-free.

\begin{definition}[Flippable execution]
  Consider
  two distinct objects $x$ and $y$,
  a query $T_a$ over both objects,
  and a set of updates $T_{j \in \llbracket 1,m \rrbracket}$ accessing $x$ if $j$ is odd, and $y$ otherwise.
  An execution $\run=U_1V_2U_2 \ldots V_mU_m$ where,
  \begin{itemize}
    \item transaction $T_a$ reads in history $h=\mathfrak{F}(\run)$ at least version $x_1$ of $x$,
    \item for any $j$ in $\llbracket 1,m \rrbracket$,  $U_j$ is the execution of transaction $T_j$ by processes $Q_j$, 
    \item for any $j$ in $\llbracket 2,m \rrbracket$, $V_j$ are steps to execute $T_a$ by processes $P_j$, and
    \item both $(Q_j \inter P_j = \emptySet) \xor (P_j \inter Q_{j+1} = \emptySet)$ and $Q_j \inter Q_{j+1} = \emptySet$ hold,
  \end{itemize}
  is called flippable.
\end{definition}

\begin{lemma}
  \lablem{imp:2}
  Let $\run$ be an execution admissible by \procSet.
  If $\run$ is flippable and histories accepted by \procSet satisfy \SCONSb, 
  query $T_a$ does not terminate.
\end{lemma}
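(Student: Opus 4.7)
The plan is by contradiction. Suppose that $T_a$ terminates in $\run$, so some $V_j$ contains the commit step $c_a$. Let $h = \mathfrak{F}(\run)$; in $h$, transaction $T_a$ reads definite versions $x_p$ and $y_q$ with $p \geq 1$ in the version order (by the flippability hypothesis). The strategy is to construct, via indistinguishability-preserving swaps allowed by the XOR disjointness structure, an alternate admissible execution $\run'$ whose induced history $h' = \mathfrak{F}(\run')$ violates \SCONSb, contradicting the hypothesis on \procSet.

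First, I would establish the swap primitive. Whenever $P_j \inter Q_j = \emptySet$ (resp.\ $P_j \inter Q_{j+1} = \emptySet$), the adjacent segments $V_jU_j$ (resp.\ $U_jV_j$, after first commuting $V_j$ past $U_j$ on the other side) can be interchanged in real time, since no process participates in both. By a standard indistinguishability argument, as already used in the proof of \reftheo{imp:1} via \cite[Lemma~1]{Fischer1985}, the reordered execution is admissible and every process, in particular \coordOf{T_a}, retains its local view. Therefore $T_a$ still reads the exact same versions $x_p$ and $y_q$ in the swapped history, and in particular still commits.

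Second, I would chain such swaps to shift the step of $T_a$ that reads $y$ (say $V_{j_y}$) so that many odd-indexed updates $U_k$ writing $x$ commit before it in real time. Crucially, the side condition $Q_j \inter Q_{j+1} = \emptySet$ applies only to updates adjacent across a single $V$-segment and does not permit a direct swap of two $U_j$'s past each other; hence the total real-time order $c_1 <_{h'} c_2 <_{h'} \cdots <_{h'} c_m$ is preserved by every sequence of allowed swaps. Since $U_p$ precedes every later $U_k$ in this order, we have $c_k \not<_{h'} c_p$ for any $k > p$. Meanwhile, by pushing $V_{j_y}$ sufficiently far to the right, we can force an odd $k$ with $p < k$ to satisfy $c_k <_{h'} c_q$ in $h'$: this is possible because the version of $y$ read by $T_a$ is governed by which $U_j$ updates have propagated to the replica queried in $V_{j_y}$, and the flippable structure with enough slots allows us to reposition $V_{j_y}$ past $U_k$.

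Given such a $k$, we obtain $c_k <_{h'} c_q$ together with $c_k \not<_{h'} c_p$, directly violating \SCONSb in $h'$, the desired contradiction. The main obstacle I expect is the middle step: carefully arguing that the XOR-constrained swaps chain together to place $V_{j_y}$ past enough $U_k$'s without altering the versions $T_a$ reads. This requires tracking, replica by replica, which update messages the processes in $P_{j_y}$ have received in each intermediate reordering, and leveraging asynchrony together with the fact that $U_k$ involves no process in $P_{j_y}$ whenever the relevant disjointness branch of the XOR holds.
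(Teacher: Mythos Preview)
Your proposal has a genuine gap, and it stems from a misreading of the disjointness condition $Q_j \cap Q_{j+1} = \emptyset$. You write that this condition ``does not permit a direct swap of two $U_j$'s past each other''; in fact it says exactly that $U_j$ and $U_{j+1}$ are executed by disjoint sets of processes, so it is precisely what \emph{enables} such a swap. The paper's argument turns on this one swap.

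Concretely, the paper argues as follows. Suppose $T_a$ commits, having read $x_j$ (odd $j\geq 1$) and $y_l$ (even $l$); assume $j<l$, the other case being symmetric. Set $k=l+1$. The XOR branch lets one slide $V_k$ to one side of the pair $U_l,U_k$, after which $Q_l\cap Q_k=\emptyset$ lets one swap the now-adjacent $U_l$ and $U_k$. In the resulting history $h'$ one has $c_k <_{h'} c_l$ while still $c_j <_{h'} c_k$; together with $r_a(x_j), r_a(y_l), w_k(x_k)\in h'$, this violates \SCONSb. A single $U$--$U$ swap suffices.

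Your alternative, moving only $V$-segments while preserving the order $c_1 <_{h'} \cdots <_{h'} c_m$, cannot succeed. The predicate \SCONSb{} depends only on which versions $T_a$ reads and on the relative order of the commits $c_j,c_k,c_l$; you yourself note that both are invariant under your swaps. Hence no sequence of $V$--$U$ swaps can manufacture a \SCONSb{} violation that was not already present in $h$. In particular, since $h$ is accepted by $\Pi$ it already satisfies \SCONSb, which forces $|\,l-j\,|=1$; in this sole remaining case there is no odd $k$ strictly between $p$ and $q$, and any $k>q$ has $c_k$ after $c_q$ in your preserved order. Your final paragraph also drifts into suggesting that repositioning $V_{j_y}$ might change which $y$-version $T_a$ observes, contradicting your earlier (correct) remark that indistinguishability fixes the versions read.
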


\begin{proof}
  Let $h$ be the history $\mathfrak{F}(\run)$.
  In history $h$ transaction $T_j$ precedes transaction $T_{j+1}$,
  it follows that $h$ is of the form $h=w_1(x_1).c_1. *. w_2(y_2).c_2.* \ldots$\ ,
  where each symbol $*$ corresponds to either no operation, or to some read operation 
  by $T_a$ on object $x$ or $y$.

  Because \run is flippable, transaction $T_a$ reads at least version $x_1$ of object $x$ in $h$.
  For some odd natural  $j \geq 1$,  let $x_j$ denote the version of object $x$ read by $T_a$.
  Similarly, for some even natural $l$, let $y_{l}$ be the version of $y$ read by $T_a$.
  Assume that $j<l$ holds.
  Therefore, $h$ is of the form $h=\ldots  w_{j}(x_{j}) \ldots w_l(y_l) \ldots$.

  Note $k$ the value $l+1$, 
  and consider the sequence of steps $V_k$ made by $P_k$ right after $U_{l}$ to execute $T_a$.
  Applying the definition of a flippable execution, we know that  
  (F1) $(Q_l \inter P_k = \emptySet) \xor (P_k \inter Q_k = \emptySet)$,
  and (F2) $Q_l \inter Q_k \equals \emptySet$.
  Consider now the following cases:
  \begin{compactitem}
  \item[(Case $Q_l \inter P_k = \emptySet$.)]
    It follows that $\run$ is indistinguishable from the execution 
    $\run''=\ldots U_j \ldots V_{k} U_l U_k \ldots$. 
    Then from fact F2, 
    $\run$ is indistinguishable from execution $\run'=\ldots U_j \ldots V_{k} U_k U_l \ldots$.
  \item[(Case $P_k \inter Q_k = \emptySet$)]
    With a similar reasoning, we obtain that $\run$ is indistinguishable from
    $\run'=\ldots U_j \ldots U_k U_l V_k \ldots$.
  \item[(Case $P_k \inter (Q_l \union Q_k) = \emptySet$.)]
    This case reduces to any of the two above cases.
  \end{compactitem}
  Note $h'$ the history $\mathfrak{F}(\run')$.
  Observe that since $\run'$ is indistinguishable from $\run$, history $h'$ is acceptable.
  In history $h'$, $c_k <_{h'} c_l$ holds.
  Moreover, $c_j <_{h'} c_k$ holds by the assumption $j<l$ and the fact that $k$ equals $l+1$.
  Besides, operations  $r_i(x_j)$, $r_i(y_l)$ and $w_k(x_k)$ all belong to $h'$.
  According to the definition of \SCONSb, transaction $T_a$ does not commit in $h'$.
  (The case $j>l$ follows a symmetrical reasoning to the case $l>j$ we considered previously.)
\end{proof}

\begin{theorem}
  \labtheo{imp:2}
  No asynchronous failure-free \GPR system implements \SCONSb.
\end{theorem}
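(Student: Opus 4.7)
The plan is to proceed by contradiction: assuming \procSet is asynchronous, failure-free, \GPR, and implements \SCONSb (together with \OFU and \WFQ), I will construct an infinite flippable execution in which $T_a$ takes infinitely many steps without ever committing, violating \WFQ.

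First, I fix two objects $x$, $y$ whose replica sets are disjoint, a query $T_a$ reading both of them, and an infinite sequence of updates $T_j$ where $T_j$ writes $x$ when $j$ is odd and $y$ when $j$ is even. I then build $\run = U_1 V_2 U_2 V_3 U_3 \ldots$ by induction on $k$. Given a flippable prefix $\run_k = U_1 V_2 U_2 \ldots V_k U_k$ in which $T_a$ is still pending, I extend by a block $V_{k+1} U_{k+1}$ as follows. By \OFU the update $T_{k+1}$ can be driven to commit in isolation, and by \reflem{imp:1} its steps involve only processes $Q_{k+1}$ in the replica set of the unique object it touches; since $T_k$ and $T_{k+1}$ alternate between $x$ and $y$, $Q_k \inter Q_{k+1} = \emptySet$. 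For $V_{k+1}$, because $T_a$'s coordinator is correct and $T_a$ is still pending, \WFQ forces the coordinator to be ready to take a further step, which asynchrony allows me to schedule immediately after $U_k$ and before $U_{k+1}$; taking $V_{k+1}$ to be a single atomic step, its unique process lies in $\replicaSetOf{x} \union \replicaSetOf{y}$ by \reflem{imp:1}, hence is disjoint from whichever of $Q_k$, $Q_{k+1}$ lies on the opposite side of the partition $\replicaSetOf{x} \mid \replicaSetOf{y}$. This supplies the disjointness clause required by the definition of flippable execution. The ``reads at least $x_1$'' clause is secured by applying \reflem{imp:0} at the first step of $T_a$ that reads $x$, after $U_1$ has installed version $x_1$.

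That $T_a$ remains pending throughout the induction is not an assumption but a consequence: at each stage the finite prefix $\run_k$ is flippable and \procSet implements \SCONSb by hypothesis, so \reflem{imp:2} applies and forbids $T_a$ from terminating in $\run_k$. Taking the limit yields an admissible infinite execution (by asynchrony and failure-freedom) that is flippable; in it $T_a$ takes infinitely many steps yet never commits, contradicting \WFQ. The main obstacle I expect is the interleaving argument inside the induction: one must verify in detail that a single step of $T_a$ can always be interposed between two consecutive update blocks in a way that preserves the disjointness clause and that the resulting schedule is admissible by \procSet. Once that scheduling is pinned down, the remaining ingredients (\GPR via \reflem{imp:1}, \OFU, and \WFQ) enter essentially off-the-shelf.
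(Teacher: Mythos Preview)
Your proposal is correct and follows essentially the same approach as the paper: both argue by contradiction, inductively build a flippable execution by interleaving solo updates $U_j$ (committed via \OFU, with $Q_j \inter Q_{j+1}=\emptySet$ by \reflem{imp:1} and disjoint replica sets) and single steps $V_j$ of $T_a$, secure the ``reads at least $x_1$'' clause via positive-freshness (\reflem{imp:0}), invoke \reflem{imp:2} at each finite stage to keep $T_a$ pending, and pass to the limit to contradict \WFQ. Your treatment of the disjointness clause for $P_{k+1}$ (a single step lies on one side of the partition $\replicaSetOf{x}\mid\replicaSetOf{y}$, hence is disjoint from one of the adjacent $Q$'s) is in fact more explicit than the paper's own sketch.
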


\begin{proof}
  (By contradiction.)
  Consider two objects $x$ and $y$ such that
  $\replicaSetOf{x}$ and $\replicaSetOf{y}$  are disjoint.
  Assume a read-only transaction $T_a$ that reads successively $x$ then $y$.
  Below, we exhibit an execution admissible by \procSet during which transaction $T_a$ never terminates.
  We build this execution as follows:

  [Construction.]
  Consider some empty execution $\run$.
  Repeat for all $i>=1$:
  Let $T_i$ be an update of $x$, if $i$ is odd, and $y$ otherwise.
  Start the execution of transaction $T_i$.
  Since no concurrent transaction is write-conflicting with $T_i$ in $\run$ and updates are obstruction-free,
  there must exist an extension $\run.U_i$ of $\run$ during which $T_i$ commits.
  Assign to $\run$ the value of $\run.U_i$.
  Execution $\run$ is flippable.
  Hence, \reflem{imp:2} tells us that transaction $T_a$ does not terminate in this execution.
  Consider the two following cases:
  (Case $i=1$)
  Because \procSet satisfies non-trivial \SI,
  there exists an extension $\run'$ of $\run$
  in which transaction $T_a$ reads at least version $x_1$ of object $x$.
  Notice that execution $\run'$ is of the form $U_1.V_2.s.\ldots$ where
  \begin{inparaenum}[\em (i)]
  \item all steps in $V_2$ are made by processes in $\replicaSetOf{x}$, and
  \item $s$ is the first step such that $\refMapOf{U_1.V_2.s.}=r_1(x_0).w_1(x_1).c_1.r_a(x_1)$.
  \end{inparaenum}
  Assign $U_1.V_2$ to $\run$ .
  (Case $i>2$)
  Consider any step $V_{i+1}$ to terminate $T_a$ and append it to $\run$.

  Execution $\run$ is admissible by \procSet.
  Hence $\refMapOf{\run}$ is acceptable.
  However, in this history transaction $T_a$ does not terminate.
  This contradicts the fact that queries are wait-free.
\end{proof}

\SCONSa disallows some real time orderings between operations accessing different objects.
Our last theorem shows that this property cannot be maintained under \GPR.
\begin{theorem}
  \labtheo{imp:3}
  No asynchronous failure-free \GPR system implements \SCONSa.
\end{theorem}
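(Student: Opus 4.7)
I would prove this by a direct indistinguishability argument in the spirit of \reftheo{imp:1}. Pick two objects $x$ and $y$ with disjoint replica sets, an update $T_1$ writing only $x$, an update $T_2$ writing only $y$ (each coordinator placed inside $\replicaSetOf{\writeSetOf{T_j}}$), and a query $T_a$ that reads $x$ then $y$ whose coordinator lies in $\replicaSetOf{x}$.

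First I build a reference execution $\beta = U_1.U_2.V_a^x.V_a^{y,c}$. The update sub-executions $U_1$ and $U_2$ exist by \OFU since no concurrent write-conflicting transaction is ever present; \reflem{imp:0} applied in turn to each of $T_a$'s reads yields sub-executions $V_a^x$ and $V_a^{y,c}$ during which $T_a$ observes at least $x_1$ and at least $y_2$ respectively, and \WFQ lets me append a commit step of $T_a$. By \reflem{imp:1}, $V_a^x$ is confined to $\replicaSetOf{x}$ (at that prefix $T_a$ has accessed only $x$) while $U_2$ is confined to $\replicaSetOf{y}$, so these two process sets are disjoint. A standard indistinguishability argument then shows that $\alpha = U_1.V_a^x.U_2.V_a^{y,c}$ is also admissible by $\procSet$ and that every process reaches the same local state at corresponding points, so $T_a$ still reads $x_1$ and $y_2$ in $\refMapOf{\alpha}$. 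But in that history $r_a(x_1)$ now precedes $c_2$ while $r_a(y_2)$ belongs to the history, directly violating \SCONSa.

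The main delicate step, which I would write most carefully, is justifying that the permutation of $V_a^x$ and $U_2$ preserves what $T_a$ subsequently reads during $V_a^{y,c}$: one must argue that the local states of $\coordOf{T_a}$ and of every process in $\replicaSetOf{y}$ are identical after $U_1.V_a^x.U_2$ and after $U_1.U_2.V_a^x$. This follows from the disjointness of the process sets touched by $V_a^x$ and by $U_2$, but it has to be spelled out by invoking the same swap argument already used inside \reftheo{imp:1}.
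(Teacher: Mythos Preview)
Your argument is correct and follows the same indistinguishability strategy as the paper, combining disjoint replica sets with \reflem{imp:0} and \reflem{imp:1}. The paper's version is slightly more economical: it uses a single update $T_1$ on $y$ and lets $T_a$ read the initial version $x_0$ (so your transaction $T_1$ on $x$ is unnecessary), and it swaps $U_a$ and $U_1$ first to \emph{enable} the freshness lemma (placing $c_1$ before $T_a$'s first operation so that \reflem{imp:0} yields $r_a(y_1)$) and then swaps back to exhibit the \SCONSa violation, whereas you build the fresh-read execution $\beta$ directly and perform a single swap to reach the violating order.
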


\begin{proof}
  (By contradiction.)
  Consider two distinct objects $x$ and $y$ such that $\replicaSetOf{x}$ and $\replicaSetOf{y}$ are disjoint.
  Let $T_1$ be an update accessing $y$, and $T_a$ be a query reading both objects.
  
  Obviously, history $h=r_a(x_0)$ is acceptable.
  Note $U_a$ a sequence of steps satisfying $U_a=\refMapOf{r_a(x_0)}$.
  Because \procSet supports obstruction-free updates,
  we know the existence of an extension $U_a.U_1$ of $U_a$ such that $\refMapOf{U_1}=r_1(y_0).w_1(y_1).c_1$.
  By \reflem{imp:1}, we observe that $U_a.U_1$ is indistinguishable from $U_1.U_a$.
  Then by \reflem{imp:0}, there must exist an extension $U_1.U_a.V_a$ of $U_1.U_a$ admissible by \procSet
  and such that $\refMapOf{V_a}=r_a(y_1).c_a$.
  Finally, since $U_a.U_1$ is indistinguishable from $U_1.U_a$ and $U_1.U_a.V_a$ is admissible,
  $U_a.U_1.V_a$ is admissible too.
  The history $\refMapOf{U_a.U_1.V_a}$ is not in \SCONSa.
  Contradiction.
\end{proof}

As a consequence of the above, no asynchronous system, even if it is failure-free, can support both \GPR and \SI.
In particular, even if the system is augmented with failure detectors \cite{CT96}, 
a common approach to model partial synchrony, \SI cannot be implemented under \GPR.
This fact strongly hinders the usage of \SI at large scale.
In the following sections, 
we further discuss implications of this impossibility result
then we introduce a novel consistency criterion to overcome it.

\newpage 
\section{Discussion}
\labsection{corollaries}

In this section, we discuss the consequences of our impossibility results, 
with an emphasis on other consistency criteria than \SI.

\subsection{Declaring the Read-set in Advance}
\labsection{corollaries:ser:declarin}

When a transaction declares objects it accesses \emph{in advance}, 
a \GPR system can install a strictly consistent and monotonic snapshot 
just after the start of the transaction.
As a consequence, such an assumption sidesteps our impossibility result.
This is the approach employed in the \SI protocol of \citet{Armendariz-Inigo2008}.
Still, this protocol makes use of atomic broadcast to install a snapshot.
We obtain a \GPR system that supports \SI by replacing this group communication primitive by a genuine atomic multicast.

\subsection{Strict Serializability and Opacity}
\labsection{corollaries:sser}

We observe that \reftheo{imp:3} also holds if we consider the following (classical) definition of obstruction-free updates
in which both read-write and write-write conflicts are taken into account:

\begin{itemize}
\item \textbf{Obstruction-free Updates (\OFU-a).}
  For every update transaction $T_i$,  
  if \coordOf{T_i} is correct then $T_i$ eventually terminates.
  Moreover, if $T_i$ does not \emph{conflict} with some concurrent transaction then $T_i$ eventually commits.
\end{itemize}

As a consequence,  neither strict serializability \cite{Papadimitriou1979}, nor opacity \cite{opacity} is attainable under \GPR.
In the case of opacity, this answers negatively to a problem recently posed by Peluso et al. \cite{RomanoWTTM12}.

\subsection{Serializability (\SER)}
\labsection{corollaries:ser}

\subsubsection{Permissiveness}
\labsection{corollaries:ser:permissiveness}

A transactional system \procSet is \emph{permissive} with respect to a consistency criterion $\mathcal{C}$ 
when every history $h \in \mathcal{C}$ is acceptable by \procSet.
Permissiveness \cite{GuerraouiHS08} measures the optimal amount of concurrency a system allows.
If we consider again histories $h_1$ and $h_2$ in the proof of \reftheo{imp:1},
we observe that both histories are serializable.
Hence, every system  permissive with respect to \SER accepts both histories.
By relying on the very same argument as the one we exhibit to close the proof of \reftheo{imp:1},
we conclude that no transactional system is both \GPR and permissive with respect to \SER.
For instance, P-Store \cite{Schiper2010}, a \GPR protocol that ensures \SER, 
does not accept history $h_{10}=r_1(x_0).w_1(x_1).c_1.r_2(x_0).r_2(y_0).w_2(y_2).c_2$.

\subsubsection{Wait-free Queries.}
\labsection{corollaries:ser:wfq}

Under \SI, a query never forces an update to abort.
This key feature of \SI greatly improves performance.
Most recent transactional systems that support \SER 
(e.g., \cite{AAAS97,SAA98,KA00,FI01,DBSM03,MJKA05,LKPJ05,LPM07,PF08, Schiper2010, PedoneDSN12, Peluso2012a})
offer such a progress property as well as positive-freshness acceptance:%
\footnote{
  \reflem{imp:0} proves positive-freshness acceptance for \SI under standard assumptions (\OFU and \WFQ).
  In the case of \SER, this property is a feature of the input acceptance of the protocol.
}

\begin{itemize}
\item \textbf{Obstruction-free Updates (OFU-b).}
  For every update transaction $T_i$,  
  if \coordOf{T_i} is correct then $T_i$ eventually terminates.
  Moreover, if $T_i$ does not conflict with some concurrent \emph{update} transaction then $T_i$ eventually commits.
\item \textbf{Positive Freshness Acceptance.} 
  Consider an acceptable history $h$ and a transaction $T_i$ pending in $h$
  such that the next operation invoked by $T_i$ is a read on some object $x$.
  Note $x_j$  the latest committed version of $x$ prior to the first operation of $T_i$ in $h$.
  Let $\run$ be an execution satisfying $\mathfrak{F}(\run)=h$.
  If $h.r_i(x_j)$ belongs to $\SER$ and there is no concurrent write-conflicting update transaction with $T_i$,
  then 
  there exists an execution $\run'$ extending $\run$ such that in history $\refMapOf{\run'}$,
  transaction $T_i$ reads at least  (in the sense of $\versionOrder_h$) version $x_j$ of $x$.
\end{itemize}

When the two above progress properties holds, \reftheo{imp:1} applies to \SER transactional systems,
implying a choice between \WFQ and \GPR.
The P-Store transactional system of \citet{Schiper2010} favors \GPR over \WFQ.
On the contrary, the protocol of \citet{PedoneDSN12} ensures \WFQ but is not \GPR.
Recently, \citet{Peluso2012a} have proposed a \GPR algorithm that supports both \SER and \WFQ in the failure-free case.
This protocol sidesteps the impossibility result by dropping obstruction-freedom
for updates in certain scenarios.%
\footnote{
  In more details, this algorithm numbers every version with a scalar.
  If a transaction $T_i$ first reads an object $x$ then updates an object $y$, 
  in case the version of $x$ is smaller than the latest version of $y$, say $y_k$,
  $T_i$ will not be able to read $y_k$ , and it will thus abort.
}
%% A catch-up mechanism ensures that if $T_i$ is re-executed at the same coordinator, 
%% provided that no transactions updated neither $x$ nor $y$ in the meantime, $T_i$ will commit.

\subsection{Parallel Snapshot Isolation (PSI)}
\labsection{corollaries:psi}

Recently, \citet{Sovran2011} have introduced a weaker consistency criterion than \SI named parallel snapshot isolation (\PSI).
\PSI  allow snapshots to be non-monotonic, but still require them to ensure \SCONSa.
Sovran et al. justify the use of \PSI in Walter
by the fact that \SI is too expensive in a geographically distributed environment \cite[page 4]{Sovran2011}.
Our impossibility result establishes that, in order to scale,
a transactional system needs supporting both non-monotonic \emph{and} non-strictly consistent snapshots.
Thus, while being more scalable than \SI, \PSI yet cannot be implemented in a \GPR system.

\newpage
\section{\NMSILongName}
\labsection{wsi}

We just showed that the \SI requirements of strictly consistent (\SCONS)
and monotonic (\MON) snapshots hurt scalability, as they are impossible
with \GPR.
To overcome the impossibility, this section presents a slightly weaker criterion, called \NMSILongName~(\NMSI).

\NMSI retains the most important properties of \SI, namely snapshots
are consistent, a read-only transaction can commit locally without
synchronization, and two concurrent conflicting updates do not both commit.
However, \NMSI allows non-strict, non-monotonic snapshots.
For instance, history $h_7$ in \refsection{reconstruct:mon}, which is
not in \SI, is allowed by \NMSI.
Formally, we define \NMSI as follows:

\begin{definition*}[\NMSILongName]
  A history $h$ is in $\NMSI$ iff $h$ belongs to $\ACA \inter \CONS \inter \WCF$.
\end{definition*}

To clarify our understanding of \NMSI, \reftab{anomalies} compares it to
well-known approaches, based on the anomalies an application might observe.
In addition to the classical anomalies \cite{Berenson1995,Adya99} (dirty
reads, non-repeatable reads, read skew, dirty writes, lost updates, and
write skew), we also consider the following:
(Non-Monotonic Snapshots) snapshots taken by transactions are not monotonically ordered,
and
(Real-Time Causality Violation) a transaction $T_2$ observes the effect of
some transaction $T_1$, but does not observe the effect of all the transactions that
precede (in real time) $T_1$.

\vspace{0.5em}
\begin{table*}[h!]
  \footnotesize
  \centering
  \begin{tabular}{ @{}c|ccccc}
    & \multicolumn{1}{p{16mm}}{\centering Strict Serializablity \cite{Papadimitriou1979}} 
    & \multicolumn{1}{p{16mm}}{\centering Serializablity \cite{Berenson1995}}
    & \multicolumn{1}{p{16mm}}{\centering Update Serializablity \cite{GM80}}
    & \multicolumn{1}{p{13mm}}{\centering Snapshot Isolation}  
    & \multicolumn{1}{p{10mm}}{\centering \NMSI} \\
    \hline
    \multicolumn{1}{c|}{Dirty Reads} & x & x & x & x & x \\
    \multicolumn{1}{c|}{Non-repeatable Reads} & x & x & x & x & x\\
    \multicolumn{1}{c|}{Read Skew} & x & x & x & x & x\\
    \hline
    \multicolumn{1}{c|}{Dirty Writes} & x & x & x & x & x  \\
    \multicolumn{1}{c|}{Lost Updates} & x & x & x & x & x  \\
    \multicolumn{1}{c|}{Write Skew} & x & x & x & - & -  \\
    \hline
    \multicolumn{1}{c|}{Non-Monotonic Snapshots} & x & x & - & x & -  \\
    \multicolumn{1}{c|}{Real-time Causality Violation} & x & - & - & x & -  \\    
  \end{tabular}
  \caption{Comparing consistency criterion by their anomalies  (x: disallowed)}
  \labtab{anomalies}
\end{table*}
\vspace{0.5em}

Write Skew, the classical anomaly of \SI, is observable under \NMSI.
(\citet{Cahill2008} show how an application can easily avoid it.)
Because \NMSI does not ensure \SCONSb, it suffers
the Real-Time Causality Violation anomaly.
Note that it is not new, as it occurs with serializability as well;
this argues that it is not considered a problem in practice.
Non-Monotonic Snapshots occur both under \NMSI and update serializability.
Following \citet{GM80}, we believe that this is a small price to pay for improved performance.

\newpage
\section{Protocol}
\labsection{protocol}

We now describe \jessy, a scalable transactional system that implements \NMSI with \GPR.
Because distributed locking policies do not scale \cite{233330,1048870},
\jessy employs deferred update replication:
transactions are executed optimistically, then certified by a termination protocol.
\jessy uses a novel clock mechanism to ensure that snapshots are both fresh and consistent,
while preserving wait-freedom of queries and genuineness.
We describe it in the next section.

\subsection{Building Consistent Snapshots} 
\labsection{protocol:snap}

To compute consistent snapshots, \jessy makes use of a novel data type called \emph{dependence vectors}.
Each version of each object is assigned its own dependence vector.
The dependence vector of some version $x_i$ reflects all the versions read by $T_i$, or
read by transactions that precede $T_i$, as well as the writes of $T_i$ itself:

\begin{definition*}[Dependence Vector]
  A dependence vector is a function \dv that maps every read (or write) operation $o(x)$ in a history $h$
  to a vector $\dvOf{o(x)} \in \naturalSet^{\cardinalOf{\objectSet}}$ such
  that:
  \begin{displaymath}
    \hspace{-0.5em}%
      \begin{array}{l}
        \dvOf{r_i(x_0)} = 0^{\cardinalOf{\objectSet}} \\
        \dvOf{r_i(x_j)} = \dvOf{w_j(x_j)} \\
        \dvOf{w_i(x_i)} = \mathit{max}~\{ \dvOf{r_i(y_j)} : y_j \in \readSetOf{T_i} \}  \\
        ~~~~~~~~~~~~~~~~~~~ ~+~ \Sigma_{z_i \in \writeSetOf{T_i}} ~1_z
      \end{array}
  \end{displaymath}
  where
  $\mathit{max}~\mathcal{V}$ is the vector containing for each dimension $z$,
  the maximal $z$ component in the set of vectors $\mathcal{V}$,
  and $1_z$ is the vector that equals $1$ on dimension $z$ and $0$ elsewhere.
\end{definition*}

To illustrate this definition, consider history $h_{10}$ below.
In this history, transactions $T_1$ and $T_2$ update objects $x$ and $y$
respectively, while transaction $T_3$ reads $x$, then updates $y$.
\begin{figure}[!ht]
  \vspace{-1em}
  \centering
  \begin{tikzpicture}
    \draw (0,2) node[right]{\small $h_{10}=$};

    \draw (1,2) node[right]{\small $r_1(x_0).w_1(x_1).c_1$};
    \draw (1,1) node[right]{\small $r_2(y_0).w_2(y_2).c_2$};

    \draw [->] (4,2) -- (4.45,1.6);
    \draw [->] (4,1) -- (4.45,1.4);

    \draw (4.5,1.5) node[right]{\small $r_3(x_1).r_3(y_2).w_3(y_3).c_3$};

  \end{tikzpicture}
  \vspace{-1em}
\end{figure}

\noindent%
The dependence vector of $x_1$ equals $\langle 1,0 \rangle$,
and of $y_1$ equals $\langle 0, 1 \rangle$.
Since transaction $T_3$ reads $x$ then updates $y$, 
this implies that dependence vector of $y_3$ equals
$\langle 1,2 \rangle$.

\begin{definition*}[Compatibility Relation]
Consider a transaction $T_i$ and two versions $x_j$ and $y_l$ read by $T_i$.
We shall say that $x_j$ and $y_l$ are \emph{compatible} for  $T_i$, written $\isCompatible{T_i,x_j,y_l}$,
when both $\dvOf{r_i(x_j)}[x] \geq \dvOf{r_i(y_l)}[x]$ and $\dvOf{r_i(y_l)}[y] \geq \dvOf{r_i(x_j)}[y]$ hold.
\end{definition*}

Using the compatibility relation, we can prove that dependence vectors fully characterize consistent snapshots.
% Using dependence vectors, \jessy can compute a consistent snapshot for
% transaction $T_i$ by ensuring that every pair of versions $x_l$
% and $y_j$ read by $T_i$ satisfy the following rule:
% $\dvOf{r_i(x_l)}[x] \geq \dvOf{r_i(y_j)}[x]$.
First of all, we show in \reflem{dv:1} that if transaction $T_i$ depends on transaction $T_j$ 
then the dependence vector of any object written by $T_i$ is greater than the dependence vector of any object written by $T_j$. 

\begin{lemma}
  \lablem{dv:1}
  Consider
  a history $h$ in \NMSI, 
  and two transactions $T_i$ and $T_j$ in $h$.
  Then,
  \begin{equation*}
    T_i \depend T_j
    \iff
    \forall x,y \in Objects :
    \forall w(x),w(y) \in T_i \times T_j:
    \dvOf{w_i(x_i)} > \dvOf{w_j(y_j)}    
  \end{equation*}
\end{lemma}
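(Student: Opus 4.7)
The plan is to reduce both directions to a counting characterization of dependence vectors. Writing $D(T_k) := \{T_k\} \cup \{T_l : T_k \depend T_l\}$, I would first establish the auxiliary claim that for every update transaction $T_k$ and every object $z$,
\begin{equation*}
\dvOf{w_k(y_k)}[z] \;=\; \big|\{T_l \in D(T_k) : l \neq 0 \wedge z \in \writeSetOf{T_l}\}\big|.
\end{equation*}
This is proved by induction on the construction of $\dv$. The base case, when $T_k$ only reads initial versions, is immediate since both sides reduce to the indicator of $z \in \writeSetOf{T_k}$. The subtle step is reconciling the $\max$ in the recursive definition with an honest count: among all transitive predecessors of $T_k$ that write $z$, any two are write-conflicting and hence, by $\WCF$, $\depend$-comparable; so they form a chain, and this chain is already contained in the $D$-set of its $\depend$-maximal element, which is in turn contained in $D(T_m)$ for some immediate predecessor $T_m$ of $T_k$. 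Thus the component-wise $\max$ over immediate predecessors picks up the full count.

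\textbf{Forward direction.} Suppose $T_i \depend T_j$. Transitivity of $\depend$ gives $D(T_j) \subseteq D(T_i)$, and antisymmetry of $\depend$ (which follows from $\ACA$ together with the observation that $T_a \MustHave T_b$ forces $c_b <_h c_a$) gives $T_i \notin D(T_j)$. Applying the counting formula coordinate-wise yields $\dvOf{w_i(x_i)}[w] \geq \dvOf{w_j(y_j)}[w]$ for every $w$; for $w = x$ with $x \in \writeSetOf{T_i}$, the transaction $T_i$ itself is counted on the left but not on the right, so the inequality is strict on the $x$-coordinate. Since $\dvOf{w_i(x_i)}$ does not depend on the particular $x \in \writeSetOf{T_i}$ (and similarly for $T_j$), this establishes $\dvOf{w_i(x_i)} > \dvOf{w_j(y_j)}$ for every admissible $x$ and $y$.

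\textbf{Backward direction (contrapositive).} Suppose $\neg(T_i \depend T_j)$, with $T_i \neq T_j$, and fix $y \in \writeSetOf{T_j}$. For any $T_l \in D(T_i)$ that writes $y$, the pair $(T_l, T_j)$ is write-conflicting, so by $\WCF$ they are $\depend$-comparable; but $T_l \depend T_j$ is excluded, since combined with $T_i \depend T_l$ (or $T_i = T_l$) it would contradict $\neg(T_i \depend T_j)$. Hence $T_j \depend T_l$, i.e.\ $T_l \in D(T_j)$. So the $y$-writers in $D(T_i)$ inject into the $y$-writers in $D(T_j)$, and in addition $T_j \in D(T_j) \setminus D(T_i)$. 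The counting formula then yields $\dvOf{w_j(y_j)}[y] > \dvOf{w_i(x_i)}[y]$, contradicting the assumed $\dvOf{w_i(x_i)} > \dvOf{w_j(y_j)}$.

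\textbf{Main obstacle.} The principal difficulty lies in the counting characterization, specifically in showing that the $\max$ over immediate predecessors really captures the total number of transitive-predecessor writes to each object. The crucial point is that $\WCF$ forces any two such predecessor writers of the same object to be $\depend$-comparable, collapsing what could have been independent branches into a single chain already realized by one arm of the $\max$.
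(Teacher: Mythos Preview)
Your argument is correct and takes a genuinely different route from the paper. The paper proves each direction directly from the recursive definition of $\dv$: for $(\Rightarrow)$ it unwinds one $\MustHave$-step to get $\dvOf{w_i(x_i)} \geq \dvOf{w_j(y_j)} + 1_x$ and then inducts along the $\depend$-chain; for $(\Leftarrow)$ it chases the $y$-component of $\dvOf{w_i(x_i)}$ back through some read $r_i(\cdot_{j'})$, invokes $\WCF$ on $T_{j'}$ and $T_j$, and closes the dependency by a case split. You instead factor everything through the global identity $\dvOf{w_k(\cdot)}[z] = |\{T_l \in D(T_k) : z \in \writeSetOf{T_l}\}|$, after which both directions reduce to elementary set-inclusion arguments on the downward-closed sets $D(\cdot)$. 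The payoff of your route is conceptual clarity: the counting formula says exactly what a dependence vector \emph{is}, and the lemma then becomes almost tautological. The paper's route is more local and avoids stating the auxiliary formula, but its backward direction is noticeably more awkward than yours. One small point worth making explicit in your inductive step: besides showing that the chain of $z$-writing predecessors sits inside $D(T_m)$ for \emph{some} immediate predecessor $T_m$ (your lower bound on the $\max$), you also need that for \emph{every} immediate predecessor $T_{m'}$ one has $D(T_{m'}) \subseteq D(T_k)\setminus\{T_k\}$ by antisymmetry, so that no arm of the $\max$ can exceed the chain length; this is routine but completes the $\max = $ count identification.
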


\begin{IEEEproof}
  The proof goes as follows:
  \begin{compactitem}
  \item ($\Rightarrow$)
    First consider that $T_i \MustHave T_j$ holds.
    By definition of relation $\MustHave$, we know that for some object $z$, 
    operations $r_i(z_j)$ and $w_j(z_j)$ are in $h$. 
    According to definition of function \dv we have: $\dvOf{w_i(x_i)} \geq \dvOf{r_i(z_j)} + 1_x$.
    Besides, always according to the definition of \dv,
    it is true that the following equalities hold: $\dvOf{r_i(z_j)}=\dvOf{w_j(z_j)}= \dvOf{w_j(y_j)}$.
    Thus, we have: $\dvOf{w_i(x_i)} > \dvOf{w_j(y_j)}$.
    The general case  $T_i \depend T_j$ is obtained by applying inductively the previous reasoning.
  \item ($\Leftarrow$)
    From the definition of function \dv, it must be the case that $r_i(y'_{j'})$ is in $h$ with $j' \neq 0$.
    We then consider the following two cases:
    (Case $j' = j$) By definition of relation $\MustHave$, $T_i  \MustHave T_j$ holds.
    (Case $j' \neq j$)
    By construction, we have that: $T_i \MustHave T_{j'}$.
    By definition of function \dv, we have that
    $\dvOf{r_{j'}(y_{j'})} = \dvOf{w_{j'}(y_{j'})}$.
    Since $\dvOf{w_i(x_i)} > \dvOf{w_j(y_j)}$ holds, 
    $\dvOf{w_{j'}(y_{j'})}[y] \geq   \dvOf{w_{j}(x_{j})}[y]$ is true.
    Both transactions $T_j$ and $T_{j'}$ write $y$.
    Since $h$ belongs to \NMSI, 
    it must be the case that either $T_j \depend T_{j'}$ or that $T_{j'} \depend T_j$ holds.
    If $T_j \depend T_{j'}$ holds, then we just proved that 
    $\dvOf{w_{j}(y_{j})} > \dvOf{w_{j'}(y_{j'})}$ is true.
    A contradiction.
    Hence necessarily $T_{j'} \depend T_j$ holds.
    From which we conclude that $T_i \depend T_j$ is true.
  \end{compactitem}
\end{IEEEproof}

\noindent
The following theorem shows that dependence vectors enable taking consistent snapshots.

\begin{theorem}
  \labtheo{protocol:1} 
%   Consider a  history $h$ in \NMSI, and a transaction $T_i$ in $h$.
%   Transaction $T_i$ sees a consistent snapshot during $h$ if an only if,
%   for every pair of versions $x_l$ and $y_j$ read by $T_i$,
%   $\dvOf{r_i(x_l)}[x] \geq \dvOf{r_i(y_j)}[x]$ holds.
  Consider a  history $h$ in \NMSI and a transaction $T_i$ in $h$.
  Transaction $T_i$ sees a consistent snapshot in $h$
  \emph{if, an only if,}
  every pair of versions $x_j$ and $y_l$ read by $T_i$ is compatible.
\end{theorem}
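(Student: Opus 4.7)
\medskip
\noindent
\textbf{Proof plan.} My approach is to prove both directions by contradiction, using \reflem{dv:1} as the bridge between the dependence relation $\depend$ and the componentwise ordering of dependence vectors. I will also rely on the \WCF and \ACA clauses built into \NMSI, and on \reflem{reconstruct:3} to convert version order into commit order on writes to the same object.

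For the forward direction, suppose $T_i$ observes a consistent snapshot yet some pair $(x_j, y_l)$ of versions read by $T_i$ is incompatible; without loss of generality $\dvOf{w_j(x_j)}[x] < \dvOf{w_l(y_l)}[x]$. The key first step is an ``unfolding'' of $\dvOf{w_l(y_l)}[x]$: by a straightforward induction on the recursive definition of $\dv$, I exhibit a transaction $T_m$ writing $x$ such that either $T_m = T_l$ or $T_l \depend T_m$, and such that $\dvOf{w_m(x_m)}[x] = \dvOf{w_l(y_l)}[x]$. Since $T_j$ and $T_m$ both write $x$, \WCF leaves three subcases; the cases $T_m = T_j$ and $T_j \depend T_m$ are both ruled out by \reflem{dv:1}, because they force $\dvOf{w_j(x_j)}[x] \geq \dvOf{w_m(x_m)}[x]$. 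Hence $T_m \depend T_j$. Combining $T_i \MustHave T_l$ with the chain from $T_l$ to $T_m$ gives $T_i \depend T_m$, so the assumed consistency of $T_i$'s snapshot demands $x_m \versionOrder_h x_j$. On the other hand, $T_m \depend T_j$ together with \ACA yields $c_j <_h c_m$, from which totality of $\versionOrder_h$ on writes to $x$ combined with \reflem{reconstruct:3} forces $x_j \versionOrder_h x_m$, a contradiction.

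The backward direction follows an analogous template. Assuming every pair of versions read by $T_i$ is compatible, I suppose for contradiction that $T_i$'s snapshot is not consistent: there exist $T_k$ and $x_j$ such that $T_i \depend T_k$, $T_k$ writes $x_k$, $T_i$ reads $x_j$ with $T_j \neq T_k$, and $x_k \not\versionOrder_h x_j$. By totality of $\versionOrder_h$ on $x$-writes, $x_j \versionOrder_h x_k$ holds. By \WCF either $T_j \depend T_k$ or $T_k \depend T_j$; the former would give $x_k \versionOrder_h x_j$ by the same argument as in the forward case, so $T_k \depend T_j$ must hold, and \reflem{dv:1} then gives $\dvOf{w_k(x_k)}[x] > \dvOf{w_j(x_j)}[x]$. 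Since $T_i \depend T_k$, I unfold this dependency to some $y_l \in \readSetOf{T_i}$ with $T_l = T_k$ or $T_l \depend T_k$, so applying \reflem{dv:1} once more yields $\dvOf{w_l(y_l)}[x] \geq \dvOf{w_k(x_k)}[x] > \dvOf{w_j(x_j)}[x]$, contradicting the compatibility of $(x_j, y_l)$.

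The main obstacle I anticipate is the unfolding step in the forward direction: pinpointing an $x$-writer $T_m$ in $T_l$'s dependency closure whose $x$-component matches $\dvOf{w_l(y_l)}[x]$ exactly. This requires an induction on the recursive definition of $\dv$, using at each level that the $x$-component is either incremented (when the current transaction itself writes $x$) or inherited from the read that realises the maximum. A related subtlety is that \reflem{dv:1} is a \emph{componentwise} statement, so a bare inequality on a single coordinate does not by itself license any conclusion about $\depend$; reducing to a genuine $x$-writer $T_m$ is precisely what lets the lemma bite on the $x$ coordinate.
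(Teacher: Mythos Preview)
Your proposal is correct and follows essentially the same approach as the paper: both directions proceed by contradiction, hinge on the same ``unfolding'' step (locating an $x$-writer $T_m$ in the dependency closure whose $x$-component realises the relevant value), and use \reflem{dv:1} together with \WCF to orient the dependency between the two $x$-writers. Your write-up is in fact slightly more explicit than the paper's in two places---you spell out the intermediate read $y_l$ in the backward direction (the paper hides this in a ``short induction''), and you justify the step from $T_m \depend T_j$ to $x_j \versionOrder_h x_m$ via \ACA and \reflem{reconstruct:3}, whereas the paper simply asserts it---but the argument is the same.
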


\begin{IEEEproof}
  The proof goes as follows:
  \begin{compactitem}
  \item ($\Rightarrow$)
    By contradiction.
    Assume the existence of two versions $x_l$ and $y_{j}$ in the snapshot of $T_i$
    such that $\dvOf{r_i(x_l)}[x] < \dvOf{r_i(y_{j})}[x]$ holds.
    By definition of function \dv, we have $\dvOf{r_i(x_{l})} = \dvOf{w_{l}(x_{l})}$
    and $\dvOf{r_i(y_{j})} = \dvOf{w_{j}(y_{j})}$.
    Hence, $\dvOf{w_l(x_l)}[x] < \dvOf{w_{j}(y_{j})}[x]$ holds.
    Again from the definition of function \dv, 
    there exists a transaction $T_{k \neq 0}$ writing on $x$ such that 
    (i) $\dvOf{w_{j}(y_{j})} \geq \dvOf{w_{k}(x_{k})}$ 
    and (ii) $\dvOf{w_{j}(y_{j})}[x] = \dvOf{w_{k}(x_{k})}[x]$.
    Applying \reflem{dv:1} to (i), we obtain $T_j \depend T_k$.
    From which we deduce that $T_i \depend T_k$.
    Now since both transactions $T_l$ and $T_k$ write $x$ and $h$ belongs to \NMSI,
    $T_l \depend T_k$ or $T_k \depend T_l$ holds.
    From (ii) and $\dvOf{w_l(x_l)}[x] < \dvOf{w_{j}(y_{j})}[x]$,
    we deduce that $\dvOf{w_{l}(x_{l})}[x] < \dvOf{w_k(x_k)}[x]$.
    As a consequence of \reflem{dv:1}, $T_k \depend T_l$ holds.
    Hence $x_l \versionOrder_h x_k$.
    But $T_i \depend T_k$ and $r_i(x_l)$ is in $h$.
    It follows that $T_i$ does not read a consistent snapshot.
    Contradiction.

  \item ($\Leftarrow$)
    By contradiction.
    Assume that there exists
    an object $x$ and a transaction $T_k$ on which $T_i$ depends
    such that 
    $T_i$ reads version $x_j$, $T_k$ writes version $x_k$, and $x_j \versionOrder_h x_k$.    
    First of all, since $h$ is in \NMSI, one can easily show that $T_k \depend T_j$.
    Since $T_k \depend T_j$, 
    \reflem{dv:1} tells us that $\dvOf{w_{k}(x_{k})} > \dvOf{w_j(x_j)}$ holds.
    Since $T_i \depend T_k$ holds, a short induction on the definition of function \dv tells us that
    $\dvOf{r_{i}(x_{j})}[x] \geq \dvOf{w_{k}(x_{k})}|x]$ is true.
    From which we obtain that: $\dvOf{r_{i}(x_{j})}[x] \geq \dvOf{w_{k}(x_{k})}[x] > \dvOf{w_j(x_j)}[x] = \dvOf{r_{i}(x_{j})}[x]$.
    Contradiction.

  \end{compactitem}

\end{IEEEproof}

% Although dependence vectors may be large, in the common case they are sparse, and thus
% can be implemented efficiently.
% Furthermore, the size can be reduced, and dependence approximated
% safely, by coarsening the granularity, grouping objects into disjoint
% partitions and serialising updates to a group as if it was a single
% larger object \cite{psutra-hotcdp12}.

Despite that in the common case dependence vectors are sparse, they might be large for certain workloads.
For instance, if transactions execute random accesses,
the size of each vector tends asymptotically to the number of objects in the system.
To address the above problem, \jessy employs a mechanism to approximate dependencies safely, 
by coarsening the granularity, grouping objects into disjoint partitions 
and serializing updates in a group as if it was a single larger object.
We cover this mechanism in what follows.
% We detail this mechanism in what follows.
\subsection{Partitioned Dependence Vector}

Consider some partition $\mathcal{P}$ of \objectSet.
For some object $x$, note  $\mathit{P}(x)$ the partition $x$ belongs to,
and by extension, for some $S \subseteq \objectSet$, note $\mathit{P}(S)$ the set $\{ \mathcal{P}(x) ~|~ x \in S\}$.
A partition is \emph{proper} when updates inside the same partition are serialized, that is,
for every $X \in \mathcal{P}$ and every two writes $w_i(x_i)$, $w_j(y_j)$ with $\mathcal{P}(x)=\mathcal{P}(y)$,
either $w_i(x_i) <_{h} w_j(y_j)$ or $w_j(y_j) <_h w_i(x_i)$ holds.

Now, consider some history $h$, and for every object $x$ replace every operation 
$o_i(x)$ in $h$ by $o_i(\mathcal{P}(x))$.
We obtain a history that we note $h^{\mathcal{P}}$.
The following result linked the consistency of $h$ to the consistency of $h^{\mathcal{P}}$:

\begin{proposition}
  \labprop{protocol:1}
  Consider some history $h$.
  If
  $\mathcal{P}$ is a proper partition of \objectSet for $h$
  and history $h^{\mathcal{P}}$ belongs to \CONS,
  then  $h$ is in \CONS.
\end{proposition}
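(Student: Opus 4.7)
My plan is to lift the consistent-snapshot property from $h^{\mathcal{P}}$ back to $h$ along the renaming $x \mapsto \mathcal{P}(x)$, using the proper-partition hypothesis to ensure the renaming preserves (and does not require) any additional orderings.

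First, I would verify that $h^{\mathcal{P}}$ is well-formed as a history. The nontrivial property is property (4): any two writes on the same object must be ordered. In $h^{\mathcal{P}}$, two writes $w_i(\mathcal{P}(x)_i)$ and $w_j(\mathcal{P}(y)_j)$ are on the ``same object'' exactly when $\mathcal{P}(x)=\mathcal{P}(y)$, which is precisely the case in which the proper-partition hypothesis guarantees an order in $h$. Hence $<_{h^{\mathcal{P}}}$ is simply $<_h$ on the renamed operations, with no fresh pairs introduced.

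Next, I would establish that the dependency relation lifts: for any $T_i,T_k$, one has $T_i \depend T_k$ in $h$ implies $T_i \depend T_k$ in $h^{\mathcal{P}}$. This is immediate from the definition of \MustHave, since $r_i(x_j) \in h$ with $w_j(x_j) \in h$ becomes $r_i(\mathcal{P}(x)_j)\in h^{\mathcal{P}}$ with $w_j(\mathcal{P}(x)_j) \in h^{\mathcal{P}}$; the transitive closure then transfers verbatim.

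Now the main argument. Pick any transaction $T_i$ in $h$, any object $x$, and any $T_j,T_k$ with $r_i(x_j) \in h$, $w_k(x_k) \in h$, and $T_i \depend T_k$; the goal is $x_k \versionOrder_h x_j$. By the lifting step, $T_i \depend T_k$ holds in $h^{\mathcal{P}}$ as well, and in $h^{\mathcal{P}}$ we have $r_i(\mathcal{P}(x)_j)$ and $w_k(\mathcal{P}(x)_k)$. Applying $h^{\mathcal{P}} \in \CONS$ at the partition-object $\mathcal{P}(x)$ yields $\mathcal{P}(x)_k \versionOrder_{h^{\mathcal{P}}} \mathcal{P}(x)_j$, i.e.\ $w_k(\mathcal{P}(x)_k) <_{h^{\mathcal{P}}} w_j(\mathcal{P}(x)_j)$. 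Since $<_{h^{\mathcal{P}}}$ and $<_h$ agree on the underlying operations (step~1), this is literally $w_k(x_k) <_h w_j(x_j)$, so $x_k \versionOrder_h x_j$, as required.

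The main point to be careful about is the equivalence between $<_h$ and $<_{h^{\mathcal{P}}}$: one must check that the partition renaming neither forgets orderings (trivial, since no operations are removed) nor silently demands new ones (this is exactly what ``proper'' buys for cross-object writes that collapse onto the same partition-object). Once that correspondence is nailed down, the rest is essentially a direct transport of the consistent-snapshot condition along the renaming.
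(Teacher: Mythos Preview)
Your proposal is correct and follows essentially the same approach as the paper: both arguments rest on the two observations that (i) the dependency relation lifts from $h$ to $h^{\mathcal{P}}$ and (ii) the version order on $x$ in $h$ corresponds to the version order on $\mathcal{P}(x)$ in $h^{\mathcal{P}}$, and then apply \CONS\ in $h^{\mathcal{P}}$ to conclude. The paper phrases the final step as a contrapositive (a violation in $h$ yields a violation in $h^{\mathcal{P}}$), while you phrase it directly; you are also slightly more explicit than the paper in checking that the proper-partition hypothesis makes $h^{\mathcal{P}}$ a well-formed history and that $<_h$ and $<_{h^{\mathcal{P}}}$ coincide.
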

\begin{proof}

  First of all we observe that for any two transactions $T_i$ and $T_j$:
  \begin{itemize}
  \item[-] If $T_i \depend T_j$ holds in $h$ then $T_i \depend T_j$ holds in $h^{\mathcal{P}}$.\\
    \emph{Proof.}
    If $T_i \MustHave T_j$ holds in $h$, then $r_i(x_j)$ is in $h$.
    Thus $r_i(\mathcal{P}(x_j))$ is in $h^{\mathcal{P}}$.
    It follows that $T_i \MustHave T_j$ holds in  $h^{\mathcal{P}}$.
    If $T_i \depend T_j$ in $h$
    then there exist a set of transactions 
    $\{T_1, \ldots, T_m\}$ such that:
    $T_i \MustHave T_1 \ldots \MustHave T_m \MustHave T_j$ hold in $h$.
    From the result above, 
    we deduce that $T_i \MustHave T_1 \ldots \MustHave T_m \MustHave T_j$ hold in  $h^{\mathcal{P}}$.
    Hence, $T_i \depend T_j$ holds in $h^{\mathcal{P}}$.
    \qed
  \item[-] If $x_i \versionOrder x_j$ holds in $h$ then $\mathcal{P}(x_i) \versionOrder \mathcal{P}(x_j)$
    holds in $h^{\mathcal{P}(x)}$. \\
    \emph{Proof.}
    If $x_i \versionOrder x_j$ holds in $h$ 
    then 
    $\mathcal{P}(x_i) \versionOrder \mathcal{P}(x_j)$ holds in $h$.
    \qed
  \end{itemize}
  
  For the sake of contradiction, 
  assume that  $h^{\mathcal{P}}$ is in \CONS while $h$ is not in \CONS.
  It follows that there exist
  a transaction $T_i$, 
  some  object $x$ and a transaction $T_k$ on which $T_i$ depends
  such that in $h$,
  $T_i$ reads version $x_j$, $T_k$ writes version $x_k$, and  $x_j \versionOrder_h x_k$.
  From the two observations above, we obtain that
  $T_i \MustHave T_j$, $T_i \depend T_k$ and $\mathcal{P}(x_j) \versionOrder_h \mathcal{P}(x_k)$
  hold in $h^{\mathcal{P}}$.
  Hence, $h^{\mathcal{P}}$ is not consistent.
  Contradiction.

\end{proof}

Given two operations $o_i(x_j)$ and $o_k(y_l)$, 
let us introduce relation $o_i(x_j) \leq_h^{\mathcal{P}} o_k(y_l)$
when $o_i(x_j)=o_k(y_l)$, or $o_i(x_j) <_h o_k(y_l) \land \mathcal{P}(x)=\mathcal{P}(y)$ is true.
Based on \refprop{protocol:1}, we define below a function that approximates dependencies safely:

\begin{definition}[Partitioned Dependence Vector]
  A partitioned dependence vector is a function \pdv that
  maps every read (or write) operation $o(x)$ in a history $h$ 
  to a vector $\pdvOf{o(x)} \in \naturalSet^{\cardinalOf{\mathcal{P}}}$ such that:
  \begin{displaymath}
    \begin{array}{l}
      \pdvOf{r_i(x_0)} = 0^{\cardinalOf{\mathcal{P}}} \\
      \pdvOf{r_i(x_j)} = \mathit{max}~\{ \pdvOf{w_l(y_l)} ~|~  w_l(y_l) \leq_{h}^{\mathcal{P}} r_i(x_j) \\
      ~~~~~~~~~~~~~~~~~~~~~~~~~~~~~~~~~~~~~~~~~ \land \left( \forall k : x_j \versionOrder_h x_k \implies w_l(y_l) \leq_{h}^{\mathcal{P}} w_k(x_k) \right) \} \\
      % choose 
      \pdvOf{w_i(x_i)} = \mathit{max}~\{ \pdvOf{r_i(y_j)} ~|~ y_j \in \readSetOf{T_i} \}~\union \\
      ~~~~~~~~~~~~~~~~~~~~~~~~ \{ \pdvOf{w_k(z_k)} : w_k(z_k) \leq_{h}^{\mathcal{P}} w_i(x_i)  \} \\
      ~~~~~~~~~~~~~~~~~ + ~ \Sigma_{X \in \mathcal{P}(\writeSetOf{T_i})} ~1_X
    \end{array}
  \end{displaymath}
\end{definition}

The first two rules of function \pdv are identical to 
the ones that would give us function \dv on history $h^{\mathcal{P}}$.
The second part of the third rule serializes objects in the same partition 

We now prove that partitioned dependence vectors properly capture consistent snapshots.
Consider the following definition of \isCompatible{T_i,x_j,y_l} for a proper partition $\mathcal{P}$:
\begin{itemize}
\item[\textbf{Case $\mathcal{P}(x) \neq \mathcal{P}(y)$.}]
  This case is identical to the definition we gave for function \dv.
  In other words, both $\pdvOf{r_i(x_j)}[\mathcal{P}(x)] \geq \pdvOf{r_i(y_l)}[\mathcal{P}(x)]$
 and $\pdvOf{r_i(y_l)}[\mathcal{P}(y)] \geq \pdvOf{r_i(x_j)}[\mathcal{P}(y)]$ must hold.
\item[\textbf{Case $\mathcal{P}(x) = \mathcal{P}(y)$.}]
  This case deals with the fact that inside a partition writes are serialized.
  We have
  \begin{inparaenum}[(i)]
  \item if $\pdvOf{r_i(x_j)}[\mathcal{P}(y)] > \pdvOf{r_i(y_l)}[\mathcal{P}(y)]$ holds 
    then $y_l=\mathit{max}~\{ y_k ~|~ w_k(y_k) \leq_h^{\mathcal{P}} w_j(x_j) \}$, or symmetrically
  \item if $\pdvOf{r_i(y_l)}[\mathcal{P}(x)] > \pdvOf{r_i(x_j)}[\mathcal{P}(x)]$ holds
    then $x_j=\mathit{max}~\{ x_k ~|~ w_k(x_k) \leq_h^{\mathcal{P}} w_l(y_l) \}$, or otherwise
  \item the predicate equals \true.
  \end{inparaenum}
\end{itemize}

We prove next that the ``if'' part of \reftheo{protocol:1} holds
for the above definition of compatibility:

\begin{proposition}
  \labprop{protocol:2} 
%   Consider a  history $h$ in \NMSI, and a transaction $T_i$ in $h$.
%   Transaction $T_i$ sees a consistent snapshot during $h$ if an only if,
%   for every pair of versions $x_l$ and $y_j$ read by $T_i$,
%   $\dvOf{r_i(x_l)}[x] \geq \dvOf{r_i(y_j)}[x]$ holds.
  Consider a  history $h$ in \NMSI and a transaction $T_i$ in $h$.
  If every pair of versions $x_j$ and $y_l$ read by $T_i$ is compatible,
  then transaction $T_i$ sees a consistent snapshot in $h$
\end{proposition}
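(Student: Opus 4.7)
I plan a direct proof by contradiction, modelled on the ``$\Leftarrow$'' direction of \reftheo{protocol:1} but adapted to handle the two subcases arising from partitioning. I will suppose that every pair of versions read by $T_i$ is compatible, yet for contradiction that $T_i$ observes an inconsistent snapshot in $h$. By the definition of consistent snapshot there must exist an object $x$, a transaction $T_k$ on which $T_i$ depends, and versions $x_j, x_k$ with $r_i(x_j), w_k(x_k) \in h$ and $x_j \versionOrder_h x_k$. Using \WCF (from $h \in \NMSI$) together with $x_j \versionOrder_h x_k$, one argues as in \reflem{reconstruct:3} that $T_k \depend T_j$. Unfolding the first link of $T_i \depend T_k$, I pick an object $y$ and a transaction $T_{l_1}$ with $r_i(y_{l_1}) \in h$ and $T_{l_1} \depend T_k$ (possibly $T_{l_1} = T_k$).

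The heart of the argument is two bounds on the $\mathcal{P}(x)$-component of the partitioned dependence vectors. The first bound, $\pdvOf{r_i(y_{l_1})}[\mathcal{P}(x)] \geq \pdvOf{w_k(x_k)}[\mathcal{P}(x)]$, should follow by induction along the dependency chain from $T_{l_1}$ to $T_k$: at each direct link $T \MustHave T'$, the definition of $\pdv$ on a read together with \ACA ensure that any write of $T'$ lying in partition $\mathcal{P}(x)$ contributes to the max defining the read's vector (its real-time precedence and the partition-successor constraint are satisfied by \ACA and the version order), so the $\mathcal{P}(x)$-component is transported all the way down. The second bound, $\pdvOf{r_i(x_j)}[\mathcal{P}(x)] < \pdvOf{w_k(x_k)}[\mathcal{P}(x)]$, comes from the constraint clause in the definition of $\pdv$ on reads: because $x_j \versionOrder_h x_k$, every contributor $w_l(z_l)$ to $\pdvOf{r_i(x_j)}$ must satisfy $w_l(z_l) \leq_h^{\mathcal{P}} w_k(x_k)$, so it precedes $w_k(x_k)$ in the (total, properness-guaranteed) order on partition $\mathcal{P}(x)$ and therefore has strictly smaller $\mathcal{P}(x)$-component.

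These bounds produce a contradiction with compatibility after a case split. If $\mathcal{P}(x) \neq \mathcal{P}(y)$, the disjoint-partition clause of the compatibility predicate requires $\pdvOf{r_i(x_j)}[\mathcal{P}(x)] \geq \pdvOf{r_i(y_{l_1})}[\mathcal{P}(x)]$, which immediately contradicts the two bounds. If $\mathcal{P}(x) = \mathcal{P}(y)$, the strict inequality $\pdvOf{r_i(y_{l_1})}[\mathcal{P}(x)] > \pdvOf{r_i(x_j)}[\mathcal{P}(x)]$ activates clause (ii) of the same-partition compatibility definition, forcing $x_j = \max\{x_m : w_m(x_m) \leq_h^{\mathcal{P}} w_{l_1}(y_{l_1})\}$. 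Yet $T_{l_1} \depend T_k$ together with \ACA places $w_k(x_k) <_h w_{l_1}(y_{l_1})$; since $\mathcal{P}(x) = \mathcal{P}(y)$, this gives $w_k(x_k) \leq_h^{\mathcal{P}} w_{l_1}(y_{l_1})$, so $x_k$ is a candidate in the max, and $x_j \versionOrder_h x_k$ contradicts $x_j$ being maximal.

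The main obstacle will be the partitioned analog of \reflem{dv:1}, i.e., the first bound above. Unlike the flat case, where each write of a transaction contributes a separate dimension and the strict growth of $\dv$ along $\depend$ is immediate, in the partitioned case $+1_X$ is counted once per partition and the dependence-vector components of distinct transactions are coupled through the ``prior same-partition writes'' clause of the definition of $\pdv$. I will therefore have to lean on properness of $\mathcal{P}$ (so that partition components faithfully reflect a total serialization order) and on \ACA (to slide a committed transaction's writes before a dependent transaction's reads) in order to push the $\mathcal{P}(x)$-component forward along each link of the dependency chain.
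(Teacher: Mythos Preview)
Your plan is sound, but it takes a different route from the paper. The paper's proof is a two-line reduction: first, it observes that in the partition-level history $h^{\mathcal{P}}$ the function \pdv\ plays exactly the role of the flat \dv\ on partition-objects, so the $(\Leftarrow)$ argument of \reftheo{protocol:1} applies verbatim and shows that $T_i$ sees a consistent snapshot in $h^{\mathcal{P}}$; second, \refprop{protocol:1} lifts consistency from $h^{\mathcal{P}}$ back to $h$. This modular route cleanly separates the compatibility reasoning (done once, in the flat setting) from the partition-collapsing (done once, in \refprop{protocol:1}).

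Your direct argument in $h$ effectively rediscovers these two pieces as the two branches of your case split: the $\mathcal{P}(x)\neq\mathcal{P}(y)$ branch replays the flat argument at partition granularity, and the $\mathcal{P}(x)=\mathcal{P}(y)$ branch, via clause~(ii) of the same-partition compatibility predicate, is precisely the content of \refprop{protocol:1}. The plan is workable, and your identification of the main obstacle (a partitioned analogue of \reflem{dv:1}) is accurate. But note that the write rule for \pdv\ is partition-dependent (the set $\{\pdvOf{w_k(z_k)}: w_k(z_k)\leq_h^{\mathcal{P}} w_i(x_i)\}$ varies with the partition of $x$), so distinct writes of the same transaction need not carry the same \pdv; pushing the $\mathcal{P}(x)$-component along a dependency chain therefore requires some care at each link. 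The paper sidesteps all of this by working in $h^{\mathcal{P}}$, where objects and partitions coincide and \reflem{dv:1} applies unchanged. Your approach buys a self-contained proof that does not appeal to \refprop{protocol:1}; the paper's buys brevity and reuse.
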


\begin{proof}
  Using a reasoning identical to the one we depicted in the proof of \reftheo{protocol:1}, 
  we can prove that $h^{\mathcal{P}}$ belongs to \CONS.
  Then, from \refprop{protocol:1}, we know that if $h^{\mathcal{P}}$ belongs to \CONS, then $h$ belong to \CONS.
\end{proof}

As discussed in \cite{psutra-hotcdp12}, we notice here the existence of 
a trade-off between the size of the vectors and the freshness of the snapshots. 
For instance, 
if $x$ and $y$ belong to the same partition and transaction $T_i$ reads a version $x_j$, 
$T_i$ cannot read a version $y_l$ that committed after a version $x_k$ posterior to $x_j$.

\subsection{Transaction Lifetime in \jessy}
\labsection{protocol:lifetime}

\jessy is a distributed system of processes which communicate by message passing.
When a client (not modeled) executes a transaction $T_i$ with \jessy, 
$T_i$ is handled by a coordinator.
The coordinator of a transaction can be any process in the system.
A transaction $T_i$ can be in one of the following four states at some process:
\begin{compactitem}

\item \transExecuting:
   Each non-termination operation $o_i(x)$ in $T_i$ is executed
   optimistically (i.e., without synchronization with other replicas) at
   the transaction coordinator \dbCoordinatorOf{T_i}.
   If $o_i(x)$ is a read, \dbCoordinatorOf{T_i} returns the corresponding value,
   fetched either from the local replica or a remote one.
   If $o_i(x)$ is a write, \dbCoordinatorOf{T_i} stores the corresponding update
   value in a local buffer, enabling \emph{(i)} subsequent reads to observe the
   modification, and \emph{(ii)} a subsequent commit to send the write-set to
   remote replicas.

\item \transSubmitted:
  Once all the read and write operations of $T_i$ have executed, 
  $T_i$ terminates, and the coordinator submits it to the termination protocol.
  The protocol applies a certification test on $T_i$ to enforce \NMSI. 
  This test ensures that if two concurrent conflicting update transactions
  terminate, one of them aborts.
 
\item \transCommitted/\transAborted:
  When $T_i$ enters the \transCommitted state at $r \in \replicaSetOf{T_i}$,
  its updates (if any) are applied to the local data store.
  If $T_i$ aborts, $T_i$ enters the \transAborted state.
  
\end{compactitem}

\subsection{Execution Protocol} 
\labsection{protocol:execution}

\begin{algorithm}[t]
  \footnotesize
  \caption{Execution Protocol of \jessy}
  \labalg{execution}

    \begin{algorithmic}[1]
    
    \StartVariables
    \VarCustom{\dbDatabase, \dbSubmitted, \dbCommitted, \dbAborted}
    \EndVariables

    \StartAction{$\dbRemoteRead{x,T_i}$} 
    \Precondition{$\receivedFrom{\flagReadRequest,T_i,x}{q}$} \labline{em:1}
    \Precondition{$\exists (x,v,j) \in \dbDatabase : \forall y_l \in \readSetOf{T_i} : \isCompatible{T_i,x_j,y_l}$} \labline{em:2}
    \Effect{$\send{\flagReadReply,T_i,x,v}{q}$} \labline{em:5}
    \EndAction

    \StartAction{$\dbExecute{\flagWrite,x,v,T_i}$} 
    \Effect{$\updateSetOf{T_i} \assign \updateSetOf{T_i} \union \{(x,v,i)\}$} \labline{em:6}
    \EndAction

    \StartAction{$\dbExecute{\flagRead,x,T_i}$}
    \Effect{\textbf{if} $\exists (x,v,i) \in \updateSetOf{T_i}$ \textbf{then} \textbf{return} $v$} \labline{em:7}
    \Effect{\textbf{else}} \labline{em:8}
    \Effect{\hspace{1em}$\send{\flagReadRequest,T_i,x}{\replicaSetOf{x}}$} \labline{em:9}
    \Effect{\hspace{1em}\textbf{wait until} $\receivedAny{\flagReadReply,T_i,x,v}$} \labline{em:10}
    \Effect{\hspace{1em}\textbf{return} $v$} \labline{em:11}
    \EndAction

    \StartAction{$\dbExecute{\flagTerminate,T_i}$}
    \Effect{$\dbSubmitted \assign \dbSubmitted \union \{T_i\}$} \labline{em:12}
    \Effect{\textbf{wait until} $T_i \in \dbDecided$} \labline{em:13}
    \Effect{\textbf{if} $T_i \in \dbCommitted$ \textbf{then} \textbf{return} $\flagCommit$} \labline{em:14}
    \Effect{\textbf{return} $\flagAbort$} \labline{em:16}
    \EndAction

  \end{algorithmic}

\end{algorithm}

\refalg{execution} describes the execution protocol in pseudocode.
Logically, it can be divided into two parts:
action \dbRemoteRead{}, executed at some process, 
reads an object replicated at that process in a consistent snapshot; and
the coordinator $\coordOf{T_i}$ performs actions
\dbExecute{} to execute $T_i$ and to buffer the updates in \updateSetOf{T_i}.

The variables of the execution protocol are:
\dbDatabase, the local data store;
\dbSubmitted contains locally-submitted transactions; and
\dbCommitted (respectively \dbAborted) stores committed (respectively aborted) transactions.
We use the shorthand \dbDecided for $\dbCommitted \union \dbAborted$.

Upon a read request for $x$, \dbCoordinatorOf{T_i} checks against \updateSetOf{T_i} if
$x$ has been previously updated by the same transaction; if so, it
returns the corresponding value (\refline{em:7}).
Otherwise, \dbCoordinatorOf{T_i} sends an (asynchronous) read request
to the processes that replicate $x$ (\reflines{em:9}{em:11}).
When a process receives a read request for object $x$ that it
replicates, it returns a version of $x$ which complies with  \reftheo{protocol:1}
(\reflines{em:1}{em:5}).

Upon a write request of $T_i$, the process buffers the update value in \updateSetOf{T_i} (\refline{em:6}).
During commitment, the updates of $T_i$ will be sent to all
replicas holding an object that is modified by $T_i$ .

When transaction $T_i$ terminates, it is submitted to the termination protocol
(\refline{em:12}). 
The execution protocol then waits until $T_i$ either commits or
aborts, and returns the outcome.

\subsection{Termination Protocol}
\labsection{protocol:termination} 

\refalg{termination} depicts the termination protocol of \jessy.
It accesses the same four variables \dbDatabase, \dbSubmitted and \dbCommitted, 
along with a FIFO queue named \dbCertifyQueue.

In order to satisfy \GPR, the termination protocol uses a genuine atomic
multicast primitive \cite{GUERRAOUI2001}.
In our model, this requires that either 
\begin{inparaenum}[(i)]
\item we form non-intersecting groups of replicas, and an eventual leader oracle is available in each group, or
\item that a system-wide \emph{reliable} failure detector is available.
\end{inparaenum}
The latter setting allows \jessy to tolerate a disaster \cite{nicolasPHD}.

To terminate an update transaction $T_i$,
\dbCoordinatorOf{T_i} atomic-multicasts it to every process that holds an object written by $T_i$.
Every such process $p$ certifies $T_i$ by calling function \dbCertify{T_i}
(\refline{tp:7}).
This function returns \true at process $p$, 
iff for every transaction $T_j$ committed prior to $T_i$ at $p$,
if $T_j$ write-conflicts with $T_i$, then $T_i$ depends on $T_j$.
Formally:
\begin{displaymath}
  \dbCertify{T_i} \equaldef \forall T_j \in \dbCommitted : \writeSetOf{T_i} \inter \writeSetOf{T_j} \neq \emptySet \implies T_i \depend T_j 
\end{displaymath}

Under partial replication, a process $p$ might store only a subset of the
objects written by $T_i$, in which case $p$ does not have enough information to decide on the outcome of $T_i$.
Therefore, we introduce a voting phase where replicas of the objects written by
$T_i$ send the result of their certification test in a \flagVote message
to every process in $\wreplicaSetOf{T_i} \union \{\coordOf{T_i}\}$ (\reflines{tp:8a}{tp:8b}).

A process can safely decide on the outcome of $T_i$ when it has received votes from a \emph{voting quorum} for $T_i$.
A voting quorum $Q$ for $T_i$ is a
set of replicas such that for every object $x \in \dbCertifiedSetOf{T_i}$,
the set $Q$ contains at least one of the processes replicating $x$. 
Formally, a set of processes is a voting quorum for $T_i$ iff it belongs to \votingQuorum{T_i}, defined as follows:
\begin{displaymath}
  \votingQuorum{T_i} \equaldef  \{ Q \subseteq \Pi ~\st~ \forall x \in \dbCertifiedSetOf{T_i} : \exists j \in Q \inter \replicaSetOf{x} \}
\end{displaymath}

A process $p$ makes use of the following (three-values) predicate $\dbVoteOutcome{T_i}$
to determine whether some transaction $T_i$ commits, or not:
\vspace{-0.3em}
\begin{eqnarray*}
  \dbVoteOutcome{T_i} \equaldef \\
  &\begin{array}{r@{\hspace{0.4em}}l@{\hspace{-1em}}l}
    \multicolumn{3}{l}{\textbf{if}~\dbCertifiedSetOf{T_i}=\emptySet} \\
    \multicolumn{3}{l}{~~\textbf{then}~\true} \\
    \textbf{else} & \textbf{if}   & \forall Q \in \votingQuorum{T_i}, \exists q \in Q, \\
                   &               & ~~ \neg\receivedFrom{\flagVote,T,-}{q} \\ 
                  &
                  \multicolumn{2}{l}{~~\textbf{then}~\dbUnknown}
                  \\
    \textbf{else} & \textbf{if}   & \exists Q \in \votingQuorum{T_i}, \forall q \in Q, \\
                  &               & ~~ \receivedFrom{\flagVote,T,\true}{q} \\
                  & \multicolumn{2}{l}{~~\textbf{then}~\true} \\
    \textbf{else} & \false
   \end{array}
\end{eqnarray*}

To commit transaction $T_i$, process $p$ first applies $T_i$'s updates to its local data store,
then $p$ adds $T_i$ to variable \dbCommitted (\reflines{tp:9}{tp:12}).
If instead $T_i$ aborts, $p$ adds $T_i$ to \dbAborted (\reflines{tp:13}{tp:14}).

\begin{algorithm}[t]

  \footnotesize
  \caption{Termination Protocol of \jessy}
  \labalg{termination}

  \begin{algorithmic}[1]
    
    \StartVariables
    \VarCustom{\dbDatabase, \dbSubmitted, \dbCommitted, \dbAborted, \dbCertifyQueue}
    \EndVariables
   
    \StartAction{$\dbSubmit{T_i}$}
    \Precondition{$T_i \in \dbSubmitted$} \labline{tp:0}
    \Precondition{$\writeSetOf{T_i} \neq \emptySet$} \labline{tp:1}
    \Effect{$\amcast{T_i}{\wreplicaSetOf{T_i}}$} \labline{tp:2}
    \EndAction    

    \StartAction{$\dbDeliver{T_i}$}
    \Precondition{$T_i=\amdeliver$} \labline{tp:3}
    \Effect{$\dbCertifyQueue \assign \dbCertifyQueue \seqAppend \langle T_i \rangle$} \labline{tp:4}
    \EndAction

    \StartAction{$\dbVote{T_i}$}
    \Precondition{$T_i \in \dbCertifyQueue \setminus \dbDecided$} \labline{tp:5}
    \Precondition{$\forall T_j \in \dbCertifyQueue,~T_j <_{\dbCertifyQueue} T_i \implies T_j \in \dbDecided$} \labline{tp:6}
    \Effect{$v \assign \dbCertify{T_i}$} \labline{tp:7}
    \Effect{$\send{\flagVote,T_i,v}{\wreplicaSetOf{T_i}}$} \labline{tp:8a}
    \Effect{\hspace{10em}$\union~\{\coordOf{T_i}\}$} \labline{tp:8b}
    \EndAction

    \StartAction{$\dbCommit{T_i}$}
    \Precondition{$\dbVoteOutcome{T_i}$} \labline{tp:9}
    \Effect{\textbf{foreach} $(x,v,i)$ \textbf{in} $\updateSetOf{T_i}$
    \textbf{do}} \labline{tp:10}
    \Effect{\hspace{1em}\textbf{if} $x \in \dbDatabase$ \textbf{then}
                                       $\dbDatabase \assign \dbDatabase \union
                                       \{(x,v,i)\}$} \labline{tp:11}
    \Effect{$\dbCommitted \assign \dbCommitted \union \{T_i\}$} \labline{tp:12}
    \EndAction

    \StartAction{$\dbAbort{T_i}$}
    \Precondition{$\neg \dbVoteOutcome{T_i}$} \labline{tp:13}
    \Effect{$\dbAborted \assign \dbAborted \union \{T_i\}$} \labline{tp:14}
    \EndAction

  \end{algorithmic}
\end{algorithm}

\subsection{Correctness of \jessy}
\labappendix{jessy}

We now sketch a correctness proof of \jessy:
\refprop{cj:1} establishes that \jessy generates histories in \NMSI.
\refprop{cj:2} shows that read-only transactions are wait-free.
Propositions~\ref{prop:cj:3} and \ref{prop:cj:4}, respectively, 
prove that \jessy satisfies obstruction-freedom for updates and non-triviality for \NMSI.

\subsubsection{Safety}
\labappendix{jessy:safety}

\begin{proposition}
  \labprop{cj:0}
  If a transaction $T_i$ commits (respectively aborts) at some process in $\wreplicaSetOf{T_i} \union \coordOf{T_i}$, 
  it commits (resp. aborts) at every correct process in $\wreplicaSetOf{T_i} \union \coordOf{T_i}$.
\end{proposition}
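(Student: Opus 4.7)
The plan is to argue that, thanks to the total order provided by atomic multicast, the commit/abort decision for $T_i$ is in fact a deterministic function of the delivery order, and therefore necessarily the same at every correct process in $\wreplicaSetOf{T_i} \union \coordOf{T_i}$. Concretely, I would first prove by induction on the atomic-multicast delivery order a per-object determinism claim: for each $x \in \dbCertifiedSetOf{T_i}$, every replica of $x$ that votes on $T_i$ computes the same Boolean value $v(x) = \dbCertify{T_i}$. Since $\dbCertify{T_i}$ inspects only transactions in $\dbCommitted$ that write-conflict with $T_i$, the only source of nondeterminism would be a disagreement on $\dbCommitted$ at voting time. The FIFO discipline of $\dbCertifyQueue$ together with the total order of atomic multicast guarantees that all replicas of $x$ dequeue preceding conflicting transactions in the same order; invoking the proposition itself inductively for every $T_j$ preceding $T_i$ in $\dbCertifyQueue$ at $x$, these replicas agree on the outcome of each such $T_j$, hence on $\dbCommitted$ and hence on their vote. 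In the crash-fault model, the vote of a faulty replica, if emitted before it crashes, is computed by the same function, so $v(x)$ is unambiguous.

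Second, I would exploit the structure of voting quorums. By construction every $Q \in \votingQuorum{T_i}$ intersects $\replicaSetOf{x}$ for every $x \in \dbCertifiedSetOf{T_i}$. Because votes are uniform per object, $\dbVoteOutcome{T_i}$ evaluates to $\true$ iff $v(x) = \true$ for every such $x$, and to $\false$ whenever some $v(x) = \false$. This characterization does not mention the deciding process, so if $T_i$ commits at any process $p$ then all $v(x)$ must be $\true$, and symmetrically for the abort case.

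Finally, a liveness argument is needed to show that every correct process in $\wreplicaSetOf{T_i} \union \coordOf{T_i}$ actually reaches this common decision. Each correct replica of $x$ sends its vote to the whole set $\wreplicaSetOf{T_i} \union \{\coordOf{T_i}\}$, so by quasi-reliability of links every correct process in that set eventually collects at least one vote per object, assembles a voting quorum, and decides identically to $p$. The main obstacle is the inductive determinism step: one must verify that the induction is well founded on the $\dbCertifyQueue$ ordering and that the conflicting prefix relevant to $T_i$ has indeed been synchronized at all replicas of each object before any of them votes on $T_i$. Once this is settled, both the commit and abort directions drop out symmetrically from the per-object characterization of $v(x)$.
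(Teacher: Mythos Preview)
The paper's own proof of this proposition is only a one-line sketch that cites the same ingredients you invoke (the ordering guarantees of atomic multicast, the FIFO discipline of $\dbCertifyQueue$, the precondition that all earlier queued transactions be decided before voting, and the definitions of $\dbVote{}$ and $\dbVoteOutcome{}$). Your plan is therefore an elaborated version of the intended argument, and the inductive structure you propose is the right one.

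There is, however, a genuine gap in your per-object determinism claim. You assert that for each object $x$, every replica of $x$ computes the \emph{same} value $v(x)=\dbCertify{T_i}$. This is false in general: two replicas $p,q\in\replicaSetOf{x}$ may replicate different \emph{additional} objects of $\writeSetOf{T_i}$, and a conflict on one of those objects can make $\dbCertify{T_i}$ false at $q$ while it is true at $p$. Concretely, if $q$ also holds some $y\in\writeSetOf{T_i}$ and a committed $T_j$ with $\writeSetOf{T_j}=\{y\}$ is independent of $T_i$, then $q$ votes false and $p$ may vote true. Individual votes are thus not uniform across the replicas of a fixed object, and your induction as stated does not yield the claim you build on.

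The repair is to choose the invariant at the right granularity. What \emph{is} uniform across $\replicaSetOf{x}$ is the object-restricted predicate
\[
v(x)\;\equiv\;\forall T_j\in\dbCommitted:\; x\in\writeSetOf{T_j}\;\Rightarrow\;T_i\depend T_j,
\]
evaluated at vote time; your induction (atomic-multicast order plus FIFO dequeue plus the inductive hypothesis on each earlier $T_j$) does establish this, since all replicas of $x$ deliver and decide the same prefix of $x$-writers before voting on $T_i$. The full vote cast by a process $p$ is then $\bigwedge_{x:\,p\in\replicaSetOf{x}} v(x)$, and because every voting quorum intersects $\replicaSetOf{x}$ for every $x\in\writeSetOf{T_i}$, one checks that $\dbVoteOutcome{T_i}=\bigwedge_{x} v(x)$, which is indeed process-independent. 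With this correction the remainder of your plan, including the liveness step, goes through and matches what the paper's sketch leaves implicit.
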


\begin{proof}
  This proposition follows from
  the properties of atomic multicast,
  the fact that the queue \dbCertifyQueue is FIFO,
  the preconditions at \reflines{tp:5}{tp:6} in \refalg{termination},
  and the definitions of \dbVote{} and \dbVoteOutcome{}.
\end{proof}

\begin{proposition}
  \labprop{cj:1}
  Every history admissible by \jessy belongs to \NMSI.
\end{proposition}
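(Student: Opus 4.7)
The plan is to unpack the definition $\NMSI = \ACA \cap \CONS \cap \WCF$ and show each conjunct separately by inspecting the code of \refalg{execution} and \refalg{termination}. The three arguments have quite different flavours: \ACA is an immediate invariant of the data store, \WCF is an ordering property obtained from the termination protocol, and \CONS is the one that genuinely uses the dependence-vector machinery from \reftheo{protocol:1}.

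For \ACA, I would exhibit the invariant that every tuple $(x,v,i)$ in \dbDatabase at any process $p$ satisfies $T_i \in \dbCommitted$ at $p$: entries are only inserted at \refline{tp:11}, inside the \dbCommit{} action whose precondition forces $T_i$ to be committed. A remote read at \refline{em:5} returns a version drawn from \dbDatabase, and a buffered self-read at \refline{em:7} returns a value from $\updateSetOf{T_i}$, whose commit is a prerequisite for any operation $r_j(x_i)$ ever appearing in the history. Hence every read in an admissible history is preceded by the commit of its source transaction.

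For \WCF, the argument uses genuine atomic multicast. Suppose two committed, write-conflicting transactions $T_i$ and $T_j$ exist; both are a-multicast (\refline{tp:2}) to every replica of $\writeSetOf{T_i} \cap \writeSetOf{T_j}$, which is non-empty, so some correct process $p$ delivers both in a total order, say $T_i$ before $T_j$. By the FIFO discipline of \dbCertifyQueue together with the precondition at \refline{tp:6}, $p$ certifies $T_i$ strictly before $T_j$. Using \refprop{cj:0} one then shows that by the time $p$ votes on $T_j$, either $T_i \in \dbCommitted$ at $p$ or $T_i \in \dbAborted$ at $p$; in the latter case $p$'s positive vote on $T_i$ was impossible, so the committed case applies. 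The certification predicate $\dbCertify{T_j}$ then forces $T_j \depend T_i$, ruling out independence. Propagating this through the voting quorum (using \dbVoteOutcome and \refprop{cj:0}) yields the same conclusion at every process that commits $T_j$.

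For \CONS, I would invoke \reftheo{protocol:1}: at the moment of each remote read, \refline{em:2} hands back only a version $x_j$ which is pairwise compatible (in the sense of dependence vectors) with every $y_l$ already in $\readSetOf{T_i}$, so at commit time every pair of versions read by $T_i$ is compatible and the theorem delivers a consistent snapshot. The obstacle is that \reftheo{protocol:1} (and its partitioned counterpart \refprop{protocol:2}) is stated for histories already in \NMSI, so invoking it here requires a small trick to avoid circularity. The cleanest fix is to run the three claims together by induction on the length of the admissible execution: each new operation preserves \ACA and \WCF by the local arguments above, and then the ``$\Leftarrow$'' direction of \reftheo{protocol:1}—whose only use of \NMSI is to derive $T_k \depend T_j$ from two conflicting writers, which is precisely \WCF just established for the prefix—applies to show the extended history still lies in \CONS. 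Closing this induction is the main technical step; the rest is straightforward protocol bookkeeping.
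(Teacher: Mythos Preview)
Your proposal is correct and follows the same decomposition as the paper: \ACA from the data-store invariant at \refline{em:2}, \CONS via \reftheo{protocol:1} (resp.\ \refprop{protocol:2}), and \WCF from the atomic-multicast ordering combined with the certification test and \refprop{cj:0}. You are in fact more careful than the paper on one point: the paper invokes \reftheo{protocol:1} flatly, whereas you notice that its hypothesis already assumes $h\in\NMSI$ and correctly propose to break the circularity by induction on the execution prefix.
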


\begin{IEEEproof}  
  We first observe that transactions in \jessy always read committed versions of the objects (\refline{em:2} in \refalg{execution}).
  Moreover, we know by \reftheo{protocol:1} that reads are consistent when \jessy uses dependence vectors,
  and that this property also holds in case \jessy employs partitioned dependence vectors (\refprop{protocol:2}).
  It thus remains to show that histories generated by \jessy are write-conflict free (\WCF).

  To prove that \WCF holds, we consider two independent write-conflicting transactions $T_i$ and $T_j$,
  and we assume for the sake of contradiction that they both commit.
  We note $p_i$ (resp. $p_j$) the coordinator of $T_i$ (resp. $T_j$).
  Since $T_i$ and $T_j$ write-conflict, 
  there exists some object $x$ in $\writeSetOf{T_i} \inter \writeSetOf{T_j}$.
  One can show that the following claim holds:
  \begin{itemize}
  \item[(C1)]
    For any two replicas $p$ and $q$ of $x$, 
    denoting $\dbCommitted_p$ (resp. $\dbCommitted_q$) the set $\{ T_j \in \dbCommitted: x \in \writeSetOf{T_j} \}$,
    at the time $p$ (resp. $q$) decides $T_i$,
    it is true that $\dbCommitted_p$ equals $\dbCommitted_q$.
  \end{itemize}

  According to \refline{tp:9} of \refalg{termination} and the definition of function \dbVoteOutcome{},
  $p_i$ (respectively $p_j$) received a positive \flagVote message 
  from some process $q_i$ (resp. $q_j$) replicating $x$.
  Observe that $T_i$ (resp. $T_j$) is in variable \dbCertifyQueue at process $q_i$ (resp. $q_j$)
  before this process sends its \flagVote message.
  It follows from claim C1 that either (1) at the time $q_i$ sends its \flagVote message, $T_j <_{\dbCertifyQueue} T_i$ holds,
  or (2) at the time $q_j$ sends its \flagVote message, $T_i <_{\dbCertifyQueue} T_j$ holds.
  Assume that case (1) holds (the reasoning for case (2) is symmetrical).
  From the precondition at \refline{tp:6} in \refalg{termination},
  we know that process $q_i$ must wait that $T_j$ is decided before casting a vote for $T_i$.
  From \refprop{cj:0}, we deduce that $T_j$ is committed at process $q_i$.
  Hence, \dbCertify{T_i} returns \false at process $q_i$; a contradiction.
\end{IEEEproof}

\subsubsection{Progress}
\labappendix{jessy:progress}

\begin{lemma}
  \lablem{cj:2}
  For every transaction $T_i$,
  if \coordOf{T_i} is correct,
  eventually $T_i$ is submitted to the termination protocol at \coordOf{T_i}.
\end{lemma}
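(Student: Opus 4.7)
The plan is to show that every operation of $T_i$ at \coordOf{T_i} completes in finite time, from which the lemma follows immediately: $T_i$ is (by definition in \refsection{model:base}) a finite sequence of read/write operations followed by a terminating operation, and the \flagTerminate action of \refalg{execution} adds $T_i$ to \dbSubmitted at \refline{em:12}. So it suffices to prove, by induction on the position of an operation in $T_i$, that each prefix of $T_i$ eventually finishes executing at \coordOf{T_i}.

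The inductive step is by case analysis on the $k$-th operation. A write terminates in a single local step at \refline{em:6}. A read $r_i(x)$ where $x$ has already been written by $T_i$ terminates immediately at \refline{em:7} by inspecting the local buffer $\updateSetOf{T_i}$. The only non-trivial case is a read on an object $x \notin \writeSetOf{T_i}$ that has not yet been read by $T_i$: here \coordOf{T_i} sends a \flagReadRequest to every process in $\replicaSetOf{x}$ at \refline{em:9} and waits at \refline{em:10}. Since \coordOf{T_i} is correct, the channel to any correct replica $q \in \replicaSetOf{x}$ is quasi-reliable, so $q$ eventually receives the request; if $q$ ever fires the $\dbRemoteRead{x,T_i}$ action, a \flagReadReply returns to \coordOf{T_i}, which unblocks.

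The main obstacle is therefore showing that the precondition at \refline{em:2} is eventually satisfied at some correct replica of $x$, i.e.\ that some locally-stored version $x_j$ is \isCompatible{T_i,x_j,y_l} with every $y_l$ already read by $T_i$. I would argue this in two steps. First, for the very first remote read of $T_i$, $\readSetOf{T_i}$ is empty, the compatibility constraint is vacuous, and the initial version $x_0 \in \dbDatabase$ (present at every replica at startup) trivially satisfies the predicate. Second, for subsequent remote reads, I would establish as an invariant of the termination protocol that replicas never discard versions needed to serve in-flight compatible snapshots; concretely, whenever $T_i$ has already read $\{y_l : y \in \readSetOf{T_i}\}$, the version of $x$ whose dependence vector agrees componentwise on the $x$- and $y$-coordinates with those partial reads exists somewhere in $\replicaSetOf{x}$ — in the worst case it is $x_0$ together with the fact that the $y_l$ already read by $T_i$ were themselves served compatibly, which propagates backwards.

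This invariant is the subtle point: it must be maintained across the \dbCommit action at \reflines{tp:10}{tp:12} of \refalg{termination}, which is the only place \dbDatabase is mutated. Once the invariant is in hand, the induction closes and the lemma follows. The rest of the proof — finiteness of $T_i$, the immediate local steps, and the quasi-reliable delivery of \flagReadReply — is routine.
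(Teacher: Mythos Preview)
Your overall skeleton matches the paper's: both proofs reduce the lemma to showing that the blocking wait at \refline{em:10} eventually unblocks, after noting that writes and buffered reads are local. The divergence is in how you handle that one non-trivial case, and there your argument has a real gap.

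You propose an invariant that ``replicas never discard versions'' and suggest that ``in the worst case it is $x_0$''. Neither works. First, \refalg{termination} never removes anything from \dbDatabase (\refline{tp:11} is a union), so non-discarding is trivially true and buys nothing; the question is not whether a compatible version is \emph{retained} but whether one has \emph{arrived yet} at a correct replica. Second, $x_0$ is not a valid fallback: if $T_i$ has already read some $y_j$ with $\dvOf{r_i(y_j)}[x] > 0$, then $\dvOf{r_i(x_0)}[x] = 0 < \dvOf{r_i(y_j)}[x]$ and $x_0$ fails the compatibility test at \refline{em:2}. Your ``propagates backwards'' remark gestures at the right intuition but does not supply the missing step.

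The paper's proof fills this gap constructively using the dependence-vector structure. From $\dvOf{r_i(y_j)}[x] > 0$ it extracts a specific committed version $x_k$ on which $T_i$ transitively depends, with $\dvOf{w_k(x_k)}[x] = \dvOf{r_i(y_j)}[x]$. Since $T_k$ committed somewhere, \refprop{cj:0} guarantees $T_k$ eventually commits at every correct replica of $x$, so $x_k$ eventually appears in \dbDatabase there. Finally, \CONS and \WCF on the already-produced prefix rule out $\dvOf{w_k(x_k)}[y] > \dvOf{r_i(y_j)}[y]$, so $x_k$ is compatible. The crux you are missing is precisely this liveness step via \refprop{cj:0}: you need to name the version that works and argue it is eventually delivered, not merely that nothing is thrown away.
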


\begin{IEEEproof}
  Transaction $T_i$ executes all its write operations locally at its coordinator.
  Now, upon executing a read request on some object $x$, if $x$ was modified previously
  by $T_i$, the corresponding value is returned.
  Otherwise, \coordOf{T_i} sends a read request to \replicaSetOf{x}. 
  To prove this lemma,  
  we have to show that eventually one of the replica replies to the coordinator.

  According to our model, there exists one correct process replica of $x$.
  In what follows, we name it $p$.
  Observe that since links are quasi-reliable, $p$ eventually receives the read request from \coordOf{T_i}.
  Upon receiving this request,
  process $p$ tries returning a version of $x$ compatible with all versions previously read by $T_i$.

  Consider that \jessy uses dependence vectors (the reasoning for partitioned dependence vectors is similar),
  and assume, by contradiction, that $p$ never finds such a compatible version.
  From the definition of \isCompatible{T_i,x_j,y_l}, 
  this means that the following predicate is always true:
  \begin{displaymath}
    \begin{array}{l@{~}l}
      \forall (x,v,l) \in \dbDatabase: & \dvOf{w_l(x_l)}[x] < \dvOf{r_i(y_j)}[x] \\
                                       & \lor~ \dvOf{w_l(x_l)}[y] > \dvOf{r_i(y_j)}[y]
    \end{array}
  \end{displaymath}
  This means that there exists a version $x_k$ upon which transaction $T_i$ depends,
  and such that $\dvOf{w_k(x_k)}[x]=\dvOf{r_i(y_j)}[x]$.
  Transaction $T_k$ committed at some site.
  As a consequence, \refprop{cj:0} tells us that eventually $T_k$ commits at process $p$.
  We conclude by observing that since \jessy satisfies both \CONS and \WCF,
  $\dvOf{w_k(x_k)}[y] > \dvOf{r_i(y_j)}[y]$ cannot hold.

\end{IEEEproof}

\begin{lemma}
  \lablem{cj:1}
  For every transaction $T_i$,
  if $T_i$ is submitted at \coordOf{T_i} and \coordOf{T_i} is correct,
  every correct process in $\wreplicaSetOf{T_i} \union \coordOf{T_i}$ eventually decides $T_i$.
\end{lemma}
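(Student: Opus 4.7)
The plan is to trace the life cycle of $T_i$ through the termination protocol, showing each stage makes progress. First, I would invoke the fact that $\coordOf{T_i}$ is correct and executes \dbSubmit{T_i} (by the preceding \reflem{cj:2} applied to the submission step), so it atomic-multicasts $T_i$ to $\wreplicaSetOf{T_i}$. By the agreement and termination properties of genuine atomic multicast, every correct process in $\wreplicaSetOf{T_i}$ eventually delivers $T_i$ and appends it to its local \dbCertifyQueue. The coordinator itself, being in $\wreplicaSetOf{T_i} \cup \{\coordOf{T_i}\}$ only when it replicates some written object, still receives votes directly; if it does not replicate any written object, it simply waits for \flagVote messages from \wreplicaSetOf{T_i}.

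Next, I would argue by well-founded induction on the position in \dbCertifyQueue at each correct process $p \in \wreplicaSetOf{T_i}$ that every delivered transaction eventually has \dbVote{} executed at $p$. The precondition at \refline{tp:6} requires all predecessors in \dbCertifyQueue to be already decided. Taking the minimum undecided transaction $T_k$ in $p$'s queue (if any exists) yields a transaction whose predecessors are all decided; hence \dbVote{$T_k$} becomes enabled at $p$, and $p$ sends its \flagVote message. Applying this argument repeatedly (together with the assumption that $T_i$ was placed in \dbCertifyQueue) gives that eventually $p$ votes on $T_i$ and sends \flagVote to every process in $\wreplicaSetOf{T_i} \cup \{\coordOf{T_i}\}$.

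I would then close the proof via a quorum argument. Under the standard assumption that each object's replica set contains at least one correct process, the set of correct processes in $\wreplicaSetOf{T_i}$ contains a voting quorum $Q \in \votingQuorum{T_i}$: for each object $x \in \dbCertifiedSetOf{T_i} \subseteq \writeSetOf{T_i}$, some correct $q \in \replicaSetOf{x}$ exists, and by the previous step $q$ eventually sends a \flagVote for $T_i$. Since links are quasi-reliable, these \flagVote messages are eventually received at every correct process in $\wreplicaSetOf{T_i} \cup \{\coordOf{T_i}\}$, so \dbVoteOutcome{T_i} at each such process becomes either \true or \false (leaving the \dbUnknown branch), which enables either \dbCommit{T_i} or \dbAbort{T_i}. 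Hence every correct process in $\wreplicaSetOf{T_i} \cup \{\coordOf{T_i}\}$ eventually decides $T_i$.

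The main obstacle is the inductive step of the second paragraph: one has to rule out that $p$ waits forever for some predecessor $T_k$ of $T_i$ in \dbCertifyQueue. This reduces to showing that every transaction ever enqueued at a correct process is itself eventually decided. The argument is not circular because \dbCertifyQueue is FIFO and the atomic multicast delivery order is the same at all correct destinations, so the minimum undecided transaction at $p$ has a well-defined rank, and progress on that transaction is unblocked by the same quorum reasoning applied to its own (smaller) write set. A standard well-founded induction on delivery rank formalizes this and removes the apparent circularity.
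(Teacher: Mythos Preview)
Your overall skeleton (delivery via atomic multicast, then voting, then quorum-based decision) matches the paper's, but the inductive step that clears the queue has a real gap.

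In your second paragraph you take the minimum undecided transaction $T_k$ in $p$'s queue, observe that $p$ can vote on it, and then write ``applying this argument repeatedly \ldots gives that eventually $p$ votes on $T_i$.'' This does not follow: after $p$ sends its \flagVote for $T_k$, the precondition at \refline{tp:6} still blocks $p$ from voting on the next entry until $T_k$ is \emph{decided} at $p$, and deciding $T_k$ requires votes from other processes $q \in \wreplicaSetOf{T_k}$. Those processes may have transactions ahead of $T_k$ in \emph{their} queues that are not in $p$'s queue at all, so the local rank at $p$ says nothing about when they will vote. Your well-founded induction on ``position in \dbCertifyQueue at $p$'' therefore does not terminate by itself; the cross-process wait-for relation is what must be shown to be well-founded.

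Your last paragraph tries to close this by asserting that ``the atomic multicast delivery order is the same at all correct destinations,'' but in genuine atomic multicast different destinations deliver different subsets of messages; what holds is the \emph{uniform prefix order} (common messages are delivered in the same relative order) together with \emph{acyclicity} of the union of local orders. These two properties are precisely what the paper invokes to rule out the distributed deadlock sketched above: with them, the global ``delivered-before'' relation is a well-founded partial order, and one can induct on \emph{that} (not on a single process's queue position) to show every correct replica eventually votes on every delivered transaction. Without naming and using acyclicity, the scenario where $p$ waits for $q$'s vote on $T_k$ while $q$ waits for $p$'s vote on some $T_l$ ahead of $T_k$ at $q$ is not excluded by your argument. (The remark that $T_k$ has a ``smaller'' write set is also unfounded; predecessors in the queue need not have smaller write sets.)
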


\begin{IEEEproof}
  According to \reflem{cj:2} and the properties of atomic multicast, transaction $T_i$ is delivered 
  at every correct process in $\wreplicaSetOf{T_i} \union \coordOf{T_i}$.
  It is then enqueued in variable \dbCertifyQueue (\reflines{tp:3}{tp:4} in \refalg{termination}).

  Because \dbCertifyQueue is FIFO, 
  processes dequeue transactions in the order they deliver them (\reflines{tp:5}{tp:6}).
  The uniform prefix order and acyclicity properties of genuine atomic multicast ensure
  that no two processes in the system wait for a vote from each other.
  It follows that every correct replica in \wreplicaSetOf{T_i} eventually dequeues $T_i$, and sends
  the outcome of function \dbCertify{T_i} to $\wreplicaSetOf{T_i} \union \coordOf{T_i}$ (\reflines{tp:7}{tp:8b}).

  Since there exists at least one correct replica for each object modified by $T_i$
  eventually every correct process in $\wreplicaSetOf{T_i} \union \coordOf{T_i}$ collects enough votes to decide upon the outcome of $T_i$.
\end{IEEEproof}

\begin{proposition}
  \labprop{cj:2}
  \jessy satisfies \WFQ.
\end{proposition}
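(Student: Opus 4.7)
The plan is to combine Lemma~\ref{lem:cj:2}, which guarantees that the coordinator is never stuck during the execution phase, with a direct inspection of the termination protocol specialized to the case $\writeSetOf{T_i}=\emptySet$.

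First, I would apply \reflem{cj:2} to obtain that, if \coordOf{T_i} is correct, then each read of $T_i$ eventually returns a value to the coordinator (\reflines{em:9}{em:11} of \refalg{execution}) and hence the \flagTerminate action (\refline{em:12}) is eventually reached, at which point $T_i \in \dbSubmitted$ holds at \coordOf{T_i}. This takes care of the execution phase. Note that this relies on the fact that a contacted replica can always serve a version of $x$ compatible with the previously read versions, which \reflem{cj:2} establishes using \refprop{cj:0} and \reftheo{protocol:1}.

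Next, I would analyze the termination phase. Since $T_i$ is read-only, $\writeSetOf{T_i}=\emptySet$, so the precondition of \dbSubmit{T_i} at \refline{tp:1} in \refalg{termination} is violated: no atomic multicast ever occurs for $T_i$, and no process other than \coordOf{T_i} ever touches $T_i$. From $\writeSetOf{T_i}=\emptySet$ we also get $\dbCertifiedSetOf{T_i}=\emptySet$, so the first branch of the definition of $\dbVoteOutcome{T_i}$ returns \true immediately, without waiting for any \flagVote message. Consequently, the precondition of \dbCommit{T_i} (\refline{tp:9}) is satisfied as soon as $T_i$ is submitted, and $T_i$ enters \dbCommitted at \coordOf{T_i} with no inter-process communication. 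The \textbf{wait until} at \refline{em:13} therefore returns \flagCommit, which proves the ``eventually commits'' part.

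For the ``never waits for another transaction'' part, the two only potentially blocking points in \refalg{execution} are the wait for a read reply at \refline{em:10} and the wait for a decision at \refline{em:13}; the first is discharged by \reflem{cj:2}, the second by the argument above. The subtlety I expect to have to address is that, in the proof of \reflem{cj:2}, a replica may need to wait for a dependency $T_k$ to become visible locally before it can answer a read. This is a wait on update propagation, not on any conflicting transaction sharing an object with $T_i$, and $T_k$ has necessarily committed at some site before $T_i$ began to depend on it; I would therefore argue that it does not count as ``waiting for another transaction'' in the sense of \WFQ. This is the only delicate point; the rest of the proof is a direct reading of the preconditions in \refalg{termination}.
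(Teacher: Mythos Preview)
Your proof is correct and follows essentially the same approach as the paper: invoke \reflem{cj:2} to reach the submitted state, then observe that for a read-only transaction $\dbCertifiedSetOf{T_i}=\emptySet$ makes $\dbVoteOutcome{T_i}$ evaluate to \true, so the \dbCommit{} precondition at \refline{tp:9} holds and $T_i$ commits. The paper's proof is terser and does not spell out the ``never waits'' clause or the non-firing of \dbSubmit{}, but the argument is the same.
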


\begin{IEEEproof}
  Consider some read-only transaction $T_i$ and assume that \coordOf{T_i} is correct,
  \reflem{cj:2} tells us that $T_i$ is eventually submitted at \coordOf{T_i}.
  According to the definition of predicate \dbVoteOutcomeFunction, \dbVoteOutcome{T_i} always equals true. 
  Hence, the precondition at \refline{tp:9} in \refalg{termination} is always true,
  whereas precondition at \refline{tp:13} is always false.
  It follows that $T_i$  eventually commits.
\end{IEEEproof}

We now prove that \jessy satisfies obstruction-freedom for updates (\OFU) and non-triviality for \NMSI.
These results are both stated in the case where \jessy employs non-partitioned dependence vectors.
The question of ensuring any of these properties with a smaller space-complexity than $O(m)$ 
where $m$ is the number of objects in the system remains open.

\begin{proposition}
  \labprop{cj:3}
  Jessy ensures non-trivial \NMSI.
\end{proposition}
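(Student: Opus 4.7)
The plan is to complement the safety result of \refprop{cj:1} by exhibiting a favorable execution in which an update transaction commits, so that \NMSI is not satisfied vacuously by a scheduler that simply aborts every update. The non-triviality statement I intend to establish is the following: any update transaction $T_i$ that runs without a concurrent write-conflicting transaction, and whose coordinator is correct, eventually commits. Together with \refprop{cj:1}, this yields that \jessy admits histories in \NMSI that contain committed update transactions.

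First, I would invoke \reflem{cj:2} to conclude that $T_i$ is eventually submitted to the termination protocol at \coordOf{T_i}. Applying \reflem{cj:1} then yields that every correct process in $\wreplicaSetOf{T_i} \union \{\coordOf{T_i}\}$ eventually decides $T_i$. What remains is to show that this decision is to commit rather than abort.

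For the certification step, I would argue that for every replica $p$ executing \dbCertify{T_i}, and for every $T_j \in \dbCommitted$ at $p$ with $\writeSetOf{T_i} \inter \writeSetOf{T_j} \neq \emptySet$, the dependency $T_i \depend T_j$ holds. Because no concurrent transaction write-conflicts with $T_i$, any such $T_j$ must have committed before any of $T_i$'s reads were served; combined with the model requirement that every write be preceded by a read on the same object, and with the compatibility predicate enforced at \refline{em:2} of \refalg{execution}, this forces the version of the shared object read by $T_i$ to be one that already reflects $T_j$'s update, yielding $T_i \depend T_j$. Consequently, \dbCertify{T_i} returns \true at every replica in \wreplicaSetOf{T_i}, the predicate \dbVoteOutcome{T_i} evaluates to \true, and $T_i$ commits.

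The main obstacle will be rigorously justifying the dependency claim: showing that, in the absence of conflict, $T_i \depend T_j$ is actually forced by the read-side compatibility test rather than only by real-time order. This requires tracking how dependence vectors propagate along the reads of $T_i$ and reusing the freshness argument from the proof of \reflem{cj:2} to rule out the case where a strictly stale version of the conflicting object is returned. Since the proposition is stated for the non-partitioned variant of dependence vectors, I can avoid the extra bookkeeping needed to handle intra-partition serialization.
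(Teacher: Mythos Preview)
You have misread what the paper means by ``non-trivial \NMSI.'' In this paper the phrase does \emph{not} refer to a commit guarantee for updates; that is obstruction-freedom (\OFU), stated and proved separately as \refprop{cj:4}. ``Non-triviality'' here is the positive-freshness acceptance property of \reflem{imp:0}, specialised to \NMSI: whenever appending $r_i(x_j)$ to the current history would still be in \NMSI (with $x_j$ the latest committed version of $x$ before $T_i$ started), there is an execution in which \jessy actually serves $x_j$ to $T_i$. The paper's proof accordingly works on the read side: it assumes $h.r_i(x_j)\in\NMSI$, invokes the ``only if'' direction of \reftheo{protocol:1} to conclude that $x_j$ is compatible with every $y_k$ already read by $T_i$, and then observes that the precondition at \refline{em:2} of \refalg{execution} is satisfied, so the replica returns $x_j$.

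Your proposal instead re-derives \OFU, and in doing so relies on precisely the freshness property you were supposed to establish. The step ``the compatibility predicate enforced at \refline{em:2} \ldots forces the version of the shared object read by $T_i$ to be one that already reflects $T_j$'s update'' is unjustified: \refline{em:2} only checks mutual compatibility among the versions read so far, and a set of arbitrarily stale versions can be pairwise compatible. Neither \reflem{cj:2} nor the compatibility test gives you freshness; that is exactly what \refprop{cj:3} supplies and what \refprop{cj:4} then consumes. So the plan both targets the wrong statement and, for the statement it does target, is circular.
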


\begin{IEEEproof}
  Consider a replica $p$ of $x$ storing version $x_j$,
  and assume an extension of the execution in which 
  $p$ answers first to a remote read request from \coordOf{T_i} over $x$.
  Since history $h.r_i(x_j)$ is in $\NMSI$, it belongs to $\CONS$. 
  Because \jessy use dependence vectors, \reftheo{protocol:1} tells us that:
  $\dvOf{r_i(x_j)}[x] \geq \dvOf{r_i(y_k)}[x]$
  and $\dvOf{r_i(x_j)}[y] \leq \dvOf{r_i(y_k)}[y]$ hold.
  According to the preconditions of operation $\dbRemoteRead{x,T_i}$ and modification M1, 
  process $p$ returns version $x_j$ to \coordOf{T_i}.
\end{IEEEproof}

\begin{proposition}
  \labprop{cj:4}
  \jessy satisfies \OFU.
\end{proposition}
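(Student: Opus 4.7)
The plan is to split \OFU into its termination and non-triviality clauses and dispatch each using the machinery of Lemmas~\ref{lem:cj:2} and~\ref{lem:cj:1} together with \refprop{cj:0}. The termination clause is a direct chaining: by \reflem{cj:2}, if $\coordOf{T_i}$ is correct then $T_i$ is eventually submitted; by \reflem{cj:1}, once submitted every correct process in $\wreplicaSetOf{T_i} \union \coordOf{T_i}$, and in particular $\coordOf{T_i}$ itself, eventually decides $T_i$. The wait at \refline{em:13} of \refalg{execution} therefore completes, and $T_i$ returns either \flagCommit or \flagAbort.

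For the non-triviality clause I would proceed by contradiction. Suppose $T_i$ has no concurrent write-conflicting transaction, yet $T_i$ aborts. By \refprop{cj:0} the abort is uniform across correct replicas, so $\dbVoteOutcome{T_i}$ evaluates to \false; unfolding its definition pinpoints a replica $p \in \wreplicaSetOf{T_i}$ at which $\dbCertify{T_i} = \false$. Thus there exists $T_j \in \dbCommitted$ at $p$ at certification time, write-conflicting with $T_i$ and satisfying $T_i \not\depend T_j$. The target becomes showing that such a $T_j$ must be concurrent with $T_i$, which then contradicts the hypothesis.

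The route to the contradiction is to fix $x \in \writeSetOf{T_i} \inter \writeSetOf{T_j}$ and consider the read $r_i(x_k)$ that the model forces to precede $T_i$'s write on $x$. Since $T_i$ depends on $T_k$ through this read but not on $T_j$, the relation $T_k \depend T_j$ is impossible. As both $T_k$ and $T_j$ are committed and write $x$, \refprop{cj:1} (\jessy produces histories in \WCF) forces the converse $T_j \depend T_k$, from which \reflem{dv:1} and the \CONS property of \NMSI yield $x_k \versionOrder_h x_j$. Consequently, $x_j$ was not installed at the replica $q$ that served $r_i(x_k)$ at the moment of the read, while it is installed at $p$ by the time $T_i$ is certified. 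Tracing back through \refalg{termination}, this means that the submission, atomic-multicast and voting for $T_j$ were still in progress during $T_i$'s read of $x$, which is the operational meaning of $T_j$ being concurrent with $T_i$.

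The main obstacle I expect is precisely this last step. The paper's statement of \OFU uses ``concurrent'' informally, so part of the work is to pin down an operational definition tied to the step-level execution $\run$ rather than only to $\refMapOf{\run}$, and to argue that the in-flight status of $T_j$ during $T_i$'s reads is intrinsic rather than an artefact of asynchrony at one observer. I expect this will require an indistinguishability argument in the spirit of \refsection{imp}, and it is the only place where the proof requires more than the bookkeeping already performed in Propositions~\ref{prop:cj:0}--\ref{prop:cj:3}.
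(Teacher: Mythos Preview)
Your termination argument coincides with the paper's: chain \reflem{cj:2} into \reflem{cj:1} and conclude that $\coordOf{T_i}$ eventually decides $T_i$. For the commit clause you argue by contradiction, whereas the paper proceeds directly: it asserts that if $T_i$ has no concurrent write-conflicting transaction then every write-conflicting $T_j$ satisfies $T_i \depend T_j$, hence every replica's call to $\dbCertify{T_i}$ returns \true, and $T_i$ commits. The two routes are contrapositives of the same implication, so your high-level structure is sound.

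The substantive difference is granularity. You descend to an operational chase through version orders and replica-local state, and you anticipate needing an indistinguishability argument to pin down ``concurrent''. The paper dispatches the same step in one line, citing \reftheo{protocol:1}, \refline{tp:7} of \refalg{termination}, and the definition of \dbCertify{}; it leans on the modelling convention that every write is preceded by a read on the same object together with the freshness reasoning already used for \refprop{cj:3} (and explicitly restricts the claim to the non-partitioned dependence-vector variant). In effect the paper treats ``$T_j$ not concurrent with $T_i$'' as equivalent to ``$T_i \depend T_j$'' for a committed write-conflicting $T_j$, and does not descend to the step level at all. Your caution about the informality of ``concurrent'' is well placed --- the paper's own argument is terse here --- but the indistinguishability machinery you expect is heavier than anything the paper actually deploys.
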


\begin{IEEEproof}

  Consider an execution $\run$ of \jessy and note $h=\refMapOf{\run}$ the history produced by $\run$.
  Let $T_i$ be an update transaction not executed in $\run$.
  First of all, we observe that in any continuation of $\run$ during which $\coordOf{T_i}$ is correct,
  from \reflem{cj:1}, \coordOf{T_i} eventually decides transaction $T_i$.
  Then, assume that $T_i$ is not conflicting 
  in some continuation $h'=\refMapOf{\run  \sqsubseteq \run'}$ with any concurrent transaction in $h'$.
  This means that for every transaction $T_j$, if $T_j$ conflicts with $T_i$, 
  then $T_i$ depends upon $T_j$.
  Accordingly to \reftheo{protocol:1}, 
  the code at \refline{tp:7} in \refalg{termination}, 
  and the definition of function \dbCertify{}, 
  transaction $T_i$ commits in $h'$.
  
\end{IEEEproof}

% PIERRE: with PDV, NTSI does not hold.
%         in fact NTSI is useless.

\newpage 
\section{Related Work}
\labsection{relatedwork} 
 
\reftab{comparisonTable} compares different partial replication protocols, 
in terms of time and message complexity (from the coordinator's perspective), 
when executing a transaction with $r_r$ remote reads and $w_r$ remote writes.
A transaction can be of the following three types:
a read-only transaction,
a local update transaction (the coordinator replicates all the objects accessed by the transaction), 
or a global update transaction (some object is not available at the coordinator).

Several protocols solve particular instances of the partial replication problem.
Some assume that a correct replica holds all the data accessed by a transaction \cite{CPV05,Schiper2006} .
Others consider that data can be partitioned into conflict sets  \cite{icdcs02-kemme},
or that always aborting concurrent conflicting transactions \cite{Holliday2002} is reasonable.
Hereafter, we review in details algorithms that do not make such an assumption.

P-Store \cite{Schiper2010} is a genuine partial replication algorithm that ensures \SER by leveraging genuine atomic multicast.
Like in \jessy, read operations are performed optimistically at some replicas and update operations are applied at commit time.
However, unlike Jessy, P-Store certifies read-only transactions as well.

A few algorithms \cite{Serrano2007,Armendariz-Inigo2008} offer partial replication with \SI semantics.
At the start of a transaction $T_i$, 
the algorithm of \citet{Armendariz-Inigo2008} atomically broadcasts $T_i$ to all processes.
This message defines the consistent snapshot of $T_i$.
If $T_i$ is an update transaction,
$T_i$'s write set is atomic broadcast to all processes at commit time and each process independently certifies it.
The algorithm of \citet{Serrano2007} executes a dummy transaction after each commit.
As the commit of a transaction is known by all processes, a dummy transaction identifies a snapshot point.
This avoids the cost of the start message.
As a consequence of the impossibility result depicted in \refsection{imp}, 
none of these algorithms is genuine.

Walter is a transactional key-value store proposed by \citet{Sovran2011}
that supports Parallel Snapshot Isolation (\PSI). 
PSI is somewhat similar to \NMSI; in particular, \PSI snapshots are non-monotonic.
However, \PSI is stronger than \NMSI, as it enforces \SCONSa:
\NMSI allows reading versions of objects that have committed after the start of the
transaction, as long as it is consistent.
On the contrary in PSI, 
an operation has to read the most recent versions at the time the transaction starts.
Enforcing \SCONSa does not preclude any anomaly,
and it increases the probability that a write skew, or a conflict between concurrent writes occurs.
To ensure \PSI, Walter relies on a single master replication schema per object and 2PC.
After the transaction commits, it is propagated to all processes in the system in
the background before it becomes visible.

More recently, \citet{GMU} proposed GMU, an algorithm 
that supports an extended form of update serializability.
GMU relies on vector clocks to read consistent snapshots.
At commit time, both GMU and Walter use locks to commit transactions.
Because locks are not ordered before voting (contrary to P-Store and Jessy),
these algorithms are subjected to the occurrence of distributed deadlocks, and scalability problems leading to
poor performance for global update transactions \cite{Gray1996,Wiesmann2005}.

%% Along with \PSI, causal serializability (\CSER) \cite{Raynal97} are the most similar consistency criteria to \NMSI.
%% Causal serializability ensures the following properties:
%% \begin{compactitem}
%% \item Transactions from the same client need to be sequential.
%% \item Transactions need to respect read-from dependency.
%% \item Transactions updating the same objects are observed in the same order by all replicas of those objects. 
%% \end{compactitem}

%% The above properties are very similar to what \PSI and \NMSI ensure.
%% However, unlike \PSI and \NMSI, \CSER is not a multi-version consistency criterion, and does not require \ACA.
%% Hence some anomalies defined previously in section \refsection{wsi} (like non-monotnic snapshots) cannot be expressed properly with \CSER.
%% Consequently, the protocol ensuring \CSER is a lock based, therefore it suffers again from poor performance for global update transactions \cite{Gray1996,Wiesmann2005}.

\begin{table*}[t]
\footnotesize
\begin{center}
  \hspace*{-3em}
  \begin{tabular}{@{}c | c  c  c | c | c  | c |  c@{}}
    % \cline{6-8} 
    \multicolumn{5}{c}{} & \multicolumn{3}{c}{Time complexity}
    \\ \cline{6-8} 
    % \hline
    \multicolumn{1}{c|}{Algorithm} & \multicolumn{1}{p{8mm}}{\centering Cons.} & \multicolumn{1}{p{9mm}}{\centering Gen\-u\-ine?} & \multicolumn{1}{p{10mm}}{\centering Multi-Master?} & \multicolumn{1}{@{}p{19mm}@{}|}{\centering Message Complexity} & \multicolumn{1}{c}{Read-only} & \multicolumn{1}{c}{Global Update} & \multicolumn{1}{p{12mm}}{\centering Local Update}\\
    \hline
    
    \cline{7-8}
    P-Store \cite{Schiper2010} & \SER & yes & yes & $O(n^2)$ & $(r_r \times 2\Delta) + 4\Delta$ & $(r_r \times 2\Delta) + 5\Delta$ & $4\Delta$ \\

    GMU \cite{GMU} & \US & yes & yes & $O(n^2)$ & $r_r \times 2\Delta$ & $(r_r \times 2\Delta) + 2\Delta$ & $2\Delta$ \\

    % \hline
    SIPRe\cite{Armendariz-Inigo2008} & \SI & no  &yes & $O(N^2)$ & $(r_r \times 2\Delta) + 3\Delta$ & $ (r_r + w_r) \times 2\Delta + 6\Delta$& $6\Delta$ \\
    % \hline
    Serrano\cite{Serrano2007} & \SI & no & yes & $O(N^2)$ & $r_r \times 2\Delta$ & $ (r_r + w_r) \times 2\Delta + 3\Delta$ & $3\Delta$ \\
    % \hline
    Walter \cite{Sovran2011} & \PSI & no & no & $O(N)$ & $r_r \times 2\Delta$ &  $(r_r \times 2\Delta) + 2\Delta$ & $2\Delta\ |\ 0$ \\
    % \hline
    \jessy & \NMSI & yes & yes & $O(w_r{^2})$ & $r_r \times 2\Delta$ & $(r_r \times
    2\Delta) + 5\Delta$ & $4\Delta$ \\
    % \hline
  \end{tabular}

\end{center}
{\footnotesize\em
  Message complexity: number of messages sent on behalf of transaction.  
  Time complexity: delay for  executing a transaction.
  $N$: number of replicas;
  $n$: number of replicas involved in transaction;
  $\Delta$: message latency between replicas;
%   $k$: number of acceptors in Paxos Commit;
  $r_r$: number of remote reads;
  $w_r$: number of remote writes.
  The latency of atomic broadcast (resp. atomic multicast) is considered
  $3\Delta$ (resp  $4\Delta$) during solo step execution \cite{nicolasPHD}.
}

\caption{Comparison of partial replication protocols}
\labtab{comparisonTable}
\end{table*}

\newpage
\section{Conclusion}
\labsection{conclusion}

Partial replication and genuineness are two key factors of scalability in replicated systems.
This paper shows that ensuring snapshot isolation (\SI) in a genuine partial replication (\GPR) system is impossible.
To state this impossibility result, we introduce four properties whose conjunction is equivalent to \SI.
We show that two of them, namely snapshot monotonicity and strictly consistent snapshots cannot be ensured.
% This implies that (provided wait-free queries and obstruction-free updates)
% consistency criterion stronger than \SI, e.g., strict serializability or opacity \cite{Guerraoui:2008}, are not attainable neither..
% In the case of opacity, this answers a recent question raised by \citet{promanoWTTM}.

To side step the incompatibility of \SI with \GPR, we propose a novel consistency criterion named \NMSI.
\NMSI prunes most anomalies disallowed by \SI, while providing guarantees close to \SI:
transactions under \NMSI always observe consistent snapshots and two write-conflicting concurrent updates never both commit.

The last contribution of this paper is \jessy, a genuine partial replication protocol that supports \NMSI.
To read consistent partial snapshots of the system,
\jessy uses a novel variation of version vectors called dependence vectors.
An analytical comparison between \jessy and previous partial replication protocol
shows that \jessy contacts fewer replicas, and that, in addition, it may commit faster.

\section*{Acknowledgments}
We thank Sameh Elnikety and Vivien Qu\'{e}ma for insightful discussions and
feedbacks.

\newpage
 
\bibliographystyle{IEEEtranN} % required for compatibility with natbib
\bibliography{IEEEabrv,library,bib,my,nicolas,psutra}
  
\end{document}